\pdfoutput=1
\documentclass{article}

\usepackage[preprint]{neurips_2026}

\usepackage[utf8]{inputenc}
\usepackage[T1]{fontenc}
\usepackage{hyperref}
\usepackage{url}
\usepackage{booktabs}
\usepackage{amsfonts}
\usepackage{microtype}
\usepackage{xcolor}
\usepackage{amsmath}
\usepackage{amsthm}
\usepackage{amssymb}
\usepackage{algorithm}
\usepackage{algpseudocode}
\usepackage{float}
\usepackage{graphicx}
\usepackage{multirow}
\usepackage{bm}

\newtheorem{theorem}{Theorem}
\newtheorem{lemma}{Lemma}
\newtheorem{proposition}{Proposition}
\newtheorem{corollary}[theorem]{Corollary}

\newtheorem{condition}{Condition}
\newtheorem{remark}{Remark}

\newcommand{\one}[1]{\mathbf{1}\left\{ #1 \right\}}

\def\P{\mathbb{P}}
\newcommand{\E}{\mathbb{E}}
\newcommand{\R}{\mathbb{R}}
\newcommand{\var}{\mathrm{Var}}
\newcommand{\cov}{\mathrm{Cov}}
\newcommand{\supp}{\mathrm{supp}}

\title{High-dimensional Gaussian Graphical Model Testing\\ for Long-Memory Time Series}

\author{%
  Percy S.~Zhai \\
  University of Chicago \\
  \texttt{percy.zhai@chicagobooth.edu} \\
  \And
  Ping-Shou Zhong \\
  University of Illinois Chicago \\
  \texttt{pszhong@uic.edu} \\
  \AND
  Wei Biao Wu \\
  University of Chicago \\
  \texttt{wbwu@uchicago.edu} \\
}

\begin{document}

\maketitle

\begin{abstract}

Many real-world high-dimensional time series exhibit long-memory, but Gaussian graphical model testing in this regime remains understudied.
We develop a direct, data-adaptive test statistic for assessing conditional independence in the graph structure of stationary Gaussian time series.
We establish a finite-sample, Berry--Esseen type Gaussian approximation bound for the statistic, which applies to both short-memory and long-memory time series.
The testing procedure is fully data-adaptive using block bootstrap method, on which we provide a finite-sample validity result including in the ultra-high-dimensional scenario, and can be extended to comparing graphical structures in two-sample tests.
We also develop a consistency-empowered correction to the statistic and show that such tests attain asymptotic consistency in both size and power.
Our proposed method is applied to a real-world fMRI data to understand functional connectivities within brain in different periods.

\end{abstract}

\section{Introduction}

Testing on Gaussian graphical models for high-dimensional time series remains relatively understudied as apposed to its i.i.d.~counterparts.
Existing work has largely been confined to the short-memory regime, where the autocorrelation decays fast enough.
The main difficulty, though, is that real-world time series data usually exhibits long-range temporal dependent (i.e., long-memory).
This paper proposes a novel method for testing Gaussian graphical models of stationary high-dimensional time series that allows for long-range dependence.

Gaussian graphical models encode conditional independence through the zero pattern of the precision matrix \citep{lauritzen1996graphical}.
In the i.i.d.~setting, a large body of work has studied precision matrix estimation, including neighborhood selection \citep{meinshausen2006high}, penalized likelihood \citep{friedman2008sparse}, and constrained $\ell_1$ methods \citep{cai2011constrained}.
The corresponding inference literature is more recent.
One direction focuses on edgewise or low-dimensional inference for entries of $\Omega$, where multiple-testing procedures \citep{liu2013gaussian} or debiased precision estimators \citep{ren2015asymptotic, jankova2015confidence} are used to test individual conditional dependencies.
Another direction develops global or structural tests for precision matrices, including neighborhood tests \citep{verzelen2009tests, zhao2024high}, tests for special structures such as bandedness \citep{cheng2017bandedness}, and two-sample differential-network tests \citep{xia2015testing, zhao2022fudge}.
\citet{le2023universally} proposed direct plug-in and consistency-empowered tests for the adequacy of a prespecified Gaussian graph.
More recently, \citet{lin2025goodness} proposed exact finite-sample goodness-of-fit tests for high-dimensional Gaussian graphical models via exchangeable sampling.
These methods provide important tools for i.i.d.~Gaussian observations, but require independence or exchangeability.

For time series, graph-related work has mostly focused on estimation rather than graph-specification testing.
\citet{chen2013covariance} studied covariance and precision matrix estimation for high-dimensional time series using functional dependence measure.
\citet{qiu2016joint} considered joint estimation of multiple graphical models from high-dimensional time series with dependent observations.
\citet{shu2019estimation} developed covariance and precision estimation for processes with slowly decaying temporal dependence,
and \citet{baek2023local} studied high-dimensional long-run variance and long-run precision estimation under general dependence, including long-range dependence.
Apart from these valuable estimation methods, a data-driven goodness-of-fit test for long-memory high-dimensional time series is largely unexplored.

Meanwhile, a separate line of work develops high-dimensional inference for short-memory time series.
\citet{zhang2017gaussian} developed Gaussian approximation for high-dimensional stationary time series, and \citet{zhang2014bootstrapping} studied bootstrap inference for high-dimensional time series.
These results are designed for short-range dependence and do not easily extend into the long-memory regime.
Only recently did \citet{zhai2026simultaneous} establish Gaussian approximation and block bootstrap validity for sample covariance errors in high-dimensional long-memory time series.
The same work also developed precision matrix inference, but only in the low-dimensional regime where the sample covariance matrix is invertible.

This paper fills the gap between these two literatures.
We use a graph-specific statistic in the spirit of \citet{le2023universally} based on off-support local inverse, and demonstrate that the first-order behavior of the statistic is governed by the sample covariance error, which further enables us to import the machinery in \citet{zhai2026simultaneous}, without the need of estimating the full precision matrix.
Our Theorem \ref{thm:ga.direct} provides a finite-sample Gaussian approximation result on the proposed statistic.
We then propose a fully data-adaptive testing method based on block-bootstrap, and show its finite-sample validity in Theorem \ref{thm:block.bootstrap.direct}.
We further add a consistency-empowered correction for the case where the hypothesized graph is larger than the true graph.
The correction screens precision estimates near zero on the null support and artificially amplify them.
It preserves the null calibration of the direct test, while restoring power against fixed over-specified alternatives; see Theorem~\ref{thm:ce.direct}.

For application, we consider a two-sample extension.
Given two independent samples, we estimate a graph from the first group and use it as the null support for testing the second group.
Corollary \ref{cor:two.sample.direct.informal} shows that the Gaussian approximation and block bootstrap validity remains as long as the estimator is consistent.
Our proposed method is illustrated on the Sherlock fMRI dataset that compares the brain connectivity pattern of subjects when watching a movie and when verbally recalling the contents.

We would list the following three major contributions.
\begin{enumerate}
    \item We propose a novel approach for testing edges in Gaussian graphical models for stationary times series that allows {\bf long-memory}.
    We apply a local-inverse-based test statistic related to sample covariance, and provide a Berry-Esseen type bound of Gaussian approximation and a block-bootstrap validity result.
    These {\bf finite-sample bounds} demonstrate that our test applies to the {\bf ultra-high-dimensional} regime, and guarantees the test size.
    \item For a more rigorous test, we propose a {\bf consistency-empowered} corrected test statistic for time series, which is additionally capable of detecting edges in the hypothesized graph that are not significant in the true graph.
    We develop theoretical results that the test size is not affected, while the test power is guaranteed.
    \item We impose a {\bf two-sample extension} of the theory that tests whether the graphical structure is identical across the two samples.
    This setup is more convenient in application, and we demonstrate this two-sample methodology on the Sherlock fMRI dataset.
\end{enumerate}

The structure of the paper is as follows.
Section \ref{sec:methodology} introduces the problem setup, propose the statistic, describe the block bootstrap algorithm, and propose the consistency-empowered correction on the statistic.
Section \ref{sec:theory} provides theoretical guarantee of our proposed methods.
Quantitative study that illustrates the theory can be found in Section \ref{sec:simulation}.
The movie study on Sherlock fMRI dataset is given in Section \ref{sec:real.data}.

\section{Methodology}
\label{sec:methodology}

We observe a $p$-dimensional stationary Gaussian linear process $\{X_t\}_{t=1}^n$, where $X_t = \sum_{l=0}^\infty A_l \epsilon_{t-l}$ and $\epsilon_t \sim N(0, I_d)$.
The dimension $p$ is allowed to be much larger than $n$.
The temporal dependence is captured by the decay rate of the coefficient matrices, as specified in the following condition.
\begin{condition}\label{cond.A}
    There exists a constant $\beta>3/4$ and $C_0 > 0$ such that 
    \begin{equation}
    \label{eq:cond.A}
    \max_{1 \le j \le p} |(A_l)_{j \cdot}| \le C_0 (1 \vee l)^{-\beta},
    \qquad
    \beta > \frac{3}{4}.
    \end{equation}
\end{condition}
A larger $\beta$ corresponds to weaker temporal dependence of the process.
$\beta>1$ corresponds to short-range dependence (i.e., short memory).
The more challenging regime is long-range dependence (i.e., long memory) for $\beta\leq 1$, where the autocovariance matrices $\{\Gamma_k\}_{k=0}^\infty$ are no longer summable, thus the usual Central Limit Theorem for time series is not applicable.

Denote the covariance and precision matrices of the process by $\Sigma = \E[X_t X_t^\top]$ and $\Omega = \Sigma^{-1}$.
For Gaussian processes, the support of the precision matrix $E = \supp(\Omega)$ induces a graph.
We want to test whether it is identical to a hypothesized graph $E_0$.
That is,
\[
H_0 : E = E_0,
\qquad
H_1 : E \ne E_0.
\]

\subsection{Direct Statistic}\label{sec:direct.stat}
We build our test statistic based on the sample covariance matrix, $\hat\Sigma_n = n^{-1} \sum_{t=1}^n X_t X_t^\top$, following the construction as in \cite{le2023universally}.
For each $1 \le j \le p$, let $S_{0,j} = \{ i : i = j \text{ or } (i,j) \in E_0 \}$ be the column-wise support set, with degree $s_{0,j} = |S_{0,j}|$.
The sparsity level of the whole graph is measured by $s_0 = \max_{1 \le j \le p} s_{0,j}$.
We first assume that the graph is reasonably sparse, in that $s_0 < n$.
Let $B_{0,j} \in \R^{p \times s_{0,j}}$ be the corresponding selection matrix, such that $B_{0,j}^\top x$ extracts the coordinates of a vector $x \in \R^p$ indexed by $S_{0,j}$.

We write the $j$-th column of the precision matrix $\Omega_0$ as $\bm \omega_{0,j}$.
Under $H_0$, $\Sigma \bm \omega_{0,j} = e_j$ and $\bm \omega_{0,j}$ is zero outside $S_{0,j}$, we thus have
$\bm \omega_{0,j} = B_{0,j} (B_{0,j}^\top \Sigma B_{0,j})^{-1} B_{0,j}^\top e_j$.
Replacing $\Sigma$ by $\hat\Sigma_n$ gives a local-inverse estimator
\begin{equation}
\label{eq:omega.hat}
\hat{\bm \omega}_{0,j}
=
B_{0,j} (B_{0,j}^\top \hat\Sigma_n B_{0,j})^{-1} B_{0,j}^\top e_j.
\end{equation}
Throughout this paper, whenever $B_{0,j}^\top \hat\Sigma_n B_{0,j}$ is singular, its inverse is understood as the zero matrix.
Under the conditions stated later, this exceptional event has vanishing probability.
Our proposed direct statistic is based on the fact that $\Sigma \Omega_0 = I$ on $H_0$.
In particular, define the residual $\hat r_{ij} = e_i^\top \hat\Sigma_n \hat{\bm \omega}_{0,j} - \one{i=j}$.
Since the graph test only concerns entries outside the null support, we focus on the off-support index set $\mathcal J_0 = \{ (i,j) : i \notin S_{0,j} \}$.
The direct statistic is defined as the root-scale maximum of the residuals over $\mathcal J_0$.
Since $i \ne j$ on this set, it can be written as
\begin{equation}
\label{eq:T.hat.def}
\hat T_n
=
\sqrt{n} \max_{(i,j) \in \mathcal J_0} |e_i^\top \hat\Sigma_n \hat{\bm \omega}_{0,j}|.
\end{equation}
When the true underlying graph $E$ contains edges that are missing in the hypothesized graph $E_0$, we will obtain a larger $\hat T_n$, where the rejection region is situated.
There are two caveats though.
First, we need a data-adaptive method to determine the rejection region for $\hat T_n$ --- the block bootstrap method in Section \ref{sec:block.bootstrap} addresses this concern.
Second, while it is easier for $\hat T_n$ to reject $E_0$ when $E_0 \subsetneq E$, it is less capable of rejecting a hypothesized graph larger than the true graph --- we propose a consistency-empowered corrected statistic in Section \ref{sec:ce.direct} that solves this issue.

\subsection{Block Bootstrap}\label{sec:block.bootstrap}

Despite the Gaussian approximation results on $\hat T_n$, its covariance structure is usually unknown in practice.
We use the following block-bootstrap method is proposed to imitate the behavior of the Gaussian approximation in a data-adaptive manner.
For $(i,j) \in \mathcal J_0$, define the plug-in vector $\hat{\bm u}_{ij} = e_i - B_{0,j} (B_{0,j}^\top \hat\Sigma_n B_{0,j})^{-1} B_{0,j}^\top \hat\Sigma_n e_i$.
For each $1 \le t \le n$ and $(i,j) \in \mathcal J_0$, define the plug-in score
\begin{equation}
\label{eq:X.hat.def}
\hat{\mathcal X}_{t,ij}
=
\hat{\bm u}_{ij}^\top (X_t X_t^\top - \hat\Sigma_n) \hat{\bm \omega}_{0,j},
\end{equation}
and collect these coordinates into the vector $\hat{\mathcal X}_t = (\hat{\mathcal X}_{t,ij})_{(i,j) \in \mathcal J_0}$.
Because the map in \eqref{eq:X.hat.def} is linear in $X_t X_t^\top - \hat\Sigma_n$, we have $n^{-1} \sum_{t=1}^n \hat{\mathcal X}_t = 0$, hence no additional centering is needed.
Let $l = l_n$ be the block length.
For $r = l,\ldots,n$, define the overlapping block sums $\hat B_{r,l} = \sum_{t=r-l+1}^r \hat{\mathcal X}_t$.
The empirical distribution function of the block statistics is
\begin{equation}
\label{eq:F.hat.def}
\hat F_{n,l}(u)
=
\frac{1}{n-l+1} \sum_{r=l}^n \one{ l^{-1/2} |\hat B_{r,l}|_\infty \le u },
\qquad
u \in \R.
\end{equation}
Define the bootstrap critical value by
\begin{equation}
\label{eq:q.hat.def}
\hat q_{1-\alpha}
=
\inf\{ u \in \R : \hat F_{n,l}(u) \ge 1-\alpha \}.
\end{equation}
The direct test rejects $H_0$ whenever $\hat T_n > \hat q_{1-\alpha}$; see Algorithm~\ref{alg:block.bootstrap} in Appendix \ref{app:algo}.

\subsection{Consistency-empowered Correction}\label{sec:ce.direct}

While $\hat T_n$ is sensitive to missing edges in $E$ that are not included in $E_0$, it is not sensitive to false edges in $E_0$ that are not in $E$.
To recover power against included alternatives, $E\subsetneq E_0$, we propose a corrected statistic $\tilde T_n$ similar to the consistency-empowered mechanism in \citet{le2023universally}, but modified to adapt to the time series setup.

The detection of false edges in $E_0$ relies on the estimator in \eqref{eq:omega.hat}.
For $i \in S_{0,j}$, if the estimated precision entry $\hat \omega_{0,ij}$ is too close to zero, we add an artificial pump $C_n$ to it such that the test statistic is inflated.
Define $\hat\Delta_{ij} = C_n \one{|\hat\omega_{0,ij}| / \hat\tau_{ij} \le \delta_n}$,
where $C_n$ and $\delta_n$ are tuning parameters,
and set $\hat\Delta_{ij}=0$ for $i \notin S_{0,j}$.
Here, $\hat\tau_{ij}$ is the estimated long-run standard error of $\hat\omega_{0,ij}$ in the sampling space; see Appendix~\ref{app:ce.direct.details} for detailed buildup.
Let $\hat{\bm \Delta}_j = (\hat\Delta_{1j},\ldots,\hat\Delta_{pj})^\top$, and $\tilde{\bm \omega}_{0,j} = \hat{\bm \omega}_{0,j}+\hat{\bm \Delta}_j$.
The consistency-empowered statistic is defined as
\begin{equation}
\label{eq:ce.stat}
\tilde T_n
=
\sqrt{n}
\max_{1\le i,j\le p}
\left|
e_i^\top \hat\Sigma_n \tilde{\bm \omega}_{0,j}
-
\one{i=j}
\right|.
\end{equation}
We shall see that for carefully calibrated $C_n$ and $\delta_n$, $\tilde T_n$ does an equally good job in detecting missing edges as $\hat T_n$ while ensuring the power of the test.
The consistency-empowered test also relies on the same block-bootstrap critical value $\hat q_{1-\alpha}$ in \eqref{eq:q.hat.def}.
$H_0$ is rejected whenever $\tilde T_n > \hat q_{1-\alpha}$; see Algorithm \ref{alg:block.bootstrap.ce} in Appendix \ref{app:algo}.

\section{Theoretical Guarantee for the Proposed Statistics}\label{sec:theory}

\subsection{Gaussian Approximation}

For $(i,j) \in \mathcal J_0$, define $\bm u_{ij}
=
e_i - B_{0,j} (B_{0,j}^\top \Sigma B_{0,j})^{-1} B_{0,j}^\top \Sigma e_i$, and the linearized score process $\mathcal X_{t,ij} = \bm u_{ij}^\top (X_t X_t^\top - \Sigma) \bm \omega_{0,j}$.
Then $\mathcal X_t = (\mathcal X_{t,ij})_{(i,j) \in \mathcal J_0}$ is the oracle version of $\hat{\mathcal X}_t$.
The major theoretical finding of this work is that the distribution of the proposed statistic $\hat T_n$ is close to that of $|G|_\infty$, where $G$ is a centered Gaussian vector in $\R^{|\mathcal J_0|}$ with covariance matrix $\Xi = \sum_{k=-\infty}^{\infty} \cov(\mathcal X_0, \mathcal X_k)$.

Our proposed direct statistic $\hat T_n$ is based on local inverse of sample covariance matrix.
The following proposition bridges the gap between two inference tasks, one on the statistic $\hat T_n$, the other on sample covariance error $n^{1/2}|\hat\Sigma_n - \Sigma|_\infty$ by showing that $\hat T_n$ is dominated by a linear term with respect to the covariance error.
This result is a key prelude to the subsequent Gaussian approximation result on $\hat T_n$, essentially enabling high-dimensional Gaussian graphical model goodness-of-fit testing without the need of inference theory on the precision matrix.
To see this, define the associated linearized oracle statistic as $T_n^{\mathrm L} = \left| n^{-1/2} \sum_{t=1}^n \mathcal X_t \right|_\infty$.
The following technical conditions are needed.
\begin{condition}\label{cond:Sigma.Omega.1norm}
    There exist positive constants $C_\Sigma$ and $C_\Omega$ such that the true underlying covariance and precision matrices satisfy $|\Sigma|_1 \leq C_\Sigma$ and $|\Omega|_1 \leq C_\Omega$, respectively.
\end{condition}

\begin{condition}
\label{cond:local.inverse}
There exists a positive constant $C_A$ such that $\max_{1\le j\le p} |(B_{0,j}^\top\Sigma B_{0,j})^{-1}|_1 \le C_A$.
\end{condition}

\begin{proposition}
\label{prop:deterministic.reduction}
Assume Conditions \ref{cond:Sigma.Omega.1norm} and \ref{cond:local.inverse}.
On the event $2 C_A s_0 |\hat\Sigma_n - \Sigma|_\infty \le 1$, we have under $H_0$
\[
|\hat T_n - T_n^{\mathrm L}| \le C_1 \sqrt{n} s_0^2 |\hat\Sigma_n - \Sigma|_\infty^2,
\]
where $C_1 > 0$ depends only on $C_\Sigma$, $C_\Omega$, and $C_A$, and thus
\[
\hat T_n \le C_1 \sqrt{n} |\hat\Sigma_n - \Sigma|_\infty + C_1 \sqrt{n} s_0^2 |\hat\Sigma_n - \Sigma|_\infty^2.
\]

\end{proposition}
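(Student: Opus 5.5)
The plan is to write $\Delta=\hat\Sigma_n-\Sigma$ and, for a fixed $j$, abbreviate $B=B_{0,j}$, $s=s_{0,j}\le s_0$, $A=B^\top\Sigma B$, $D=B^\top\Delta B$ and $\hat A=A+D$. Then I expand $\hat{\bm\omega}_{0,j}=B\hat A^{-1}B^\top e_j$ around $\bm\omega_{0,j}=BA^{-1}B^\top e_j$ by a Neumann series. Throughout, $|\cdot|_1$ on matrices denotes the maximal absolute column sum. For symmetric matrices such as $A$, $\hat A$ and $\Sigma$ this equals the maximal absolute row sum.

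\textbf{Step 1: inverting $\hat A$.} Since $D$ is $s\times s$ with entries bounded by $|\Delta|_\infty$, we have $|D|_1\le s|\Delta|_\infty$. Hence $|A^{-1}D|_1\le C_A s_0|\Delta|_\infty\le 1/2$ on the stated event. So $\hat A=A(I+A^{-1}D)$ is invertible and the Neumann series converges, which also shows that the singular convention is never triggered there. This gives $|\hat A^{-1}|_1\le 2C_A$ and the exact identities
\[
\hat A^{-1}-A^{-1}=-\hat A^{-1}DA^{-1},\qquad
\hat A^{-1}=A^{-1}-A^{-1}DA^{-1}+R,\quad R=A^{-1}DA^{-1}D\hat A^{-1}.
\]

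\textbf{Step 2: extracting the linear term.} For $(i,j)\in\mathcal J_0$, write
\[
e_i^\top\hat\Sigma_n\hat{\bm\omega}_{0,j}=e_i^\top(\Sigma+\Delta)B\hat A^{-1}B^\top e_j .
\]
Under $H_0$ we have $\Sigma\bm\omega_{0,j}=e_j$ and $i\ne j$, so the zeroth-order term $e_i^\top\Sigma BA^{-1}B^\top e_j=e_i^\top\Sigma\bm\omega_{0,j}$ vanishes. The first-order terms are
\[
e_i^\top\Delta\bm\omega_{0,j}-e_i^\top\Sigma BA^{-1}B^\top\Delta BA^{-1}B^\top e_j=\bm u_{ij}^\top\Delta\bm\omega_{0,j}.
\]
This equality uses the symmetry of $A^{-1}$ and $\Sigma$ in the definition of $\bm u_{ij}$. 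Since $n^{-1}\sum_t\mathcal X_{t,ij}=\bm u_{ij}^\top\Delta\bm\omega_{0,j}$, this term is exactly the $(i,j)$ coordinate of $n^{-1/2}\cdot n^{-1/2}\sum_t\mathcal X_t$. The remainder consists of two pieces, $\rho_1=e_i^\top\Delta B(\hat A^{-1}-A^{-1})B^\top e_j$ and $\rho_2=e_i^\top\Sigma BRB^\top e_j$.

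\textbf{Step 3: bounding the remainders and concluding.} Each remainder is bounded by a Hölder inequality, an $\ell_\infty$ row factor times an $\ell_1$ column factor.
\begin{itemize}
\item For $\rho_1$: $|B^\top\Delta e_i|_\infty\le|\Delta|_\infty$ and $|(\hat A^{-1}-A^{-1})B^\top e_j|_1\le|\hat A^{-1}|_1|D|_1|A^{-1}|_1\le 2C_A^2s_0|\Delta|_\infty$. Hence $|\rho_1|\le 2C_A^2s_0|\Delta|_\infty^2$.
\item For $\rho_2$: $|B^\top\Sigma e_i|_\infty\le C_\Sigma$ and $|RB^\top e_j|_1\le|A^{-1}|_1^2|D|_1^2|\hat A^{-1}|_1\le 2C_A^3s_0^2|\Delta|_\infty^2$.
\end{itemize}
Both remainders are therefore $O(s_0^2|\Delta|_\infty^2)$ uniformly over $(i,j)\in\mathcal J_0$. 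Using $\bigl|\max|a|-\max|b|\bigr|\le\max|a-b|$ and multiplying by $\sqrt n$ gives the first claim, with $C_1$ depending on $C_A$ and $C_\Sigma$.

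For the second claim I bound $|\bm u_{ij}^\top\Delta\bm\omega_{0,j}|\le|\bm u_{ij}|_1|\bm\omega_{0,j}|_1|\Delta|_\infty$. Here $|\bm\omega_{0,j}|_1\le C_\Omega$ because, under $H_0$, $\bm\omega_{0,j}$ is the $j$-th column of $\Omega$. Also $|\bm u_{ij}|_1\le 1+|A^{-1}|_1|B^\top\Sigma e_i|_1\le 1+C_AC_\Sigma$. Hence $T_n^{\mathrm L}\le C\sqrt n|\Delta|_\infty$, and the triangle inequality yields the displayed bound on $\hat T_n$.

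The only real obstacle I expect is the norm bookkeeping. The key point is to pair the $\ell_\infty$ and $\ell_1$ factors so that the sparsity $s_0$ enters only through $|D|_1\le s_0|\Delta|_\infty$, never through the ambient dimension $p$. It also matters that $|e_i^\top\Sigma B|$ is controlled by Condition \ref{cond:Sigma.Omega.1norm} rather than by $s_0$.
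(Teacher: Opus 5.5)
Your proof is correct and follows the paper's argument essentially step for step. It uses the same Neumann-series bound $|\hat A_j^{-1}|_1\le 2C_A$, the same second-order inverse expansion, and the same two remainders: your $\rho_1$ and $\rho_2$ are exactly the paper's $e_i^\top\Delta(\hat{\bm\omega}_{0,j}-\bm\omega_{0,j})$ and $e_i^\top\Sigma r^{(\omega)}_{j,n}$. The only cosmetic difference is that you bound $\rho_1$ through the exact identity $\hat A^{-1}-A^{-1}=-\hat A^{-1}DA^{-1}$ rather than through $|\hat{\bm\omega}_{0,j}-\bm\omega_{0,j}|_1$, which keeps $C_\Omega$ out of the constant in the first claim.
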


The proof is deferred to Appendix~\ref{app:proof.det.reduction}.
This reduction bridges the gap between our proposed statistic and covariance error.
The validity of the goodness-of-fit test is thus associated to the Gaussian approximation of $n^{1/2}|\hat\Sigma_n - \Sigma|_\infty$.
The latter can be solved by an application of Theorem 1 by \citet{zhai2026simultaneous} for high-dimensional, long-memory time series.
The following condition on the autocovariance matrices $\Gamma_k = \cov(X_t, X_{t-k})$ is imposed to prevent the trivial scenario where the distribution of $|\hat\Sigma_n - \Sigma|_\infty$ is degenerate.
\begin{condition}\label{cond.G}
    There exists $c_0>0$ such that for all $1\leq s,t\leq p$, $\sum_{k=-\infty}^{\infty} ( (\Gamma_k)_{ss} (\Gamma_k)_{tt} + (\Gamma_k)_{st} (\Gamma_k)_{ts} ) \ge c_0$.
\end{condition}
Note that the quantity on the left-hand side is the long-run variance of $n^{1/2}\hat\Sigma_n$.
In a similar spirit to Condition \ref{cond.G}, we impose the following condition to prevent the case where the distribution of the score $\hat T_n$ is degenerate.
This time, the long-run variance of the aforementioned score process $\mathcal X_t$, which is directly associated with the test statistic $\hat T_n$, shall be nondegenerate.
\begin{condition}
\label{cond:score.nondegenerate}
There exists $c_1>0$ such that for all $(i,j)\in\mathcal J_0$, $\sum_{k=-\infty}^{\infty} \cov(\mathcal X_{0,ij},\mathcal X_{k,ij}) \ge c_1$.
\end{condition}
Under Conditions \ref{cond.A} and \ref{cond.G}, Theorem 1 of
\cite{zhai2026simultaneous} established a bound on the Kolmogorov distance for Gaussian approximation of the sample covariance error,
\[
\sup_{u \in \R} \left| \P( n^{1/2} |\hat\Sigma_n - \Sigma|_\infty \ge u ) - \P( |Z|_\infty \ge u ) \right| \leq C\Psi(p,n) := C \frac{\log^{(5\tilde\beta+12)/(4\tilde\beta+8)}(pn)}{n^{\tilde\beta/(4\tilde\beta+8)}},
\]
where $\tilde\beta = (4\beta-3) \wedge (2\beta-1)$, and $Z$ is a centered Gaussian vector with the same covariance structure as $n^{1/2} \operatorname{vec}(\hat\Sigma_n - \Sigma)$.
Combining this result with Proposition \ref{prop:deterministic.reduction}, we obtain the following theorem for the finite-sample Gaussian approximation bound for our proposed direct statistic.

\begin{theorem}
\label{thm:ga.direct}
Assume Conditions \ref{cond.A}-\ref{cond:score.nondegenerate}.
Then there exist positive constant $C$ and $c$, depending only on $C_0$, $c_0$, $c_1$, $C_\Sigma$, $C_\Omega$, and $C_A$, such that under $H_0$,
\[
\sup_{u \in \R}
|
\P(\hat T_n \le u) - \P(|G|_\infty \le u)
|
\le
C \Psi(p,n)
+
C p^2 \exp\left( - \frac{c n}{s_0^2} \right)
+
C \frac{s_0^2 \log^{3/2}(pn)}{\sqrt{n}}.
\]
\end{theorem}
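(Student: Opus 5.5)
The plan is a perturbation argument around the linearized oracle statistic $T_n^{\mathrm L}$: prove a Gaussian approximation for $T_n^{\mathrm L}$, then transfer it to $\hat T_n$ through Proposition~\ref{prop:deterministic.reduction} and a Gaussian anti-concentration bound.

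\textbf{Step 1 (Gaussian approximation for $T_n^{\mathrm L}$).} Since $\mathcal X_{t,ij}=\bm u_{ij}^\top(X_tX_t^\top-\Sigma)\bm\omega_{0,j}$, the vector $n^{-1/2}\sum_t\mathcal X_t$ is a fixed linear image of $n^{1/2}\operatorname{vec}(\hat\Sigma_n-\Sigma)$. The coefficients have bounded $\ell_1$ norm: $|\bm u_{ij}|_1\le 1+C_AC_\Sigma$ by Conditions~\ref{cond:Sigma.Omega.1norm}--\ref{cond:local.inverse}, and $|\bm\omega_{0,j}|_1\le C_\Omega$. Each coordinate is therefore a quadratic form $\sum_t X_t^\top M_{ij}X_t$ of a Gaussian linear process under Condition~\ref{cond.A}, with $|M_{ij}|_1$ uniformly bounded. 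I would rerun the argument of Theorem~1 of \citet{zhai2026simultaneous} for these linear combinations, in place of the raw entries. That argument uses only the decay of the $A_l$, bounded-coefficient quadratic forms, and a lower bound on the long-run variances, which here is supplied by Condition~\ref{cond:score.nondegenerate} instead of Condition~\ref{cond.G}. This gives
\[
\sup_u\bigl|\P(T_n^{\mathrm L}\le u)-\P(|G|_\infty\le u)\bigr|\le C\Psi(p,n).
\]
The index set has cardinality at most $p^2$, so the $\log(pn)$ factor is unchanged. The main obstacle is justifying that this transfer is legitimate. The cleanest route is to check that the proof in \citet{zhai2026simultaneous} is stated for general bounded $\ell_1$ linear functionals of $X_tX_t^\top$; otherwise I would re-derive the block and $m$-dependence approximations for $\mathcal X_t$ directly.

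\textbf{Step 2 (high-probability event).} Let $\mathcal E=\{|\hat\Sigma_n-\Sigma|_\infty\le \lambda\}$ with $\lambda\asymp\sqrt{\log(pn)/n}\wedge(2C_As_0)^{-1}$. I would use a Hanson--Wright-type concentration for Gaussian quadratic forms, with the operator norm of the temporal covariance controlled under long memory ($\beta>3/4$). A union bound over $p^2$ entries then gives
\[
\P(|\hat\Sigma_{n,st}-\Sigma_{st}|>x)\le C\exp(-cn\min(x^2,x)),
\]
up to polynomial losses. Choosing $x=(2C_As_0)^{-1}$ produces the term $Cp^2\exp(-cn/s_0^2)$. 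Choosing $x\asymp\sqrt{\log(pn)/n}$ costs only $O(n^{-1})$, or it can be absorbed into $\Psi$. On $\mathcal E$, Proposition~\ref{prop:deterministic.reduction} gives
\[
|\hat T_n-T_n^{\mathrm L}|\le \eta_n:=C_1 s_0^2\log(pn)/\sqrt n.
\]

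\textbf{Step 3 (anti-concentration and assembly).} For any $u$, $\P(\hat T_n\le u)\le \P(T_n^{\mathrm L}\le u+\eta_n)+\P(\mathcal E^c)$, and symmetrically for the lower bound. Replacing $T_n^{\mathrm L}$ by $|G|_\infty$ via Step 1 leaves the term $\P(u<|G|_\infty\le u+\eta_n)$. The diagonal of $\Xi$ is bounded below by $c_1$ (Condition~\ref{cond:score.nondegenerate}), so Nazarov's anti-concentration inequality bounds this by $C\eta_n\sqrt{\log p}$. This gives $C s_0^2\log^{3/2}(pn)/\sqrt n$, which is exactly the third term of the theorem. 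Collecting the three error contributions completes the proof.
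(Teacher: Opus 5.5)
Your architecture matches the paper's: linearize via Proposition~\ref{prop:deterministic.reduction}, prove a Gaussian approximation for $T_n^{\mathrm L}$, then transfer by smoothing plus Nazarov. Your fixed $\eta_n\asymp s_0^2\log(pn)/\sqrt n$ gives the same third term as the paper's optimized $\eta_\ast$, so Step~3 is fine. The genuine problem is Step~2, which fails as written in exactly the long-memory regime the theorem targets.

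Condition~\ref{cond.A} with $\beta>3/4$ controls only the Frobenius norm of the $n\times n$ temporal covariance $\Gamma^{(s)}$ of a coordinate, $\|\Gamma^{(s)}\|_F^2\lesssim n$. Its operator norm can be of order $n^{2-2\beta}$ when $\beta<1$, because admissible autocovariances may decay only like $k^{1-2\beta}$. Hanson--Wright therefore gives only $\P(|\hat\Sigma_{n,st}-\Sigma_{st}|>x)\le 2\exp\{-c\min(nx^2,n^{2\beta-1}x)\}$, not $\exp\{-cn\min(x^2,x)\}$. This is not an artifact of the inequality. Write $\bar X_s=n^{-1}\sum_tX_{ts}$; then $n^{-1}\sum_tX_{ts}^2\ge\bar X_s^2$ and $\var(\bar X_s)\asymp n^{1-2\beta}$, so for fixed $x$ the tail is at least $\exp(-Cn^{2\beta-1})$.

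Consequences for your two thresholds:
\begin{itemize}
\item At $x=(2C_As_0)^{-1}$ you get $p^2\exp(-cn^{2\beta-1}/s_0)$. This is far larger than $p^2\exp(-cn/s_0^2)$ whenever $s_0\lesssim n^{2-2\beta}$.
\item At $x\asymp\sqrt{\log(pn)/n}$ the union bound works only if $\log(pn)\lesssim n^{4\beta-3}$.
\end{itemize}
The step is repairable, but you must add an explicit absorption argument. If the right-hand side of the theorem is below one, then $s_0^2\lesssim\sqrt n$ and $\log(pn)\lesssim n^{\tilde\beta/(5\tilde\beta+12)}$. Both extra contributions are then at most $Cn^{-1/2}$ and fold into the third term.

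The paper avoids quadratic-form concentration altogether. It combines the covariance Gaussian approximation of \citet{zhai2026simultaneous} with the Gaussian tail of the approximating vector $Z_\Sigma$, whose coordinate variances are at most $v_\Sigma$. This gives $\P(\sqrt n|\hat\Sigma_n-\Sigma|_\infty>x)\le 2p^2e^{-x^2/(2v_\Sigma)}+C\Psi(p,n)$. Taking $x=\sqrt n/(2C_As_0)$ then yields exactly $p^2\exp(-cn/s_0^2)$, at the cost of a $\Psi$ term already present in the bound.

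Step~1 needs two further points.
\begin{itemize}
\item \textbf{No rerun of \citet{zhai2026simultaneous} is needed.} Let $Y_t=HX_t$, where the rows of $H$ are the vectors $\bm u_{ij}^\top$ and $\bm\omega_{0,j}^\top$. Since $\max_r|H_{r\cdot}|_1\le C$, the process $Y_t=\sum_l HA_l\epsilon_{t-l}$ again satisfies Condition~\ref{cond.A}. Each $\mathcal X_{t,ij}$ is an entry of $Y_tY_t^\top-\E Y_tY_t^\top$. So the covariance theorem applies directly to selected coordinates of a process of dimension at most $p^2+p$, with Condition~\ref{cond:score.nondegenerate} as the nondegeneracy input.
\item \textbf{You still have to pass from $\Xi_n$ to $\Xi$.} That theorem approximates by a Gaussian with the finite-$n$ covariance $\Xi_n=\sum_{|k|<n}(1-|k|/n)\cov(\mathcal X_0,\mathcal X_k)$, not by $G$ with the long-run covariance $\Xi$ in the statement. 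Under long memory this replacement is not automatic. The paper shows $|\Xi_n-\Xi|_\infty\lesssim n^{-[(4\beta-3)\wedge 1]}\log^3 n$ and applies the Gaussian comparison inequality of \citet{chernozhukov2015comparison}; the resulting error is of smaller polynomial order than $\Psi(p,n)$. It also checks $(\Xi_n)_{aa}\ge c_1/2$, so the anti-concentration constants stay uniform.
\end{itemize}
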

In particular, when the right-hand side converges to zero, then the Gaussian approximation is asymptotically valid.
Specifically, it allows for the ultra-high-dimensional case where $\log p = o(n^{\alpha})$ for some constant $\alpha$, and the upper bound may still converge.
Here, the second term is to control the probability of the event that Proposition \ref{prop:deterministic.reduction} is not covered, and the third term accounts for the linerarization residual between $T_n^{\mathrm{L}}$ and $\hat T_n$.
The proof of Theorem \ref{thm:ga.direct} is deferred to Appendix~\ref{app:proof.ga.direct}.

\subsection{Block Bootstrap}
\label{sec:block.bootstrap.theory}

The Gaussian approximation theorem identifies the target null law, but that law is not directly usable because the covariance of $G$ is unknown.
Now we provide a fully data-driven calibration rule based on overlapping block bootstrap.
The theorem below is what justifies the actual rejection rule used in practice.

The next proposition controls the gap between the plug-in score process and its oracle analogue.
This is the extra approximation step that does not appear in the covariance bootstrap problem of \cite{zhai2026simultaneous}.
Let $\bar{\mathcal X}_n = \frac{1}{n} \sum_{t=1}^n \mathcal X_t$ and $M_n = \max_{1 \le t \le n} |X_t|_\infty$.

\begin{proposition}
\label{prop:score.perturb}
Assume Conditions \ref{cond:Sigma.Omega.1norm} and \ref{cond:local.inverse}.
On the event $2 C_A s_0 |\hat\Sigma_n - \Sigma|_\infty \le 1$, we have $\max_{(i,j) \in \mathcal J_0} |\hat{\bm u}_{ij} - \bm u_{ij}|_1 \le C'_1 s_0 |\hat\Sigma_n - \Sigma|_\infty$.
Moreover,
\begin{equation}
\label{eq:score.perturb.pointwise}
\max_{1 \le t \le n} \max_{(i,j) \in \mathcal J_0}
|\hat{\mathcal X}_{t,ij} - (\mathcal X_{t,ij} - \bar{\mathcal X}_{n,ij})|
\le
C'_1 s_0 |\hat\Sigma_n - \Sigma|_\infty (M_n^2+1).
\end{equation}
Here $C'_1>0$ depends only on $C_A$ and $C_\Sigma$.
\end{proposition}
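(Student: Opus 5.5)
The plan is a direct matrix-perturbation argument on the local inverses, followed by a telescoping bound for the scores. Write $D = \hat\Sigma_n - \Sigma$, $A_j = B_{0,j}^\top \Sigma B_{0,j}$ and $\hat A_j = B_{0,j}^\top \hat\Sigma_n B_{0,j}$. Throughout, $|\cdot|_1$ on matrices is the induced $\ell_1$ norm (maximum absolute column sum). It is submultiplicative and satisfies $|Mv|_1 \le |M|_1 |v|_1$. For two vectors and any matrix we have $|a^\top M b| \le |a|_1 |M|_\infty |b|_1$. I interpret $\bm\omega_{0,j}$ in $\mathcal X_{t,ij}$ as the local inverse $B_{0,j} A_j^{-1} B_{0,j}^\top e_j$, which equals the true column under $H_0$.

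\textbf{Step 1: local-inverse perturbation.} Since $B_{0,j}^\top D B_{0,j}$ is $s_{0,j}\times s_{0,j}$ with entries bounded by $|D|_\infty$, its column sums give $|B_{0,j}^\top D B_{0,j}|_1 \le s_0 |D|_\infty$. Condition~\ref{cond:local.inverse} and the event $2C_A s_0|D|_\infty \le 1$ then give $|A_j^{-1} B_{0,j}^\top D B_{0,j}|_1 \le 1/2$. A Neumann series for $\hat A_j = A_j(I + A_j^{-1}B_{0,j}^\top D B_{0,j})$ therefore shows that $\hat A_j$ is invertible (so the zero-matrix convention is never triggered) with $|\hat A_j^{-1}|_1 \le 2C_A$. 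The resolvent identity $\hat A_j^{-1} - A_j^{-1} = -\hat A_j^{-1}(B_{0,j}^\top D B_{0,j})A_j^{-1}$ then yields
\[
|\hat A_j^{-1} - A_j^{-1}|_1 \le 2C_A^2 s_0 |D|_\infty .
\]
As consequences, $|\bm\omega_{0,j}|_1 \le C_A$, $|\hat{\bm\omega}_{0,j}|_1 \le 2C_A$, and $|\hat{\bm\omega}_{0,j} - \bm\omega_{0,j}|_1 \le 2C_A^2 s_0|D|_\infty$.

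\textbf{Step 2: the bound on $\hat{\bm u}_{ij} - \bm u_{ij}$.} I decompose
\[
\hat{\bm u}_{ij} - \bm u_{ij} = -B_{0,j}\bigl[(\hat A_j^{-1} - A_j^{-1})B_{0,j}^\top \hat\Sigma_n e_i + A_j^{-1} B_{0,j}^\top D e_i\bigr].
\]
By Condition~\ref{cond:Sigma.Omega.1norm}, $|B_{0,j}^\top \hat\Sigma_n e_i|_1 \le |\Sigma e_i|_1 + s_0|D|_\infty \le C_\Sigma + 1/(2C_A)$. We also have $|B_{0,j}^\top D e_i|_1 \le s_0|D|_\infty$. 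Combining these with Step 1 gives $|\hat{\bm u}_{ij} - \bm u_{ij}|_1 \le C s_0 |D|_\infty$, where $C$ depends only on $C_A$ and $C_\Sigma$. The same reasoning gives $|\bm u_{ij}|_1 \le 1 + C_A C_\Sigma$.

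\textbf{Step 3: pointwise score bound.} Set $M_t = X_tX_t^\top - \hat\Sigma_n$. Linearity gives $\mathcal X_{t,ij} - \bar{\mathcal X}_{n,ij} = \bm u_{ij}^\top (X_tX_t^\top - \Sigma)\bm\omega_{0,j} - \bm u_{ij}^\top D\bm\omega_{0,j} = \bm u_{ij}^\top M_t \bm\omega_{0,j}$. Hence, telescoping,
\[
\hat{\mathcal X}_{t,ij} - (\mathcal X_{t,ij} - \bar{\mathcal X}_{n,ij}) = (\hat{\bm u}_{ij} - \bm u_{ij})^\top M_t \hat{\bm\omega}_{0,j} + \bm u_{ij}^\top M_t(\hat{\bm\omega}_{0,j} - \bm\omega_{0,j}).
\]
Since $|\hat\Sigma_n|_\infty \le M_n^2$, we have $|M_t|_\infty \le 2M_n^2$. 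Applying $|a^\top M b| \le |a|_1|M|_\infty|b|_1$ together with Steps 1 and 2 bounds each term by a constant times $s_0|D|_\infty M_n^2$. The stated right-hand side, with the harmless extra $+1$, then follows uniformly in $t$ and $(i,j)$.

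The only delicate point is Step 1. One must keep everything in the induced $\ell_1$ norm so that the sparsity enters as a single factor $s_0$, and verify the Neumann-series invertibility from the event $2C_A s_0|D|_\infty \le 1$. Once the resolvent bound is in hand, the rest is bookkeeping.
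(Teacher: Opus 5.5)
Your proposal is correct and follows essentially the same route as the paper: a Neumann-series bound $|\hat A_j^{-1}|_1\le 2C_A$ on the stability event, then the resolvent identity giving $|\hat A_j^{-1}-A_j^{-1}|_1\lesssim s_0|\hat\Sigma_n-\Sigma|_\infty$, then a two-term split of $\hat{\bm u}_{ij}-\bm u_{ij}$, and finally the same telescoping identity in $X_tX_t^\top-\hat\Sigma_n$ for the centered scores. The differences are cosmetic: your splits of the resolvent and of $\hat{\bm u}_{ij}-\bm u_{ij}$ are mirror images of the paper's. You also bound $|X_tX_t^\top-\hat\Sigma_n|_\infty\le 2M_n^2$ via $|\hat\Sigma_n|_\infty\le M_n^2$, which is slightly cleaner than the paper's $|X_t|_\infty^2+|\Sigma|_\infty+|\hat\Sigma_n-\Sigma|_\infty$.
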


Proposition \ref{prop:score.perturb} identifies the second approximation error in the bootstrap theorem.
Once this plug-in error is controlled, the bootstrap problem reduces to the oracle block-bootstrap problem for the linearized score process.
The proof is deferred to Appendix~\ref{app:proof.score.perturb}.

\begin{theorem}
\label{thm:block.bootstrap.direct}
Assume Conditions \ref{cond.A}-\ref{cond:score.nondegenerate}.
Consider any fixed $0 < \epsilon < 1$.
Set $\phi = \frac{2\tilde\beta+4}{(3-\epsilon)\tilde\beta+4}$, $\psi = \frac{(5\tilde\beta+12)(1-\epsilon)}{(3-\epsilon)\tilde\beta+4}$, and $l \asymp n^\phi \log^\psi(p)$.
Further define the aggregate bootstrap error rate
\[
\mathfrak R_{n,l}(\epsilon)
=
\frac{\log^{\frac{5\tilde\beta+12}{(6-2\epsilon)\tilde\beta+8}}(pn)}{n^{\frac{\tilde\beta}{(6-2\epsilon)\tilde\beta+8}}}
+
p^2 \exp\left( - \frac{c n}{s_0^2} \right)
+
\frac{s_0^2 \log^{3/2}(pn)}{\sqrt{n}}
+
s_0 \sqrt{\frac{l}{n}} \log^2(pn).
\]
Then under the null hypothesis $H_0$, there exist constants $C>0$ depending only on $C_0$, $c_0$, $c_1$, $C_\Sigma$, $C_\Omega$, and $C_A$, as well as $r_\epsilon\in(0,1]$ depending only on $C$ and $\epsilon$, such that whenever $\mathfrak R_{n,l}(\epsilon)\le r_\epsilon$, with probability no less than $1-3\mathfrak R_{n,l}(\epsilon)^\epsilon$,
\[
\sup_{u\in\R}|\hat F_{n,l}(u)-\P(\hat T_n\le u)|
\le
C\mathfrak R_{n,l}(\epsilon)^{1-\epsilon}.
\]
\end{theorem}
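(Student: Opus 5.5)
The plan is a triangle inequality in the Kolmogorov metric:
\[
\sup_u|\hat F_{n,l}(u)-\P(\hat T_n\le u)| \le \sup_u|\hat F_{n,l}(u)-F^{\circ}_{n,l}(u)| + \sup_u|F^{\circ}_{n,l}(u)-\P(|G|_\infty\le u)| + \sup_u|\P(|G|_\infty\le u)-\P(\hat T_n\le u)|.
\]
Here $F^{\circ}_{n,l}$ is the oracle block-bootstrap distribution. It is built exactly as $\hat F_{n,l}$, but with $\hat{\mathcal X}_t$ replaced by the centered oracle scores $\mathcal X_t-\bar{\mathcal X}_n$. The third term is bounded by Theorem~\ref{thm:ga.direct}. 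This contributes the terms $\Psi(p,n)$, $p^2e^{-cn/s_0^2}$ and $s_0^2\log^{3/2}(pn)/\sqrt n$ of $\mathfrak R_{n,l}(\epsilon)$. Note that $\Psi(p,n)\le$ the first term of $\mathfrak R_{n,l}(\epsilon)$ when the exponents are compared.

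For the second term, I would treat $\mathcal X_{t,ij}=\bm u_{ij}^\top(X_tX_t^\top-\Sigma)\bm\omega_{0,j}$ as a linear functional of the quadratic process $X_tX_t^\top-\Sigma$. The coefficient vectors $\bm u_{ij}$ and $\bm\omega_{0,j}$ have $\ell_1$ norms bounded via Conditions~\ref{cond:Sigma.Omega.1norm}--\ref{cond:local.inverse}. Hence $\mathcal X_t$ is again a quadratic form of the same Gaussian linear process, with dependence inherited from Condition~\ref{cond.A}. Condition~\ref{cond:score.nondegenerate} plays the role of Condition~\ref{cond.G}. The oracle block-bootstrap argument of \citet{zhai2026simultaneous} then carries over essentially verbatim, with $\operatorname{vec}(X_tX_t^\top-\Sigma)$ replaced by $\mathcal X_t$. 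I would follow that argument in three steps:
\begin{enumerate}
\item Bound the expected max-norm distance between the block covariance $\E[l^{-1}B_{r,l}B_{r,l}^\top]$ and $\Xi$, as well as the sampling fluctuation of the empirical block covariance. Long memory enters here: the bias decays like $l^{-\tilde\beta}$-type rates.
\item Feed these bounds into Gaussian comparison, with anti-concentration coming from Condition~\ref{cond:score.nondegenerate}.
\item Convert expectation bounds into high-probability statements via Markov's inequality, which costs the exponent $1-\epsilon$ and the exceptional probability $\mathfrak R^\epsilon$. The choice $l\asymp n^\phi\log^\psi p$ balances bias against variance and yields the first term of $\mathfrak R_{n,l}(\epsilon)$.
\end{enumerate}
Centering by $\bar{\mathcal X}_n$ shifts each block sum by $l\bar{\mathcal X}_n$. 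After scaling this is $\sqrt{l/n}\,|n^{1/2}\bar{\mathcal X}_n|_\infty=O_P(\sqrt{l/n}\log^{1/2}(pn))$, which is absorbed into the last term.

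For the first term, I would use Proposition~\ref{prop:score.perturb}. On the event $2C_As_0|\hat\Sigma_n-\Sigma|_\infty\le1$, every scaled block statistic moves by at most
\[
l^{-1/2}\cdot l\cdot C_1's_0|\hat\Sigma_n-\Sigma|_\infty(M_n^2+1)\le C\sqrt l\,s_0\sqrt{\tfrac{\log(pn)}{n}}\log(pn)
\]
with high probability. This uses Gaussian tails to get $|\hat\Sigma_n-\Sigma|_\infty\lesssim\sqrt{\log(pn)/n}$ up to long-memory factors, and $M_n^2\lesssim\log(pn)$. A uniform shift of size $\eta$ in all block maxima changes the empirical CDF by at most $\hat F^\circ_{n,l}(u+\eta)-\hat F^\circ_{n,l}(u-\eta)$. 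I would bound this by the Gaussian probability $\P(||G|_\infty-u|\le\eta)$ plus twice the second-term error. Nazarov-type anti-concentration then gives $\eta\sqrt{\log(pn)}$, which produces the term $s_0\sqrt{l/n}\log^2(pn)$.

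The main obstacle is the long-memory covariance estimation in the second step. The mean of $\hat{\mathcal X}$-scaled block sums and the variance of block covariances decay slowly when $\beta\le1$. I would lean directly on the corresponding lemmas of \citet{zhai2026simultaneous}, verifying that the bounded-$\ell_1$ linear transformation preserves their functional-dependence bounds up to the constants $C_A,C_\Sigma,C_\Omega$. Finally, I would collect the exceptional events into the probability $1-3\mathfrak R_{n,l}(\epsilon)^\epsilon$: the failure of the event in Proposition~\ref{prop:score.perturb}, the Markov step, and the tail bound on $M_n$. The requirement $\mathfrak R\le r_\epsilon$ ensures these are meaningful.
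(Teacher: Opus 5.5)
This is essentially the paper's argument, and it goes through. The paper applies Theorem~2 of \citet{zhai2026simultaneous} to $\mathcal X_t$, viewed as selected covariance coordinates of the derived process $Y_t=HX_t$, checking nondegeneracy at both sample sizes $n$ and $l$; it handles the plug-in error by Proposition~\ref{prop:score.perturb} with $\eta_2\asymp s_0\sqrt{l/n}\log^{3/2}(pn)$ plus Nazarov's inequality, and passes to $\hat T_n$ via Proposition~\ref{prop:deterministic.reduction}, using $T_n^{\mathrm L}$ rather than your $|G|_\infty$ as the intermediate target. The extra failure probability $C\{\Psi(p,n)+p^2e^{-cn/s_0^2}+(pn)^{-1}\}$ is then folded into $3\mathfrak R_{n,l}(\epsilon)^\epsilon$, together with the cited $3\Psi_B^\epsilon$, by concavity of $t\mapsto t^\epsilon$, and this is where $r_\epsilon$ enters.

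One correction to your sketch of the cited oracle step. Here $\hat F_{n,l}$ is the empirical CDF of the overlapping block maxima, not a conditional Gaussian law. So the first term of $\mathfrak R_{n,l}(\epsilon)$ comes from Gaussian approximation at block length $l$ plus concentration of the indicator average: $\Psi(p,l)$ evaluated at $l\asymp n^\phi\log^\psi(p)$ has the same powers of $n$ and of the logarithm as that term. It does not come from balancing the bias of the block covariance relative to $\Xi$ against fluctuations of an empirical block covariance; that is the multiplier-bootstrap mechanism, and it does not produce the stated $\phi$.
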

The proof of Theorem \ref{thm:block.bootstrap.direct} is deferred to Appendix~\ref{app:proof.block.bootstrap}.
Theorem \ref{thm:block.bootstrap.direct} provides a finite-sample bound.
When $\mathfrak R_{n,l}(\epsilon) \rightarrow 0$, the block bootstrap method is then asymptotically valid.
Under the null hypothesis,
\[
\P_{H_0}(\hat T_n>\hat q_{1-\alpha})\to \alpha.
\]
Like in Theorem~\ref{thm:ga.direct}, it allows for the ultra-high-dimensional case where $\log p = o(n^{\alpha})$ for some constant $\alpha$.

\subsection{Theoretical Extensions}\label{sec:thm.ext}

A natural extension of Theorems \ref{thm:ga.direct} and \ref{thm:block.bootstrap.direct} is for two-sample testing.
Consider that we observe two independent stationary Gaussian linear processes $\{X_t^{(1)}\}_{t=1}^{n_1}$ and $\{X_t^{(2)}\}_{t=1}^{n_2}$, and want to test whether the two underlying precision graphs, $E^{(1)}$ and $E^{(2)}$, are identical.
We may, for instance, estimate $E^{(1)}$ by $\hat E^{(1)}$, and use it as the hypothesized graph for the test of $E^{(2)}$.
\begin{corollary}[two-sample, informal] \label{cor:two.sample.direct.informal}
    The Gaussian approximation result in Theorem \ref{thm:ga.direct} and the block bootstrap validity guarantee in Theorem \ref{thm:block.bootstrap.direct} still apply to this two-sample test, but with an extra term related to $\pi_{n_1} = \P(\hat E^{(1)} \ne E^{(1)})$ in the upper bound, respectively.
\end{corollary}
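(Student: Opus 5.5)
The argument is a conditioning step followed by the one-sample results. Let $\mathcal E = \{\hat E^{(1)} = E^{(1)}\}$. Since the two samples are independent and $\hat E^{(1)}$ is measurable with respect to the first sample, we condition on $\{X^{(1)}_t\}_{t=1}^{n_1}$. Under the two-sample null $H_0: E^{(1)} = E^{(2)}$, on $\mathcal E$ the hypothesized graph $E_0 := \hat E^{(1)}$ equals the fixed graph $E^{(2)}$. It is also independent of the second sample. Hence, conditionally on the first sample and on $\mathcal E$, the statistic $\hat T_{n_2}$ computed from $\{X^{(2)}_t\}$ with support $\hat E^{(1)}$ is exactly the one-sample direct statistic for $H_0: E = E_0$ with a deterministic $E_0 = E^{(2)}$. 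We assume Conditions \ref{cond.A}--\ref{cond:score.nondegenerate} for the second process with $E_0 = E^{(2)}$, so that $\mathcal J_0$, $s_0$, $C_A$ and $G$ are defined from $E^{(2)}$. Theorem \ref{thm:ga.direct} and Theorem \ref{thm:block.bootstrap.direct} then apply verbatim with $n = n_2$.

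\textbf{Gaussian approximation.} For any $u$, write
\[
\P(\hat T_{n_2}\le u) = \P(\{\hat T_{n_2}\le u\}\cap\mathcal E) + \P(\{\hat T_{n_2}\le u\}\cap\mathcal E^c).
\]
The second term is at most $\pi_{n_1}$. On $\mathcal E$ the statistic coincides with the statistic $T^\ast$ built from the true support $E^{(2)}$, so
\[
\P(\{\hat T_{n_2}\le u\}\cap\mathcal E) = \P(\{T^\ast\le u\}\cap\mathcal E).
\]
This differs from $\P(T^\ast\le u)$ by at most $\pi_{n_1}$. Applying Theorem \ref{thm:ga.direct} to $T^\ast$ gives the stated bound plus $2\pi_{n_1}$, uniformly in $u$. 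Independence is not even needed here; the coupling argument alone suffices.

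\textbf{Bootstrap validity.} Let $\hat F^\ast_{n_2,l}$ be the block-bootstrap distribution built with $E^{(2)}$. On $\mathcal E$ we have $\hat F_{n_2,l} = \hat F^\ast_{n_2,l}$. Theorem \ref{thm:block.bootstrap.direct} gives
\[
\sup_u|\hat F^\ast_{n_2,l}(u)-\P(T^\ast\le u)|\le C\mathfrak R_{n_2,l}(\epsilon)^{1-\epsilon}
\]
with probability at least $1-3\mathfrak R_{n_2,l}(\epsilon)^\epsilon$. By the previous step, $\sup_u|\P(\hat T_{n_2}\le u)-\P(T^\ast\le u)|\le 2\pi_{n_1}$. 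Intersecting with $\mathcal E$ by a union bound yields
\[
\sup_u|\hat F_{n_2,l}(u)-\P(\hat T_{n_2}\le u)|\le C\mathfrak R_{n_2,l}(\epsilon)^{1-\epsilon}+2\pi_{n_1}
\]
with probability at least $1-3\mathfrak R_{n_2,l}(\epsilon)^\epsilon-\pi_{n_1}$. This is the claimed extra $\pi_{n_1}$-term. The size statement $\P(\hat T_{n_2}>\hat q_{1-\alpha})\to\alpha$ follows when $\pi_{n_1}\to 0$ as well.

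The only delicate point is the main obstacle: making sure the constants are uniform. The one-sample theorems must be applied with a support that is random through the first sample. We avoid this by never applying them to a random support. We always compare with the oracle $T^\ast$ built from the fixed $E^{(2)}$, and pay $\pi_{n_1}$ for the discrepancy. This requires the conditions to be stated for $E^{(2)}$ rather than for $\hat E^{(1)}$. Under the null, $E^{(2)} = E^{(1)}$, so no additional assumptions on the estimator beyond consistency are needed.
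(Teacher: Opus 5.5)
Your proposal is correct and follows essentially the same route as the paper. Both arguments work on the event $\{\hat E^{(1)}=E^{(1)}\}$, where the two-sample statistic and its bootstrap CDF coincide with their fixed-null counterparts built from $E^{(1)}=E^{(2)}$. Both then apply Theorems~\ref{thm:ga.direct} and~\ref{thm:block.bootstrap.direct} with a union bound, paying $\pi_{n_1}$ for the complement; your $2\pi_{n_1}$ can be sharpened to $\pi_{n_1}$, since the two probabilities differ only on $\{\hat E^{(1)}\ne E^{(1)}\}$.

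As you note yourself, the conditioning on the first sample is unnecessary, and the paper does not use it either. The one step you leave as a one-liner is the asymptotic size claim. The paper obtains it by a quantile-bracketing argument, which needs the distribution of $|G^{1\rightarrow 2}|_\infty$ to be continuous at its $(1-\alpha)$-quantile.
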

The detailed testing setup and the formal statement of the corollary is deferred to Appendix \ref{sec:two.sample.direct}.
Specifically, if we choose an asymptotically consistent estimator, i.e. $\pi_{n_1} \rightarrow 0$ (for instance, \citet{chen2013covariance}), then we will still obtain $\P_{H_0}(\hat T_n>\hat q_{1-\alpha})\to \alpha$ when $\mathfrak R_{n,l}(\epsilon) \rightarrow 0$.

On the other hand, we also provide a brief, informal theorem for the consistency-empowered estimator $\tilde T_n$ in \eqref{eq:ce.stat}.
In theory, if we fix the two tuning parameters as $\delta_n=\sqrt{\log n}$ and $C_n=C_\Delta\sqrt{\log p}$ for some fixed constant $C_\Delta>0$, then the artificial pump is strong enough for $\tilde T_n$ to be rejected when $E_0$ contains false edges in the true underlying graph $E$, thus ensuring the power of the test, without affecting the size of the test.

\begin{theorem}[consistency-empowered estimator, informal]
\label{thm:ce.direct}
Assume that the conditions in Theorems \ref{thm:ga.direct} and \ref{thm:block.bootstrap.direct} hold.
Let $0<\epsilon<1$, choose the block length $l$ as in Theorem~\ref{thm:block.bootstrap.direct}, and assume $\mathfrak R_{n,l}(\epsilon)\to0$.
Then
\begin{enumerate}
    \item Under $H_0:E=E_0$, $\P_{H_0}(\tilde T_n=\hat T_n)\to 1$.
    Consequently, Theorems~\ref{thm:ga.direct} and~\ref{thm:block.bootstrap.direct} remain valid with $\tilde T_n$ in place of $\hat T_n$, and $\P_{H_0} ( \tilde T_n>\hat q_{1-\alpha} ) \to \alpha$.
    \item Under fixed alternative $H_1 : E\ne E_0$ subject to some conditions, $\P_{H_1} ( \tilde T_n>\hat q_{1-\alpha} ) \to 1$.
\end{enumerate}
\end{theorem}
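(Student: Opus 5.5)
Under $H_0$, every $i\in S_{0,j}$ with $i\ne j$ is an edge of $E$, so $\omega_{0,ij}\ne 0$; we assume these entries are bounded away from zero, $\min_{(i,j)\in E_0}|\omega_{0,ij}|\ge c_\omega>0$, and that the diagonal $\omega_{jj}\ge 1/|\Sigma|_1$ is likewise bounded below. The plan for part 1 is to show that with probability tending to one $\hat\Delta_{ij}=0$ for every $i\in S_{0,j}$. On that event $\tilde{\bm\omega}_{0,j}=\hat{\bm\omega}_{0,j}$. Moreover, $e_i^\top\hat\Sigma_n\hat{\bm\omega}_{0,j}=\one{i=j}$ exactly for $i\in S_{0,j}$, because $B_{0,j}^\top\hat\Sigma_n B_{0,j}(B_{0,j}^\top\hat\Sigma_nB_{0,j})^{-1}=I$ whenever this local matrix is invertible. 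Hence the maximum in \eqref{eq:ce.stat} reduces to the maximum over $\mathcal J_0$, and $\tilde T_n=\hat T_n$.

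To make this rigorous, first bound $|\hat\omega_{0,ij}-\omega_{0,ij}|$. From the resolvent identity used in Proposition~\ref{prop:deterministic.reduction}, on the event $2C_As_0|\hat\Sigma_n-\Sigma|_\infty\le1$ this difference is at most $C s_0|\hat\Sigma_n-\Sigma|_\infty$. The Gaussian approximation of \citet{zhai2026simultaneous}, together with a Gaussian maximal inequality, gives $|\hat\Sigma_n-\Sigma|_\infty=O_P(\sqrt{\log(pn)/n})$. Here $s_0\sqrt{\log(pn)/n}\to0$ is implied by $\mathfrak R_{n,l}(\epsilon)\to0$, since its third term is $s_0^2\log^{3/2}(pn)/\sqrt n$. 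It follows that $\min_{(i,j):i\in S_{0,j}}|\hat\omega_{0,ij}|\ge c_\omega/2$ with probability tending to one.

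Second, we need an upper bound on $\hat\tau_{ij}$. The intended scaling is that $\hat\tau_{ij}$ estimates the $n^{-1/2}$-scale standard error, so $\hat\tau_{ij}\lesssim\sqrt{\log(pn)/n}$ uniformly with high probability. This would follow from the block long-run variance estimator being bounded in the same way as the bootstrap analysis of Proposition~\ref{prop:score.perturb}, with $M_n=O_P(\sqrt{\log(pn)})$. Then $|\hat\omega_{0,ij}|/\hat\tau_{ij}\gtrsim\sqrt{n/\log(pn)}\gg\sqrt{\log n}=\delta_n$. Combining $\P(\tilde T_n\ne\hat T_n)\to0$ with Theorems~\ref{thm:ga.direct} and~\ref{thm:block.bootstrap.direct}, the Kolmogorov bounds transfer with an additional $o(1)$ term, and $\P(\tilde T_n>\hat q_{1-\alpha})\to\alpha$.

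For part 2, split into two cases. If $E\not\subseteq E_0$, some true edge lies in $\mathcal J_0$. The population residual $e_i^\top\Sigma\bm\omega^*_{0,j}$ of the local inverse is then nonzero, which is the assumed condition, so $\hat T_n\gtrsim\sqrt n$. Since $\tilde T_n$ includes the off-support entries, which $\hat\Delta$ does not change, we get $\tilde T_n\ge\hat T_n\to\infty$. Otherwise there is an $i\in S_{0,j}$ with $\omega_{ij}=0$. Then $\hat\omega_{0,ij}$ is $n^{-1/2}$-consistent for the population local-inverse entry, which equals $0$ because $E\subseteq E_0$ makes the local inverse exact. Its studentized value is $O_P(1)$, so it falls below $\delta_n\to\infty$ and the pump fires. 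The statistic then gains $\sqrt n\,|e_k^\top\hat\Sigma_n\hat{\bm\Delta}_j|\approx\sqrt nC_n|\Sigma_{ki}|$ at the row $k=i$, which is at least $C_\Delta\sqrt{n\log p}\,\Sigma_{ii}$, and this dominates the other $O_P(\sqrt{\log p})$ contributions. Since $\hat q_{1-\alpha}=O_P(\sqrt{\log(pn)})$, being the quantile of a maximum of asymptotically Gaussian block statistics, rejection has probability tending to one.

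The main obstacle is controlling $\hat\tau_{ij}$ uniformly in both directions under long memory. It must not be too large under $H_0$, and must be consistent enough under the alternative that the studentized null entries stay $O_P(\sqrt{\log p})$. The first thing to try is to reuse the block-sum concentration from the proof of Theorem~\ref{thm:block.bootstrap.direct} with an $l$-dependent normalization. A second difficulty is a possible cancellation when several pumped entries enter one row $k$. We would handle it by choosing $k=i$ for a single pumped coordinate and using $|\Sigma|_1\le C_\Sigma$ together with the diagonal lower bound.
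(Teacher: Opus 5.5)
Your Part 1 is correct and lighter than the paper's route. The paper bounds the studentized errors by $O_{\P}(\sqrt{\log p})$ and needs the uniform ratio consistency of Lemma~\ref{lem:tau.uniform.consistency}. You only need $|\hat{\bm\omega}_{0,j}-\bm\omega_{0,j}|_1\le Cs_0|\hat\Sigma_n-\Sigma|_\infty=o_{\P}(1)$ plus an upper bound on $\hat\tau_{ij}$. Even the crude bound $\hat\tau_{ij}\le C\sqrt{l/n}\,(M_n^2+1)$ on the stability event suffices, since $s_0\sqrt{l/n}\log^2(pn)\to0$.

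The gap is in Part 2, in how the pump enters the residuals. Adding $\hat{\bm\Delta}_j$ changes the whole residual column by $\hat\Sigma_n\hat{\bm\Delta}_j$, including its off-support rows. So in the case $E\not\subseteq E_0$, your claim that $\hat{\bm\Delta}$ leaves the off-support entries alone, and hence $\tilde T_n\ge\hat T_n$, is false. The pump can fire in the very column that carries the large off-support residual, because population local-inverse entries on $S_{0,j}$ can vanish under $H_1$ and no beta-min condition is imposed there. The shift $C_n\sum_{i'}(\hat\Sigma_n)_{ii'}$ may then cancel that residual. In the included case your fix for cancellation also fails. You cannot arrange a single pumped coordinate, since every coordinate of $S_{0,j}$ whose studentized estimate falls below $\delta_n$ gets pumped. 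And $|\Sigma|_1\le C_\Sigma$ together with a lower bound on $\Sigma_{ii}$ does not prevent $\sum_{i'\in P}(\hat\Sigma_n)_{ii'}\approx 0$ for the pumped set $P$, because no diagonal dominance is assumed.

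The missing idea is that all pumps have the same sign. Let $\hat A_j=B_{0,j}^\top\hat\Sigma_nB_{0,j}$ and $x_j=B_{0,j}^\top\hat{\bm\Delta}_j\neq0$, whose nonzero entries all equal $C_n$. The normal equations $B_{0,j}^\top(\hat\Sigma_n\hat{\bm\omega}_{0,j}-e_j)=0$ show that the in-support residuals of $\tilde{\bm\omega}_{0,j}$ are exactly $\hat A_jx_j$. On the event $2C_As_0|\hat\Sigma_n-\Sigma|_\infty\le1$,
\[
|\hat A_jx_j|_\infty\ge\frac{x_j^\top\hat A_jx_j}{|x_j|_1}\ge\lambda_{\min}(\hat A_j)\frac{|x_j|_2^2}{|x_j|_1}=\lambda_{\min}(\hat A_j)\,C_n\ge\frac{C_n}{2C_A},
\]
using $\lambda_{\min}(\hat A_j)\ge|\hat A_j^{-1}|_1^{-1}\ge(2C_A)^{-1}$. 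Thus any firing of the pump gives $\tilde T_n\ge\sqrt n\,C_n/(2C_A)$, whatever the pump pattern. For $E\not\subseteq E_0$ you then split into two cases:
\begin{itemize}
\item If some pump fires, use this bound.
\item If none fires, then $\tilde{\bm\omega}_{0,j}=\hat{\bm\omega}_{0,j}$ for all $j$. By consistency of the off-support residuals, $\tilde T_n\ge\sqrt n\,d_0/2$ with probability tending to one, where $d_0=\max_{(i,j)\in\mathcal J_0}|e_i^\top\Sigma\bm\omega_{0,j}|$.
\end{itemize}
You do not need to assume $d_0>0$, and it need not be nonzero at the particular missing edge you picked. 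If $d_0$ were $0$, then $\Sigma\bm\omega_{0,j}=e_j$ for every $j$, so $\Omega e_j$ would be supported on $S_{0,j}$, giving $E\subseteq E_0$.

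Finally, in the included case, your claim that the studentized zero coordinate is $O_{\P}(1)$ needs a lower bound on $\hat\tau_{i_0j_0}$. You flag this but leave it open. The paper supplies it through Lemma~\ref{lem:tau.uniform.consistency} together with $L^\omega_{ij}\ge c_\omega$ from Condition~\ref{cond:long-run.nondegeneracy}. Only consistency at that single fixed coordinate is actually needed.
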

The detailed technical discussion and the formal statement of the theorem (i.e. Theorem~\ref{thm:ce.direct.formal}) is deferred to Appendix \ref{app:ce.direct.details}.

\section{Quantitative Study}\label{sec:simulation}

\subsection{Null Test Validity}\label{sec:null.text.validity}

The first quantitative study aims at demonstrating Gaussian approximation result in Theorem \ref{thm:ga.direct} and the bootstrap validity guarantee in Theorem \ref{thm:block.bootstrap.direct}.
We use the a similar high-dimensional sample-size and dimension regimes as in \cite{zhai2026simultaneous}, i.e., $n \in \{200,400,800\}$ and $p \in \{2000,4000,8000\}$.
For the temporal dependence, we consider three levels, $\beta \in \{2,0.9,0.55\}$.
The case $\beta=2$ represents short memory, $\beta=0.9$ represents long memory covered by the theory, and $\beta=0.55$ is included as an out-of-theory ultra-long-memory case.

The null graph $E_0$ is taken to be a banded graph with fixed sparsity level, $E_0 = \{(i,j): 1 \le |i-j| \le b \}$.
In our quantitative study, we set $b=4$.
We first construct a banded baseline precision matrix $\Omega_0$ supported on $E_0$. 
For $i \ne j$, set $(\Omega_0)_{ij} = \theta_0 \rho^{|i-j|-1} \mathbf{1}\{1 \le |i-j| \le b\}$, where $\theta_0=0.25$ and $\rho=0.5$. 
The diagonal entries are chosen by strict diagonal dominance: $(\Omega_0)_{ii} = 1 + \sum_{j \ne i} |(\Omega_0)_{ij}| + \delta_0$, where we set $\delta_0=0.10$.
This guarantees that $\Omega_0$ is positive definite and that its off-diagonal support is exactly $E_0$.

For each precision matrix, we generate 200 copies of time series (see Appendix \ref{app:fft} for details).
For each copy, we can compute a realization of $\hat\Sigma_n$ and the statistic $\hat T_n$.
Meanwhile, we may compute the covariance matrix $\Xi$ and generate 200 copiess of the Gaussian vector $G$.
Based on each time series, we may also compute the bootstrap block sum $\hat B_{r,l}$.
Here we choose $l=n^{2/3}$, same as in \citet{zhai2026simultaneous}.
To demonstrate the validity of the Gaussian approximation result in Theorem~\ref{thm:ga.direct} or \ref{thm:block.bootstrap.direct}, we may compare the distribution of $\hat T_n$ against that of $|G|_\infty$ or $l^{-1/2}|\hat B_{r,l}|_\infty$, respectively.

The QQ plots for the three temporal dependence levels are given in Figures~\ref{fig:qq.beta2}, \ref{fig:qq.beta0.9}, and \ref{fig:qq.beta0.55} (Fig.~\ref{fig:qq.beta0.55} in Appendix \ref{app:null.exp.details}).
As a supplement, the corresponding Kolmogorov and Wasserstein-1 distances are given in Tables~\ref{tab:ks-distance-high-dimensional} and \ref{tab:w1-distance-high-dimensional} (both in Appendix \ref{app:null.exp.details}).
Generally, the shorter-memory, the larger $n$ and smaller $p$, the more accurate Gaussian approximation and block bootstrap are.
This agrees with the trend of finite-sample bounds in Theorems~\ref{thm:ga.direct} and \ref{thm:block.bootstrap.direct}.
This simulation setup confirms that they have reasonable empirical validity even when $p$ is a lot larger than $n$, and when the temporal dependence is long-range.
The case where $\beta=0.55$, which exceeds the range in Condition \ref{cond.A}, fails expectedly.

\begin{figure}[t]
    \centering
    \includegraphics[width=0.65\linewidth]{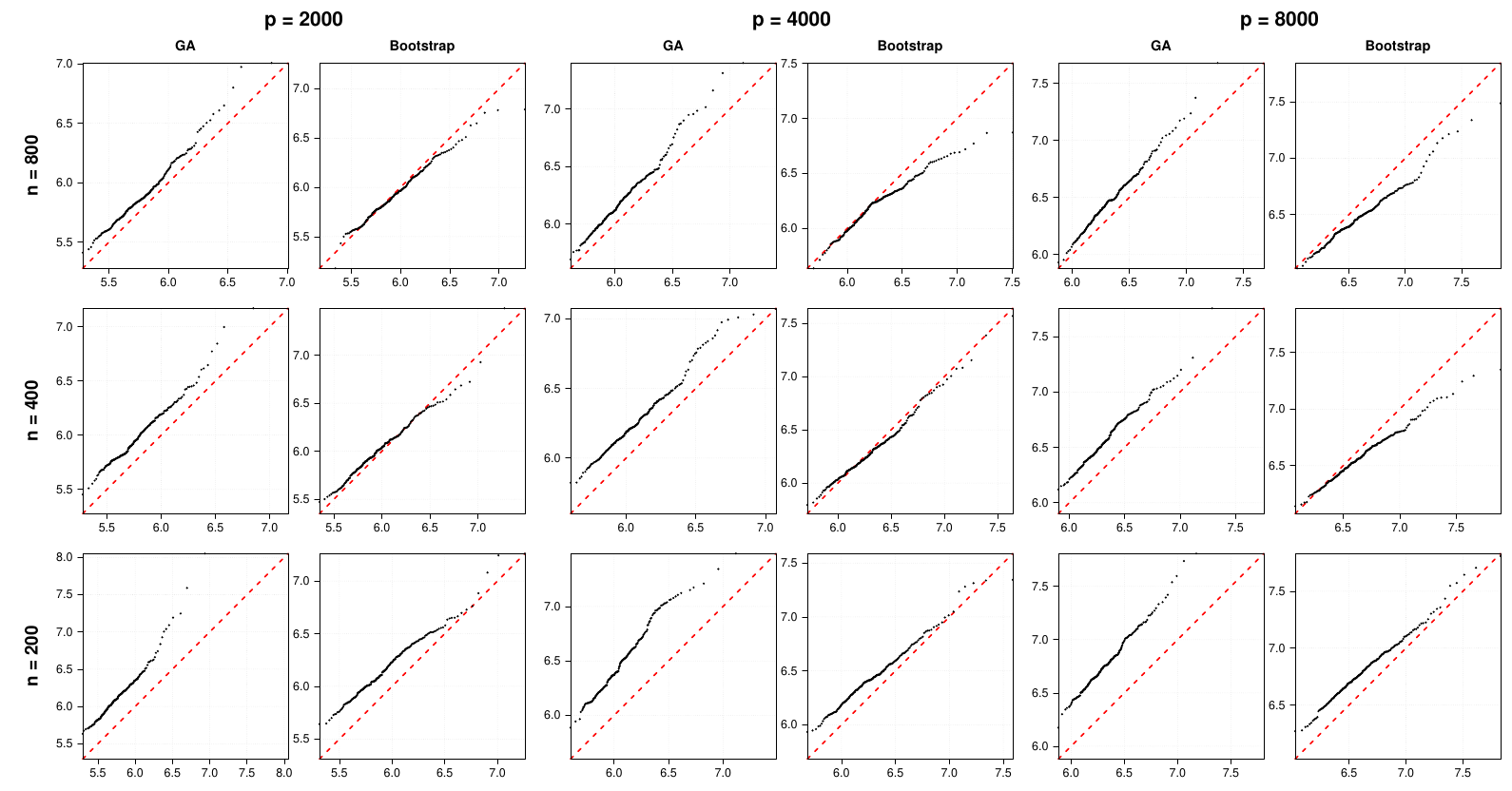}
    \caption{QQ-plots for null tests under short-memory, $\beta=2$.
    The $x$-axis is $|G|_\infty$ for GA and $l^{-1/2}|\hat B_{r,l}|_\infty$ for bootstrap.
    The $y$-axis is $\hat T_n$ for both.
    The red dashed line is $y=x$.
    Closer to this line indicates better proximity.}
    \label{fig:qq.beta2}
\end{figure}

\begin{figure}[t]
    \centering
    \includegraphics[width=0.65\linewidth]{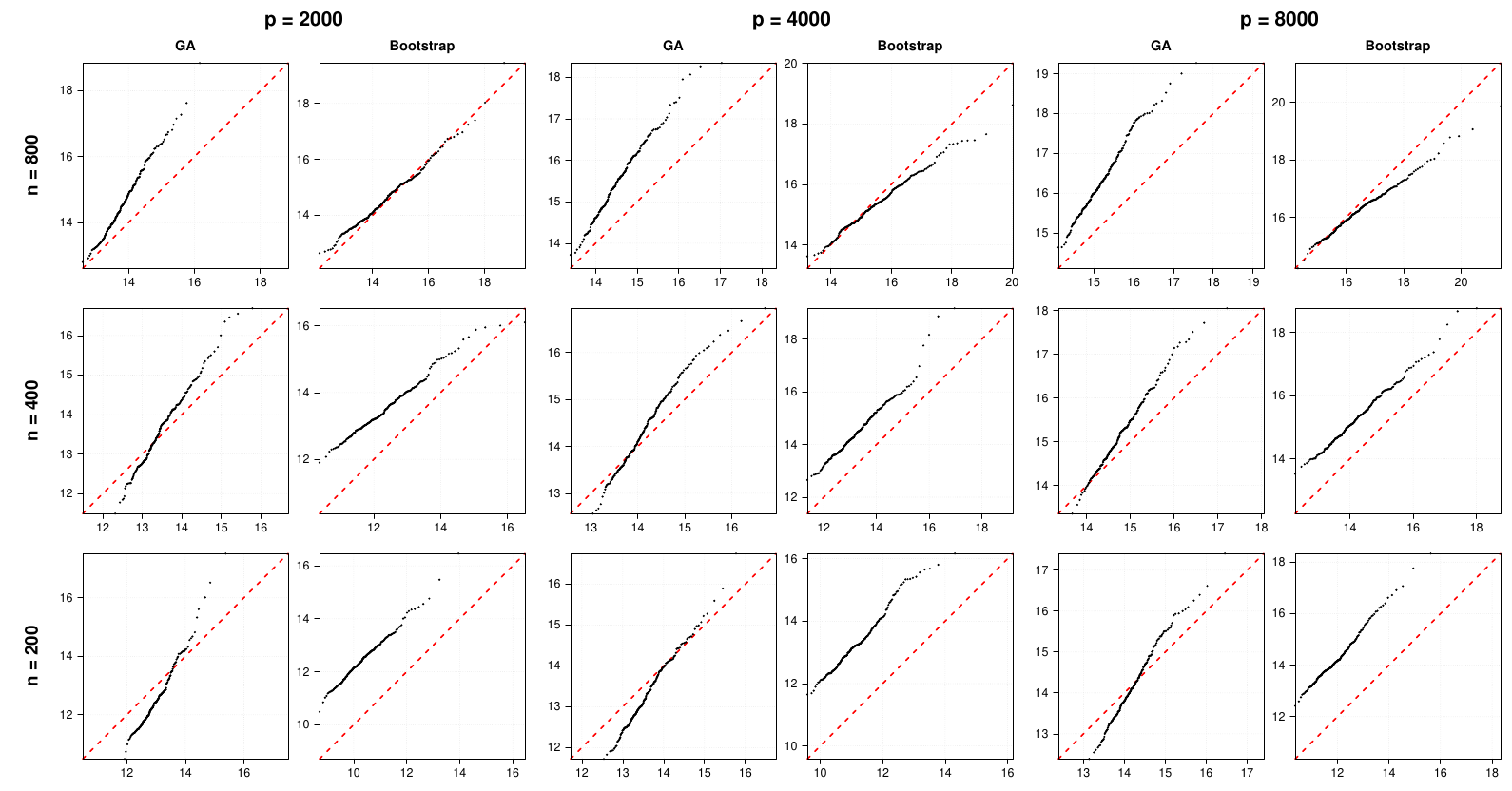}
    \caption{QQ-plots for null tests under long-memory, $\beta=0.9$.
    The $x$-axis is $|G|_\infty$ for GA and $l^{-1/2}|\hat B_{r,l}|_\infty$ for bootstrap.
    The $y$-axis is $\hat T_n$ for both.
    The red dashed line is $y=x$.
    Closer to this line indicates better proximity.}
    \label{fig:qq.beta0.9}
\end{figure}

\subsection{Alternative Test Power}\label{sec:alt.test.power}

We next study the power of the proposed graph test against random off-band alternatives.
For the null graph $E_0$, we apply the same banded structure and precision matrix setup as the null test experiment in Section \ref{sec:null.text.validity}.
For the true precision matrix $\Omega^\ast$ used to generate time series, however, we add $m$ off-support entries sampled uniformly without replacement from $\{(i,j):i<j,\ |i-j|>b\}$.
In this experiment, we consider the low-dimensional setup $n=2000, p=800$ and the high-dimensional setup $n=800, p=2000$.
For each setup, we consider the short-memory regime $\beta=2$ and the long-memory regime $\beta=0.9$.
The strength of the planted edges is calibrated by the effective strength, denoted by $\kappa = |\Omega^\ast_{ij}| / \sqrt{\Omega^\ast_{ii}\Omega^\ast_{jj}}$ for the added edges.
In this experiment, we let $m\in\{80,200,400,1000,2000\}$, and $\kappa\in\{0.03,0.06,0.09,0.12,0.15,0.18\}$.
We test the null graph $E_0$ on each of the 200 instances of FFT-generated time series at the level $\alpha=0.1$, and report the rejection rate.

\begin{table}[t]
\centering
\caption{Empirical rejection rates under random off-band alternatives.}
\label{tab:kappa-power-four-panel}
\scriptsize
\setlength{\tabcolsep}{2.5pt}
\renewcommand{\arraystretch}{1.12}
\resizebox{\textwidth}{!}{%
\begin{tabular}{@{}cc@{}}
\multicolumn{2}{c}{\textbf{Low-dimensional regime: }$(n,p)=(2000,800)$} \\[0.6ex]

\begin{tabular}{c|cccccc}
\toprule
\multicolumn{7}{c}{$\beta=2$} \\
\midrule
$m\backslash \kappa$ & 0.03 & 0.06 & 0.09 & 0.12 & 0.15 & 0.18 \\
\midrule
80   & 0.135 & 0.375 & 0.995 & 1.000 & 1.000 & 1.000 \\
200  & 0.140 & 0.650 & 1.000 & 1.000 & 1.000 & 1.000 \\
400  & 0.145 & 0.840 & 1.000 & 1.000 & 1.000 & 1.000 \\
800  & 0.165 & 0.935 & 1.000 & 1.000 & 1.000 & 1.000 \\
1200 & 0.175 & 0.955 & 1.000 & 1.000 & 1.000 & 1.000 \\
\bottomrule
\end{tabular}
&
\begin{tabular}{c|cccccc}
\toprule
\multicolumn{7}{c}{$\beta=0.9$} \\
\midrule
$m\backslash \kappa$ & 0.03 & 0.06 & 0.09 & 0.12 & 0.15 & 0.18 \\
\midrule
80   & 0.090 & 0.100 & 0.085 & 0.100 & 0.195 & 0.335 \\
200  & 0.090 & 0.100 & 0.115 & 0.155 & 0.230 & 0.520 \\
400  & 0.095 & 0.120 & 0.145 & 0.210 & 0.390 & 0.770 \\
800  & 0.110 & 0.125 & 0.155 & 0.270 & 0.485 & 0.680 \\
1200 & 0.105 & 0.135 & 0.210 & 0.320 & 0.465 & 0.550 \\
\bottomrule
\end{tabular}
\\[3ex]

\multicolumn{2}{c}{\textbf{High-dimensional regime: }$(n,p)=(800,2000)$} \\[0.6ex]

\begin{tabular}{c|cccccc}
\toprule
\multicolumn{7}{c}{$\beta=2$} \\
\midrule
$m\backslash \kappa$ & 0.03 & 0.06 & 0.09 & 0.12 & 0.15 & 0.18 \\
\midrule
200  & 0.085 & 0.115 & 0.175 & 0.770 & 1.000 & 1.000 \\
500  & 0.100 & 0.140 & 0.355 & 0.945 & 1.000 & 1.000 \\
1000 & 0.085 & 0.105 & 0.475 & 0.970 & 1.000 & 1.000 \\
2000 & 0.050 & 0.120 & 0.585 & 0.990 & 1.000 & 1.000 \\
3000 & 0.070 & 0.175 & 0.665 & 0.965 & 0.990 & 1.000 \\
\bottomrule
\end{tabular}
&
\begin{tabular}{c|cccccc}
\toprule
\multicolumn{7}{c}{$\beta=0.9$} \\
\midrule
$m\backslash \kappa$ & 0.03 & 0.06 & 0.09 & 0.12 & 0.15 & 0.18 \\
\midrule
200  & 0.075 & 0.090 & 0.090 & 0.110 & 0.110 & 0.125 \\
500  & 0.090 & 0.085 & 0.105 & 0.145 & 0.160 & 0.165 \\
1000 & 0.095 & 0.095 & 0.130 & 0.160 & 0.180 & 0.195 \\
2000 & 0.070 & 0.100 & 0.130 & 0.160 & 0.190 & 0.205 \\
3000 & 0.085 & 0.105 & 0.160 & 0.215 & 0.260 & 0.295 \\
\bottomrule
\end{tabular}
\end{tabular}%
}
\end{table}

Table~\ref{tab:kappa-power-four-panel} summarizes the resulting rejection probabilities.
The weak-signal columns stay close to the nominal level, while rejection rates generally increase as the effective edge strength $\kappa$ grows, or as the number of extra edges $m$ grows.
A larger number of stronger extra edges are usually more easily detected.
In the short-memory case, the power converges to one, as described in theory.
This convergence is slower for longer memory or higher dimension, as explained by the slower rates in Theorem~\ref{thm:block.bootstrap.direct}.

We show another simulation that demonstrates the power of consistency-empowered statistic $\tilde T_n$.
We apply the same banded precision matrix structure $\Omega_0$ with $b=4$, but with $\theta_0 = 1$, and $\rho=1$.
This setup is to prevent the precision entries from being too small such that the assumption in Theorem \ref{thm:ce.direct.formal} is violated.
For the true posterior matrix $\Omega^\ast$ used to generate time series, we randomly select $\eta$ of all entries in $\Omega_0$, and these entries are weakened by the ratio of $\chi$.
The experiment is on the same $(n,p,\beta)$ combinations.
The tuning parameters $\delta_n$ and $C_n$ are altered in a grid on a heldout copy of time series, until the null test rejection rate approach the nominal size $\alpha=0.10$.
For our setup, the choice is $(\delta_n, C_n) = (0.05, 0.5)$.
As a side-by-side comparison, the tests are also run under the direct statistic $\hat T_n$ under each setup.
The results are in Table~\ref{tab:ce.power}.

The consistency-empowered tests are indeed recovering the edges in $\Omega_0$ that are missing in $\Omega^\ast$, compared to the direct tests.
This phenomenon is more prominent when the temporal dependence is weaker and the dimension is lower.
For the short-memory, low-dimensional case, in particular, we can see the rejection rate rapidly converging to one as $\eta$ and $\chi$ increases.
The rejection rate for the short-memory, high-dimensional case is also growing, but not as rapidly.
In both cases, the consistency-empowered rejection rates are significantly higher than their direct test counterparts.
This demonstrates the recovered power of $\tilde T_n$ as $E \subsetneq E_0$.
In the long-memory case, however, the growth is significantly slower, and the advantage upon direct test is far less significant.
This is unsurprising, since the rate in Theorem~\ref{thm:ce.direct.formal}, which directly depends on $\mathfrak R_{n,l}(\epsilon)$, converges significantly slower as the temporal dependence becomes stronger.

\begin{table}[t]
\centering
\caption{Empirical rejection rates under within-band weakening alternatives.}
\label{tab:ce.power}
\scriptsize
\setlength{\tabcolsep}{3.5pt}
\renewcommand{\arraystretch}{1.12}
\resizebox{\textwidth}{!}{%
\begin{tabular}{c|cc|cc|cc|cc}
\toprule
\multirow{2}{*}{$(\eta,\chi)$}
&
\multicolumn{2}{c|}{$n=2000,\ p=800,\ \beta=2$}
&
\multicolumn{2}{c|}{$n=2000,\ p=800,\ \beta=0.9$}
&
\multicolumn{2}{c|}{$n=800,\ p=2000,\ \beta=2$}
&
\multicolumn{2}{c}{$n=800,\ p=2000,\ \beta=0.9$}
\\
\cmidrule(lr){2-3}
\cmidrule(lr){4-5}
\cmidrule(lr){6-7}
\cmidrule(lr){8-9}
&
$\tilde T_n$ & $\hat T_n$
&
$\tilde T_n$ & $\hat T_n$
&
$\tilde T_n$ & $\hat T_n$
&
$\tilde T_n$ & $\hat T_n$
\\
\midrule
$(0.0,0.0)$
& {\bf 0.140} & 0.130
& {\bf 0.080} & 0.070
& {\bf 0.070} & 0.070
& {\bf 0.110} & 0.090
\\
$(0.1,0.5)$
& {\bf 0.835} & 0.135
& {\bf 0.140} & 0.125
& {\bf 0.130} & 0.125
& {\bf 0.120} & 0.090
\\
$(0.3,1.0)$
& {\bf 1.000} & 0.165
& {\bf 0.100} & 0.085
& {\bf 0.270} & 0.115
& {\bf 0.110} & 0.085
\\
$(0.5,1.0)$
& {\bf 1.000} & 0.150
& {\bf 0.100} & 0.100
& {\bf 0.350} & 0.110
& {\bf 0.120} & 0.100
\\
\bottomrule
\end{tabular}%
}
\end{table}

\section{Sherlock fMRI Data Analysis}\label{sec:real.data}

We apply our proposed test on the Sherlock fMRI dataset \citep{chen2017shared}.
Each of the 17 subjects were observed in two phases.
In the movie phase, the subject watched the same movie as a stimulus.
In the recall phase, the subject verbally recalled the movie content.
The fMRI scan in each phase was transformed to multivariate time series over 268 cortical regions of interest (ROI, \citet{shen2013groupwise}).
The scientific question is whether the conditional independence graph structure during the movie phase is sufficient to explain the recall phase.
This comparison is natural because movie watching and recall involve related cognitive content, but may entail different patterns of brain connectivity \citep{baldassano2017discovering}.
Our methodology is useful for this type of data, since fMRI time series typically exhibit long memory (see Appendix \ref{app:long.memory}).
For each subject, the movie phase contains 1976 time points, while the recall phase has subject-specific length ranging from 437 to 1824 time points.

The goal of the experiment is a two-sample test on the conditional-independence graph between the movie phase and the recall phase.
We first estimate the movie-phase graph as follows.
For each subject, compute $\hat\Sigma^{\mathrm{mov}}$, the movie-phase sample covariance matrix, and let $\hat\Omega^{\mathrm{mov}} = (\hat\Sigma^{\mathrm{mov}})^{-1}$.
For threshold $\nu > 0$, which is a tuning parameter, let the movie graph be $E_\nu^{\mathrm{mov}} = \{(i,j): i < j,\ |\hat\Omega^{\mathrm{mov}}_{ij}| > \nu\}$.
We select the largest value of $\nu$ such that the graph passes the movie self-diagnostic on the individual's movie-phase data at level $\alpha=0.05$.
Equivalently, this selects the sparsest graph in the threshold path whose movie self-test p-value is at least 0.05.

We then test the recall phase against this movie-phase graph $E_{\hat\nu}^{\mathrm{mov}}$ using $\hat T_n$.
A rejection indicates that the recall graph contains off-support conditional dependencies not contained in the selected movie graph, signaling extra brain connectivity during the recall phase.
The recall p-values are computed by the block-bootstrap method, and the 17 subject-level recall p-values are adjusted using the Holm procedure to control family-wise error rate.
For each selected graph, we report the average column degree $\bar s$ and the maximum sparsity level $s_0$.
The result is available in Table \ref{tab:movie_recall_holm_subjects} in Appendix \ref{app:sherlock}.

All 17 selected movie graphs pass the self-diagnostic at level $\alpha=0.05$.
After Holm correction across subjects, all 17 recall tests reject at level $\alpha=0.05$.
Thus, the recall phase exhibits conditional-dependence structure that is not contained in the corresponding movie-phase graph for every subject in the cohort.
This result confirms the structural difference of brain connectivity between watching a movie (receiving information) and recalling it (utilizing memory and describing).

\section{Concluding Remark}\label{sec:conclusion}

This paper develops a data-adaptive framework for testing Gaussian graphical model structure in high-dimensional long-memory time series.
The key contribution is to reduce the graph specification testing problem to simultaneous inference of covariance through local-inverse based on the sample covariance matrix.
This avoids full precision-matrix estimation and enables finite-sample Gaussian approximation and block-bootstrap calibration.
The theory relies on temporal dependence $\beta>3/4$.
For ultra-long-memory time series with $1/2<\beta\leq 3/4$, the asymptotic distribution will be Rosenblatt-type, and is beyond the scope of this paper.
Another limitation is that the one-sample test requires a predetermined graphical structure to be known.
Entrywise testing requires inference theory on the precision matrix.
Doing so under high dimension is a very interesting yet challenging aspect of study, and we leave it for future work.

\newpage

\bibliographystyle{apalike}
\bibliography{TGG_papers}       

\newpage
\appendix

\section{Miscellaneous Contents}
\subsection{Notation}
Throughout the paper, $|\cdot|_\infty$ denotes the maximum absolute entry of a vector or a matrix, and $|\cdot|_1$ denotes the matrix $\ell_1$ norm.
For a vector $v$, $|v|$ denotes its Euclidean norm.
For a matrix $A$, $A_{j\cdot}$ denotes its $j$-th row, $A_{\cdot j}$ denotes its $j$-th column, and $A^\top$ denotes its transpose.
We write $[p]=\{1,\ldots,p\}$, and let $e_j$ be the $j$-th canonical basis vector in $\mathbb R^p$.
For a matrix $A$, $\mathrm{supp}(A)=\{(i,j):A_{ij}\ne 0\}$ denotes its entrywise support.
For a random variable $X$, $\|X\|_q$ is defined as $(\E|X|^q)^{1/q}$.
For a set $S$, $|S|$ denotes its cardinality.
We use $\mathbf 1\{\cdot\}$ for the indicator function.
For two positive sequences $a_n$ and $b_n$, $a_n\lesssim b_n$ means $a_n\le Cb_n$ for some constant $C>0$, and $a_n\asymp b_n$ means $a_n\lesssim b_n$ and $b_n\lesssim a_n$.
We use $a\vee b=\max\{a,b\}$ and $a\wedge b=\min\{a,b\}$.
The notation $O_{\mathbb P}(\cdot)$ and $o_{\mathbb P}(\cdot)$ has its usual meaning.
Unless otherwise stated, constants may change from line to line and do not depend on $n$ or $p$.

\subsection{Algorithms}\label{app:algo}
This paper introduces two types of goodness-of-fit test statistics for Gaussian graphical models of time series.
The first one is the direct statistic $\hat T_n$ as discussed in Section \ref{sec:direct.stat}, while the second one is the consistency-empowered corrected statistic $\tilde T_n$ as discussed in Section \ref{sec:ce.direct}.

\begin{algorithm}[t]
\caption{Test procedure for the direct graph statistic}
\label{alg:block.bootstrap}
\begin{algorithmic}[1]
\Require $\{X_t\}_{t=1}^n$, null graph $E_0$, block length $l$, and level $\alpha$
\State Compute sample covariance matrix $\hat\Sigma_n$ and the local inverse estimators $\hat{\bm \omega}_{0,j}$ from \eqref{eq:omega.hat}
\State Compute the direct statistic $\hat T_n$ from \eqref{eq:T.hat.def}
\State Compute the plug-in score vectors $\hat{\mathcal X}_t$ from \eqref{eq:X.hat.def} for $t = 1,\ldots,n$
\State Form the overlapping block sums $\hat B_{r,l}$ for $r = l,\ldots,n$
\State Construct the empirical CDF $\hat F_{n,l}$ from \eqref{eq:F.hat.def} and the quantile $\hat q_{1-\alpha}$ from \eqref{eq:q.hat.def}
\State Reject $H_0$ if $\hat T_n > \hat q_{1-\alpha}$
\end{algorithmic}
\end{algorithm}

Algorithm \ref{alg:block.bootstrap} describes the detailed algorithm that computes $\hat T_n$, and how to compute the rejection threshold $\hat q_{1-\alpha}$ using block bootstrap method.
Meanwhile, Algorithm \ref{alg:block.bootstrap.ce} describes the detailed algorithm that computes $\tilde T_n$, and uses the same rejection threshold $\hat q_{1-\alpha}$ computed by the same block bootstrap method.

\begin{algorithm}[t]
\caption{Test procedure for the consistency-empowered statistic}
\label{alg:block.bootstrap.ce}
\begin{algorithmic}[1]
\Require $\{X_t\}_{t=1}^n$, null graph $E_0$, block length $l$, level $\alpha$, and tuning parameters $C_n,\delta_n$
\State Compute sample covariance matrix $\hat\Sigma_n$ and the local inverse estimators $\hat{\bm \omega}_{0,j}$ from \eqref{eq:omega.hat}

\State For each $j$ and $i\in S_{0,j}$, compute $\hat{\bm v}_{0,ij} = B_{0,j} (B_{0,j}^{\top}\hat\Sigma_n B_{0,j})^{-1} B_{0,j}^{\top}e_i$ and the plug-in estimated local inverse score process $\hat{\mathcal X}^{\omega}_{t,ij}$ from \eqref{eq:local.inverse.score}

\State Compute its block-bootstrap estimated long-run variance, $\hat L^\omega_{ij}$, from \eqref{eq:tau.hat.block}, and set $\hat\tau_{ij}=\sqrt{\hat L^\omega_{ij}/n}$

\State Define $\hat\Delta_{ij} = C_n \one{ |\hat\omega_{0,ij}|/\hat\tau_{ij}\le \delta_n }$ for $i\in S_{0,j}$,
and set $\hat\Delta_{ij}=0$ for $i\notin S_{0,j}$.

\State Let $\hat{\bm \Delta}_j=(\hat\Delta_{1j},\ldots,\hat\Delta_{pj})^\top$ and compute $\tilde{\bm \omega}_{0,j} = \hat{\bm \omega}_{0,j} + \hat{\bm \Delta}_j$

\State Compute $\tilde T_n$ from \eqref{eq:ce.stat}

\State Compute the plug-in score vectors $\hat{\mathcal X}_t$ from \eqref{eq:X.hat.def} for $t = 1,\ldots,n$
\State Form the overlapping block sums $\hat B_{r,l}$ for $r = l,\ldots,n$
\State Construct the empirical CDF $\hat F_{n,l}$ from \eqref{eq:F.hat.def} and the quantile $\hat q_{1-\alpha}$ from \eqref{eq:q.hat.def}
\State Reject $H_0$ if $\tilde T_n > \hat q_{1-\alpha}$
\end{algorithmic}
\end{algorithm}

\section{Supplemental Theory}\label{sec:supp.theory}

\subsection{Detailed Discussion on Two-sample Extension}
\label{sec:two.sample.direct}

We first record a direct two-sample consequence of Theorems~\ref{thm:ga.direct} and~\ref{thm:block.bootstrap.direct}.
Let $\{ X_t^{(1)} \}_{t=1}^{n_1}$ and $\{ X_t^{(2)} \}_{t=1}^{n_2}$ be two independent stationary Gaussian linear processes satisfying the setup of Section~\ref{sec:methodology}.
For $k = 1,2$, the true underlying covariance and precision matrices are $\Sigma^{(k)} = \E[ X_t^{(k)} X_t^{(k)\top} ]$ and $\Omega^{(k)} = (\Sigma^{(k)})^{-1}$, with the graph $E^{(k)} = \mathrm{supp}(\Omega^{(k)})$.
We also have each sample covariance $\hat\Sigma^{(k)}$.
Let $\hat E^{(1)}$ be a graph estimator based only on the first sample.
We use $\hat E^{(1)}$ as the null graph for the second sample and test
\[
H_0^{1\rightarrow 2} : E^{(2)} = E^{(1)},
\qquad
H_1^{1\rightarrow 2} : E^{(2)} \ne E^{(1)}.
\]
This construction is directional because the second sample is tested against the first-sample graph.

For each $1 \le j \le p$, define $\hat S_{0,j}^{(1)} = \{ i : i = j \text{ or } (i,j) \in \hat E^{(1)} \}$, $\hat s_{0,j}^{(1)} = |\hat S_{0,j}^{(1)}|$ and $\hat s_0^{(1)} = \max_{1 \le j \le p} \hat s_{0,j}^{(1)}$.
Let $\hat B_{0,j}^{(1)} \in \R^{p \times \hat s_{0,j}^{(1)}}$ be the corresponding selection matrix.
Define
\[
\hat{\bm \omega}_{0,j}^{(1)}
=
\hat B_{0,j}^{(1)}
(\hat B_{0,j}^{(1)\top} \hat\Sigma^{(2)} \hat B_{0,j}^{(1)})^{-1}
\hat B_{0,j}^{(1)\top} e_j.
\]
Let $\hat{\mathcal J}^{(1)} = \{ (i,j) : i \notin \hat S_{0,j}^{(1)} \}$.
The directional two-sample statistic is
\[
\hat T^{1\rightarrow 2}
=
\sqrt{n_2}
\max_{(i,j) \in \hat{\mathcal J}^{(1)}}
| e_i^\top \hat\Sigma^{(2)} \hat{\bm \omega}_{0,j}^{(1)} |.
\]

The bootstrap distribution is formed as in Section~\ref{sec:block.bootstrap}, but using the second sample and the random null graph $\hat E^{(1)}$.
For $(i,j) \in \hat{\mathcal J}^{(1)}$, define $\hat{\bm u}_{ij}^{(1)} = e_i - \hat B_{0,j}^{(1)} (\hat B_{0,j}^{(1)\top} \hat\Sigma^{(2)} \hat B_{0,j}^{(1)})^{-1} \hat B_{0,j}^{(1)\top} \hat\Sigma^{(2)} e_i$.
For $1 \le t \le n_2$, define $\hat{\mathcal X}_{t,ij}^{(1)} = \hat{\bm u}_{ij}^{(1)\top} ( X_t^{(2)} X_t^{(2)\top} - \hat\Sigma^{(2)} ) \hat{\bm \omega}_{0,j}^{(1)}$ and $\hat{\mathcal X}_t^{(1)} = (\hat{\mathcal X}_{t,ij}^{(1)})_{(i,j) \in \hat{\mathcal J}^{(1)}}$.
For a block length $l = l_{n_2}$, let $\hat B_{r,l}^{(1)} = \sum_{t=r-l+1}^{r} \hat{\mathcal X}_t^{(1)}$ for $r = l,\ldots,n_2$.
The directional bootstrap CDF and critical value are
\[
\hat F_l^{1\rightarrow 2}(u)
=
\frac{1}{n_2-l+1}
\sum_{r=l}^{n_2}
\one{ l^{-1/2} |\hat B_{r,l}^{(1)}|_\infty \le u },
\quad
\hat q_{1-\alpha}^{1\rightarrow 2}
=
\inf
\{ u \in \R : \hat F_l^{1\rightarrow 2}(u) \ge 1-\alpha \}.
\]
The directional two-sample test rejects $H_0^{1\rightarrow 2}$ whenever $\hat T^{1\rightarrow 2} > \hat q_{1-\alpha}^{1\rightarrow 2}$.
Let $s_0^{(1)} = \max_{1 \le j \le p} | \{ i : i = j \text{ or } (i,j) \in E^{(1)} \} |$ be the sparsity level of $E^{(1)}$.
Let $G^{1\rightarrow 2}$ be the centered Gaussian vector obtained by applying Theorem~\ref{thm:ga.direct} to the second sample with null graph $E^{(1)}$.
The following corollary of Theorems \ref{thm:ga.direct} and \ref{thm:block.bootstrap.direct} is the rigorous version of Corollary \ref{cor:two.sample.direct.informal}.
\begin{corollary}[Directional two-sample extension]
\label{cor:two.sample.direct}
Let $\pi_{n_1} = \P(\hat E^{(1)} \ne E^{(1)})$ and assume $\pi_{n_1} \to 0$.
Under $H_0^{1\rightarrow 2}$, assume that the second sample satisfies the assumptions of Theorems~\ref{thm:ga.direct} and~\ref{thm:block.bootstrap.direct} with the fixed null graph $E^{(1)}$.
Then
\begin{enumerate}
    \item The following Gaussian approximation bound holds:
    \begin{multline*}
    \sup_{u \in \R}
    |
    \P_{H_0^{1\rightarrow 2}}( \hat T^{1\rightarrow 2} \le u )
    -
    \P( |G^{1\rightarrow 2}|_\infty \le u )
    |
    \\
    \le
    C \pi_{n_1}
    +
    C \Psi(p,n_2)
    +
    C p^2 \exp\{ - c n_2 / (s_0^{(1)})^2 \}
    +
    C \frac{(s_0^{(1)})^2 \log^{3/2}(p n_2)}{\sqrt{n_2}}.
    \end{multline*}

    \item For any fixed $0<\epsilon<1$, take $\phi = \frac{2\tilde\beta+4}{(3-\epsilon)\tilde\beta+4}$, $\psi = \frac{(5\tilde\beta+12)(1-\epsilon)}{(3-\epsilon)\tilde\beta+4}$, and $l \asymp n_2^\phi \log^\psi(p)$.
    Whenever $\mathfrak R_{n,l}^{1\rightarrow 2}(\epsilon) \le 1$, we have
    \[
    \P
    (
    \sup_{u \in \R}
    |
    \hat F_l^{1\rightarrow 2}(u)
    -
    \P_{H_0^{1\rightarrow 2}}( \hat T^{1\rightarrow 2} \le u )
    |
    >
    C \mathfrak R_{n,l}^{1\rightarrow 2}(\epsilon)^{1-\epsilon}
    )
    \le
    C \mathfrak R_{n,l}^{1\rightarrow 2}(\epsilon)^\epsilon,
    \]
    where
    \begin{multline*}
    \mathfrak R_{n,l}^{1\rightarrow 2}(\epsilon)
    ={}
    \pi_{n_1}
    +
    \Psi_B(p,n_2,\epsilon)
    +
    \Psi(p,n_2)
    +
    p^2 \exp\{ - c n_2 / (s_0^{(1)})^2 \}
    +
    (p n_2)^{-1}
    \\
    +
    \frac{(s_0^{(1)})^2 \log^{3/2}(p n_2)}{\sqrt{n_2}}
    +
    s_0^{(1)}
    \sqrt{\frac{l}{n_2}}
    \log^2(p n_2).
    \end{multline*}
\end{enumerate}
In particular, if $\mathfrak R_{n,l}^{1\rightarrow 2}(\epsilon) \to 0$, and the distribution of $|G^{1\rightarrow 2}|_\infty$ is continuous at its $(1-\alpha)$-quantile, then we have
\[
\P_{H_0^{1\rightarrow 2}}
(
\hat T^{1\rightarrow 2} > \hat q_{1-\alpha}^{1\rightarrow 2}
)
\to
\alpha.
\]
The proof of Corollary \ref{cor:two.sample.direct} is given in Appendix \ref{app:proof.two.sample.direct}.
\end{corollary}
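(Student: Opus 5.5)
The plan is to condition on the first sample and reduce everything to the one-sample results applied to the second sample with the \emph{fixed} null graph $E^{(1)}$.

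The key observation is that the second sample is independent of $\hat E^{(1)}$, which is built only from $\{X_t^{(1)}\}$. On the event $\mathcal E = \{\hat E^{(1)} = E^{(1)}\}$ the random objects coincide with their fixed-graph counterparts: $\hat S^{(1)}_{0,j}=S^{(1)}_{0,j}$, $\hat B^{(1)}_{0,j}=B^{(1)}_{0,j}$ and $\hat{\mathcal J}^{(1)}=\mathcal J^{(1)}$. Consequently $\hat T^{1\rightarrow 2}=T^\ast$ and $\hat F^{1\rightarrow 2}_l=F^\ast_l$, where $T^\ast$ and $F^\ast_l$ denote the statistic and bootstrap CDF computed from the second sample with the deterministic null graph $E^{(1)}$. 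Under $H_0^{1\rightarrow2}$ we have $E^{(2)}=E^{(1)}$, so $E^{(1)}$ is the true graph of the second sample. The assumptions of Theorems~\ref{thm:ga.direct} and~\ref{thm:block.bootstrap.direct} hold for this fixed graph, with sparsity $s_0^{(1)}$.

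For part 1, for every $u$ I will write
\[
|\P(\hat T^{1\rightarrow2}\le u)-\P(T^\ast\le u)|\le \P(\mathcal E^c)=\pi_{n_1}.
\]
This holds because the two events differ only on $\mathcal E^c$; note that independence is not even needed here, since $T^\ast$ is defined on the same probability space. I then add the bound of Theorem~\ref{thm:ga.direct} for $T^\ast$ versus $|G^{1\rightarrow2}|_\infty$, with $n=n_2$ and $s_0=s_0^{(1)}$, and take the supremum over $u$. This gives the stated bound, with the constant absorbing a factor on $\pi_{n_1}$.

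For part 2, I decompose
\[
\sup_u|\hat F^{1\rightarrow2}_l(u)-\P(\hat T^{1\rightarrow2}\le u)|\le \sup_u|\hat F^{1\rightarrow2}_l-F^\ast_l|+\sup_u|F^\ast_l(u)-\P(T^\ast\le u)|+\pi_{n_1}.
\]
On $\mathcal E$ the first term is zero. Theorem~\ref{thm:block.bootstrap.direct} bounds the second term by $C\mathfrak R^{\ast}(\epsilon)^{1-\epsilon}$ outside an event of probability at most $3\mathfrak R^\ast(\epsilon)^\epsilon$, where $\mathfrak R^\ast$ is that theorem's rate with $(n_2,s_0^{(1)})$. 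The deterministic third term satisfies $\pi_{n_1}\le \mathfrak R^{1\rightarrow2}(\epsilon)\le \mathfrak R^{1\rightarrow2}(\epsilon)^{1-\epsilon}$, since $\mathfrak R^{1\rightarrow2}\le 1$. The exceptional probability is then at most $\pi_{n_1}+3\mathfrak R^\ast(\epsilon)^\epsilon\le C\mathfrak R^{1\rightarrow2}(\epsilon)^\epsilon$, again using $x\le x^\epsilon$ for $x\le1$. Finally, $\mathfrak R^{1\rightarrow2}$ dominates $\mathfrak R^\ast$ because it contains all of its terms (the $\Psi_B$, $\Psi$ and $(pn_2)^{-1}$ terms are extra slack coming from the unpacked proof of Theorem~\ref{thm:block.bootstrap.direct}).

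The main obstacle is the smallness requirement. Theorem~\ref{thm:block.bootstrap.direct} is stated only when $\mathfrak R\le r_\epsilon$, whereas the corollary assumes only $\mathfrak R^{1\rightarrow2}\le1$. I would handle this by enlarging $C$: if $\mathfrak R^{1\rightarrow2}\in(r_\epsilon,1]$, then choosing $C\ge r_\epsilon^{-\epsilon}$ makes the probability bound $C\mathfrak R^\epsilon\ge1$ trivially true.

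For the size statement, I combine parts 1 and 2: the bootstrap quantile $\hat q^{1\rightarrow2}_{1-\alpha}$ converges in probability to the $(1-\alpha)$-quantile of $|G^{1\rightarrow2}|_\infty$, uniformly via the Kolmogorov bounds. Continuity of the law of $|G^{1\rightarrow2}|_\infty$ at that quantile then turns this into $\P(\hat T^{1\rightarrow2}>\hat q^{1\rightarrow2}_{1-\alpha})\to\alpha$ by the standard sandwich argument with $\pm\eta$ quantile perturbations.
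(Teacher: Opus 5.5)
Your proposal is correct and follows the paper's proof essentially step for step. You reduce on the event $\{\hat E^{(1)} = E^{(1)}\}$, pay $\pi_{n_1}$ via the triangle inequality, apply Theorems~\ref{thm:ga.direct} and~\ref{thm:block.bootstrap.direct} with the fixed graph $E^{(1)}$ and $(n_2, s_0^{(1)})$, absorb everything into $\mathfrak R_{n,l}^{1\rightarrow 2}(\epsilon)$, and close with the $\pm\eta$ quantile-bracketing argument; your explicit handling of the mismatch between the condition $\mathfrak R\le r_\epsilon$ of Theorem~\ref{thm:block.bootstrap.direct} and the corollary's condition $\mathfrak R^{1\rightarrow2}_{n,l}(\epsilon)\le 1$ (enlarging $C$ so the probability bound becomes trivial) is a detail the paper's proof does not address explicitly.
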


\subsection{Detailed Discussion on Consistency-empowered Estimator}
\label{app:ce.direct.details}

This subsection gives the technical details behind the consistency-empowered correction in Sections~\ref{sec:ce.direct} and \ref{sec:thm.ext}.
The new object is the standard error used to decide whether an inside-support coordinate $\hat\omega_{0,ij}$ is stochastically close to zero.
Because the observations are temporally dependent unlike in \citet{le2023universally}, this standard error must be defined through a long-run variance.
Note that the local inverse precision estimation error can be written as
\[
\hat\omega_{0,ij}-\omega_{0,ij} = \frac{1}{n} \sum_{t=1}^n \mathcal X^\omega_{t,ij} + R^\omega_{n,ij}.
\]
Here, the local-inverse score process is defined as
\begin{equation*}
\mathcal X^\omega_{t,ij} = \bm v_{0,ij}^{\top} ( X_tX_t^{\top}-\Sigma ) \bm \omega_{0,j}, \qquad i\in S_{0,j},
\end{equation*}
where $\bm v_{0,ij}=B_{0,j}(B_{0,j}^{\top}\Sigma B_{0,j})^{-1}B_{0,j}^{\top}e_i$.
Note that the residual $R^\omega_{n,ij}$ is of smaller order than the leading linear term on the same local stability event used in Proposition \ref{prop:deterministic.reduction}, using the same proof technique.
Therefore, the stochastic scale of $\hat\omega_{0,ij}$ is governed by the long-run variance of $\mathcal X^\omega_{t,ij}$, which is denoted by
\[
L^\omega_{ij}
=
\sum_{h=-\infty}^{\infty}
\cov
(
\mathcal X^\omega_{0,ij},
\mathcal X^\omega_{h,ij}
).
\]
We refer to $\tau_{ij} = \sqrt{L_{ij}^\omega / n}$ as the long-run standard error of $\hat\omega_{0,ij}$.
The estimation of this quantity is directly useful in the implementation of the method; see Section \ref{sec:ce.direct}.
The long-run variance $L^\omega_{ij}$ is useful in the subsequent proof.
\begin{condition}\label{cond:long-run.nondegeneracy}
    There exist constants $0<c_\omega<C_\omega<\infty$ such that $c_\omega \le L^\omega_{ij}\le C_\omega$ for all $1\le j\le p$ and all $i\in S_{0,j}$.
\end{condition}
This condition plays the same role as the score nondegeneracy condition for the direct statistic.
It prevents the standard error used in the consistency-empowered screening rule from degenerating.

We estimate $L^\omega_{ij}$ by the same overlapping-block principle used in the bootstrap construction.
For $i\in S_{0,j}$, define $\hat{\bm v}_{0,ij}=B_{0,j}(B_{0,j}^{\top}\hat\Sigma_n B_{0,j})^{-1}B_{0,j}^{\top}e_i$.
The plug-in local-inverse estimator is
\begin{equation}\label{eq:local.inverse.score}
    \hat{\mathcal X}^\omega_{t,ij} = \hat{\bm v}_{0,ij}^{\top} ( X_tX_t^{\top}-\hat\Sigma_n ) \hat{\bm \omega}_{0,j}.
\end{equation}
Using the same block length $l$ as in the buildup of Theorem~\ref{thm:block.bootstrap.direct}, define the block-bootstrap estimator of the long-run variance by
\begin{equation}
\label{eq:tau.hat.block}
\hat L^\omega_{ij}
=
\frac{1}{n-l+1}
\sum_{r=l}^n
\left(
l^{-1/2}
\sum_{t=r-l+1}^r
\hat{\mathcal X}^\omega_{t,ij}
\right)^2,
\end{equation}
and $\hat\tau_{ij} = \sqrt{\hat L^\omega_{ij} / n}$.
This estimator is scalar, yet it is constructed from the same block-sampling idea similar to that of $\hat F_{n,l}$ in \eqref{eq:F.hat.def}.
The following lemma verifies that this estimator is uniformly consistent for our Gaussian linear process setup.
The proof is deferred to Appendix~\ref{app:proof.tau.uniform.consistency}.

\begin{lemma}
\label{lem:tau.uniform.consistency}
Assume Conditions~\ref{cond.A}-\ref{cond:long-run.nondegeneracy}.
Let $0<\epsilon<1$, choose the block length $l$ as in Theorem~\ref{thm:block.bootstrap.direct}, and assume $\mathfrak R_{n,l}(\epsilon)\to0$.
Then
\[
\max_{1\le j\le p}
\max_{i\in S_{0,j}}
\left|
\frac{\hat\tau_{ij}}{\tau_{ij}}-1
\right|
=
o_{\P}(1).
\]
\end{lemma}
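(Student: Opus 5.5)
Since $L^\omega_{ij}\ge c_\omega$ by Condition~\ref{cond:long-run.nondegeneracy}, we have $|\hat\tau_{ij}/\tau_{ij}-1|\le|\hat L^\omega_{ij}/L^\omega_{ij}-1|\le c_\omega^{-1}|\hat L^\omega_{ij}-L^\omega_{ij}|$. The lemma therefore reduces to showing $\max_j\max_{i\in S_{0,j}}|\hat L^\omega_{ij}-L^\omega_{ij}|=o_\P(1)$. We split this difference into three pieces. The first is a plug-in error, $\hat L^\omega_{ij}-\check L^\omega_{ij}$, where $\check L^\omega_{ij}$ is the same block estimator built from the centered oracle scores $\mathcal X^\omega_{t,ij}-\bar{\mathcal X}^\omega_{n,ij}$. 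The second is a stochastic error, $\check L^\omega_{ij}-\E\check L^\omega_{ij}$, together with the centering effect. The third is a bias term, $\E[(l^{-1/2}\sum_{t=1}^l\mathcal X^\omega_{t,ij})^2]-L^\omega_{ij}$.

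\textbf{Plug-in step.} The argument copies Proposition~\ref{prop:score.perturb} with $\bm u_{ij}$ replaced by $\bm v_{0,ij}$. Both $\bm v_{0,ij}$ and $\hat{\bm v}_{0,ij}$ are columns of the local inverse, so on the event $2C_As_0|\hat\Sigma_n-\Sigma|_\infty\le1$ the resolvent identity $\hat A^{-1}-A^{-1}=-\hat A^{-1}(\hat A-A)A^{-1}$ and Condition~\ref{cond:local.inverse} give $|\hat{\bm v}_{0,ij}-\bm v_{0,ij}|_1\lesssim s_0|\hat\Sigma_n-\Sigma|_\infty$, and the same bound holds for $\hat{\bm\omega}_{0,j}$. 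It follows that, uniformly in $t,i,j$,
\[
|\hat{\mathcal X}^\omega_{t,ij}-(\mathcal X^\omega_{t,ij}-\bar{\mathcal X}^\omega_{n,ij})|\lesssim s_0|\hat\Sigma_n-\Sigma|_\infty(M_n^2+1).
\]
Let $D$ be the block sum of these differences, scaled by $l^{-1/2}$, and let $Y$ be the scaled oracle block sum. Using $|a^2-b^2|\le|a-b|^2+2|b||a-b|$, each block satisfies $|D|\le l^{1/2}s_0|\hat\Sigma_n-\Sigma|_\infty(M_n^2+1)$. Gaussianity gives $M_n^2=O_\P(\log(pn))$. Theorem~1 of \citet{zhai2026simultaneous} together with Gaussian maxima gives $|\hat\Sigma_n-\Sigma|_\infty=O_\P(\sqrt{\log(pn)/n})$; in the long-memory regime this carries at most an extra polynomial factor, which is dominated in $\mathfrak R_{n,l}$. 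The oracle block maxima are of order $\sqrt{\log(pn)}$. Putting these together, the plug-in error is $O_\P(s_0\sqrt{l/n}\log^2(pn))$, which is a term of $\mathfrak R_{n,l}(\epsilon)$ and hence $o(1)$. The event on which the reduction holds has probability at least $1-Cp^2e^{-cn/s_0^2}$, again a term of $\mathfrak R_{n,l}$.

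\textbf{Stochastic and bias steps.} This is the main obstacle, because it requires uniform long-run variance consistency under long memory. Each $\mathcal X^\omega_{t,ij}$ is a quadratic form $\bm v^\top X_tX_t^\top\bm\omega-\E[\cdot]$ with $|\bm v|_1,|\bm\omega|_1\le C$, i.e.\ a second-order Wiener chaos of the Gaussian linear process. The plan is to follow the covariance bootstrap proof in \citet{zhai2026simultaneous}, which handles exactly the entries of $\hat\Sigma_n$ that are then combined linearly with $\ell_1$-bounded weights. Those arguments control block second moments of $\operatorname{vec}(X_tX_t^\top-\Sigma)$ uniformly through a functional dependence measure bound of order $l^{-(2\beta-1)}$-type, which drives $\tilde\beta$.

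For the bias, the autocovariance of the score process decays like $|k|^{-\tilde\beta-1}$-type under Condition~\ref{cond.A} with $\beta>3/4$, so the autocovariances are summable. The bias $\sum_{|k|<l}(|k|/l)\gamma(k)+\sum_{|k|\ge l}\gamma(k)$ then tends to $0$ uniformly at a polynomial rate in $l$.

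For the stochastic part, hypercontractivity of second-chaos variables bounds all moments of $Y^2$ by those of $Y$. We then bound $\var(\check L^\omega_{ij})\lesssim (l/n)^{\tilde\beta'}$ for some $\tilde\beta'>0$ using the fourth-cumulant expansion of Gaussian quadratic forms. The centering contributes $l\bar{\mathcal X}^{\omega2}_{n,ij}=O_\P(l\log p/n)$. Finally, a $q$-th moment bound with $q\asymp\log p$ and a union bound over at most $ps_0$ pairs $(i,j)$ give uniformity, since $l\asymp n^\phi$ with $\phi<1$ leaves polynomial slack against $\log p$ whenever $\mathfrak R_{n,l}(\epsilon)\to0$.

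The delicate point is that, under long memory, the variance of the block estimator decays only polynomially and the union bound must survive. If the direct moment route is too lossy, I would instead invoke the bootstrap Kolmogorov-distance result from the proof of Theorem~\ref{thm:block.bootstrap.direct}, applied to the score vector $(\mathcal X^\omega_{t,ij})$. That result would give uniform convergence of the block-sum second moments through uniform integrability of Gaussian tails.
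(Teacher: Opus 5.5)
Your architecture matches the paper's. The reduction to $\max_j\max_{i\in S_{0,j}}|\hat L^\omega_{ij}-L^\omega_{ij}|=o_\P(1)$ via $|\sqrt x-1|\le|x-1|$ and $L^\omega_{ij}\ge c_\omega$ is fine, and your plug-in, centering and bias pieces are the ones the paper treats.

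The gap is the stochastic term. You flag it as the main obstacle and then stop at $\var(\check L^\omega_{ij})\lesssim(l/n)^{\tilde\beta'}$ for an unspecified $\tilde\beta'>0$, which does not close the argument. Put $Z^0_{r,ij}=l^{-1/2}\sum_{t=r-l+1}^r\mathcal X^\omega_{t,ij}$ and $W_{r,ij}=(Z^0_{r,ij})^2-\E(Z^0_{r,ij})^2$. The average $\frac{1}{n-l+1}\sum_{r=l}^n W_{r,ij}$ lies in Wiener chaos of order at most four, so hypercontractivity gives $\|\cdot\|_q\le Cq^2\|\cdot\|_2$. The union bound over $ps_0$ coordinates forces $q\asymp\log(pn)$, so your uniform bound is $\log^2(pn)(l/n)^{\tilde\beta'/2}$. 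The relevant content of $\mathfrak R_{n,l}(\epsilon)\to0$ is only $s_0\sqrt{l/n}\log^2(pn)\to0$, with $\log(pn)$ allowed to grow like a fixed small power of $n$. Hence for $\tilde\beta'<1$ your bound need not vanish in the ultra-high-dimensional regime the lemma covers, and the ``polynomial slack'' you invoke is not implied.

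What the paper supplies is the sharp rate $\tilde\beta'=1$. It shows $\sum_{s=l}^n|\cov(W_{r,ij},W_{s,ij})|\le Cl$ uniformly in $r$ and $(i,j)$:
the $O(l)$ blocks overlapping or within distance $l$ of block $r$ contribute $O(1)$ each;
for distant blocks, Isserlis' formula writes $\cov(W_r,W_s)=2\cov(Z^0_r,Z^0_s)^2+\mathrm{cum}(Z^0_r,Z^0_r,Z^0_s,Z^0_s)$ as products of autocovariances that are summable in the separation because $\beta>3/4$.
Hence $\var\le Cl/n$, the stochastic error is $O_\P(\sqrt{l/n}\log^2(pn))$, and this is absorbed by the last term of $\mathfrak R_{n,l}(\epsilon)$.

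Your fallback does not replace this computation, for three reasons. Theorem~\ref{thm:block.bootstrap.direct} only controls the law of $|\hat B_{r,l}|_\infty$, so it says nothing about individual coordinates' empirical second moments. A coordinatewise version holds only with probability $1-3\mathfrak R_{n,l}(\epsilon)^\epsilon$ each, which cannot be union-bounded over $ps_0$ coordinates. And Kolmogorov closeness transfers second moments only under uniform integrability of the \emph{empirical} block distribution, not of the Gaussian limit.

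Three smaller points.

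First, in the plug-in step you use a $\sqrt{\log(pn)}$ bound on the maximum over $r$ and $(i,j)$ of the oracle block sums. For second-chaos block sums, hypercontractivity alone only gives $\log(pn)$, which leaves $s_0\sqrt{l/n}\log^{5/2}(pn)$ rather than a term of $\mathfrak R_{n,l}(\epsilon)$; a Hanson--Wright argument with operator-norm control would be needed. The paper sidesteps this by doing the oracle step first and then applying Cauchy--Schwarz, $|\hat L^\omega_{ij}-\check L^\omega_{ij}|\le2(\check L^\omega_{ij})^{1/2}\rho_{ij}^{1/2}+\rho_{ij}$. Here $\rho_{ij}=\frac{1}{n-l+1}\sum_r(\hat Z_{r,ij}-Z_{r,ij})^2$, with $\hat Z_{r,ij}$ and $Z_{r,ij}$ the scaled plug-in and centred oracle block sums. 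This needs only $\check L^\omega_{ij}=O_\P(1)$ uniformly.

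Second, for $\beta>3/4$ there is no extra polynomial factor in $|\hat\Sigma_n-\Sigma|_\infty$, since Theorem 1 of \citet{zhai2026simultaneous} is at the $\sqrt n$ scale. Were there one, it would \emph{not} be dominated by $\mathfrak R_{n,l}(\epsilon)$.

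Third, the centering also produces the cross term $-2\sqrt l\,\bar{\mathcal X}^\omega_{n,ij}\frac{1}{n-l+1}\sum_r Z^0_{r,ij}$ besides $l(\bar{\mathcal X}^\omega_{n,ij})^2$. It is harmless but must be bounded, for example by Cauchy--Schwarz as in the paper.
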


The following lemma records the screening behavior of the consistency-empowered correction.
It is the time-series analogue of the zero-versus-nonzero separation used in \citet{le2023universally}.
The proof is deferred to Appendix~\ref{app:proof.local.inverse.screening}.

\begin{lemma}
\label{lem:local.inverse.screening}
Assume Conditions~\ref{cond.A}-\ref{cond:long-run.nondegeneracy}.
Let $\delta_n=\sqrt{\log n}$.
Under $H_0:E=E_0$, assume that the nonzero coordinates $\omega_{0,ij}$ with $i\in S_{0,j}$ are uniformly bounded away from zero.
Then
\[
\P_{H_0}
\left(
\hat{\bm \Delta}_j=0
\text{ for all }
1\le j\le p
\right)
\to1.
\]
Under every fixed included alternative $E\subsetneq E_0$,
\[
\P
\left(
\hat{\bm \Delta}_j\ne0
\text{ for some }
1\le j\le p
\right)
\to1.
\]
\end{lemma}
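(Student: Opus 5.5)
The plan is to compare the screening ratio $|\hat\omega_{0,ij}|/\hat\tau_{ij}$ with the threshold $\delta_n=\sqrt{\log n}$, separately under $H_0$ and under a fixed included alternative. Three facts drive the argument. First, Lemma~\ref{lem:tau.uniform.consistency} gives $\hat\tau_{ij}=\tau_{ij}(1+o_\P(1))$ uniformly over $j$ and $i\in S_{0,j}$. Second, Condition~\ref{cond:long-run.nondegeneracy} gives $\tau_{ij}\asymp n^{-1/2}$. Third, we need a uniform bound on the estimation error of the local inverse.

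\textbf{Step 1 (uniform error bound).} On the event $2C_As_0|\hat\Sigma_n-\Sigma|_\infty\le1$, I expand the local inverse as in the proof of Proposition~\ref{prop:deterministic.reduction}, via a Neumann series for $(B_{0,j}^\top\hat\Sigma_nB_{0,j})^{-1}$. This gives
\[
\max_j|\hat{\bm\omega}_{0,j}-\bar{\bm\omega}_{0,j}|_\infty\le C s_0|\hat\Sigma_n-\Sigma|_\infty.
\]
Here $\bar{\bm\omega}_{0,j}=B_{0,j}(B_{0,j}^\top\Sigma B_{0,j})^{-1}B_{0,j}^\top e_j$ is the population local inverse. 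Under $H_0$ it equals $\bm\omega_{0,j}$; in general it is defined through the $E_0$ support. Condition~\ref{cond:local.inverse} and $|\Sigma|_1\le C_\Sigma$ control the constants. Theorem~1 of \citet{zhai2026simultaneous}, the Gaussian approximation displayed before Theorem~\ref{thm:ga.direct}, together with a Gaussian maximal inequality for $|Z|_\infty$, gives $\sqrt n|\hat\Sigma_n-\Sigma|_\infty=O_\P(\sqrt{\log(pn)})$. The same estimate makes the stability event hold with probability tending to one, since the term $p^2e^{-cn/s_0^2}$ is part of $\mathfrak R_{n,l}(\epsilon)\to0$. Hence
\[
\max_j\max_{i\in S_{0,j}}\sqrt n|\hat\omega_{0,ij}-\bar\omega_{0,ij}|=O_\P\big(s_0\sqrt{\log(pn)}\big).
\]
By the rate conditions in $\mathfrak R_{n,l}(\epsilon)$, in particular $s_0^2\log^{3/2}(pn)/\sqrt n\to0$, this quantity is $o_\P(\sqrt n)$.

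\textbf{Step 2 ($H_0$).} Every $i\in S_{0,j}$ has $|\omega_{0,ij}|\ge c>0$ by assumption. By Step 1, $|\hat\omega_{0,ij}|\ge c/2$ uniformly with probability tending to one. By Lemma~\ref{lem:tau.uniform.consistency}, $\hat\tau_{ij}\le 2\sqrt{C_\omega/n}$ uniformly with probability tending to one. So the ratio is at least $c\sqrt n/(4\sqrt{C_\omega})$, which exceeds $\sqrt{\log n}$ for large $n$. Hence no indicator fires and $\hat{\bm\Delta}_j=0$ for all $j$.

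\textbf{Step 3 (included alternative).} Since $E\subsetneq E_0$, the true $\bm\omega_j$ is supported on $S_j\subseteq S_{0,j}$ and satisfies $\Sigma\bm\omega_j=e_j$. It therefore solves the restricted system, so $\bar{\bm\omega}_{0,j}=\bm\omega_j$, and there is a fixed pair $(i^*,j^*)$ with $i^*\in S_{0,j^*}$ and $\bar\omega_{0,i^*j^*}=0$. For this single coordinate I avoid the crude uniform rate and use the linearization of Appendix~\ref{app:ce.direct.details}:
\[
\sqrt n\,\hat\omega_{0,i^*j^*}=n^{-1/2}\sum_t\mathcal X^\omega_{t,i^*j^*}+\sqrt n R^\omega_{n,i^*j^*}.
\]
The remainder is $O_\P(\sqrt n s_0^2|\hat\Sigma_n-\Sigma|_\infty^2)=o_\P(1)$, as in Proposition~\ref{prop:deterministic.reduction}. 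The leading term is $O_\P(1)$ because its variance converges to $L^\omega_{i^*j^*}\le C_\omega$; under long memory with $\beta>3/4$, the partial-sum variance of the quadratic score is still of order $n$. Combining with $\sqrt n\hat\tau_{i^*j^*}\to\sqrt{L^\omega_{i^*j^*}}\ge\sqrt{c_\omega}$, the ratio $|\hat\omega_{0,i^*j^*}|/\hat\tau_{i^*j^*}$ is $O_\P(1)$. It is therefore eventually below $\sqrt{\log n}\to\infty$, so $\hat\Delta_{i^*j^*}=C_n\neq0$ with probability tending to one.

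\textbf{Main obstacle.} The delicate point is Step 3. There I must justify that $\sum_t\mathcal X^\omega_{t,i^*j^*}$ has variance $O(n)$ under long memory, not just that its long-run series $L^\omega$ is finite. I would show this via the autocovariance bound $|\cov(\mathcal X^\omega_0,\mathcal X^\omega_k)|\lesssim k^{2(1-2\beta)}$, which is summable for $\beta>3/4$. I also need that Lemma~\ref{lem:tau.uniform.consistency} applies under the alternative. This requires noting that its proof uses only the process conditions and the population local inverse, not the truth of $H_0$.
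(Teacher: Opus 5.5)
Your proof is correct, and its overall structure matches the paper's. In both, the ratio $|\hat\omega_{0,ij}|/\hat\tau_{ij}$ is compared with $\sqrt{\log n}$, using $\tau_{ij}\asymp n^{-1/2}$ and Lemma~\ref{lem:tau.uniform.consistency}. Under $H_0$ a fixed signal beats the $\sqrt n$ scaling. Under $E\subsetneq E_0$ the restricted population inverse equals the true precision column, which exposes a zero coordinate whose standardized estimate is $O_\P(1)$.

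The one real difference is how you bound $\hat\omega_{0,ij}-\omega_{0,ij}$ uniformly under $H_0$.
\begin{itemize}
\item The paper invokes, by analogy with Theorem~\ref{thm:ga.direct}, a Gaussian approximation for the whole local-inverse score vector. This yields $\max_{j}\max_{i\in S_{0,j}}|\hat\omega_{0,ij}-\omega_{0,ij}|/\tau_{ij}=O_\P(\sqrt{\log p})$.
\item You reuse the deterministic inverse-perturbation bound $|\hat{\bm\omega}_{0,j}-\bar{\bm\omega}_{0,j}|_1\le Cs_0|\hat\Sigma_n-\Sigma|_\infty$ together with $\sqrt n|\hat\Sigma_n-\Sigma|_\infty=O_\P(\sqrt{\log(pn)})$. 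After standardization this gives only $O_\P(s_0\sqrt{\log(pn)})$.
\end{itemize}
Because the signal is bounded away from zero by a fixed constant, your cruder rate is enough once $s_0\sqrt{\log(pn)/n}\to0$. You correctly obtain this from the term $s_0^2\log^{3/2}(pn)/\sqrt n$ in $\mathfrak R_{n,l}(\epsilon)$.

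What each route buys:
\begin{itemize}
\item Your route is more elementary and self-contained. It uses only results the paper actually proves, with no new Gaussian approximation requiring its own nondegeneracy and remainder bookkeeping.
\item The paper's sharper bound matters in the shrinking-signal regime of the Remark. There it would give a beta-min requirement of order $\sqrt{\log(pn)/n}$, whereas yours would need $s_0\sqrt{\log(pn)/n}$.
\end{itemize}

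Two of your checks fill in steps the paper leaves implicit. You verify that the partial-sum variance of $\mathcal X^\omega_{t,i^*j^*}$ is $O(n)$ when $\beta>3/4$. You also note that Lemma~\ref{lem:tau.uniform.consistency} does not use $H_0$, so it applies under the alternative.
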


We can now state the formal version of Theorem~\ref{thm:ce.direct}.
The theorem uses the fixed consistency-empowerment levels $\delta_n=\sqrt{\log n}$ and $C_n=C_\Delta\sqrt{\log p}$ with any fixed $C_\Delta>0$.
It ensures that under certain conditions, the consistency-empowered statistic is equal to the direct test statistic in probability, yet unlike the latter, it also ensures that the test power converge to one under fixed alternatives.
We have redirected the proof of this theorem to Appendix \ref{app:proof.ce.direct.formal}.

\begin{theorem}
\label{thm:ce.direct.formal}
Assume Conditions~\ref{cond.A}-\ref{cond:long-run.nondegeneracy} under $H_0$.
Let $0<\epsilon<1$, choose the block length $l$ as in Theorem~\ref{thm:block.bootstrap.direct}, and assume $\mathfrak R_{n,l}(\epsilon)\to0$.
Set $\delta_n=\sqrt{\log n}$ and $C_n=C_\Delta\sqrt{\log p}$ for any fixed constant $C_\Delta>0$.
\begin{enumerate}
    \item Under the null hypothesis $H_0:E=E_0$, assume additionally that the nonzero coordinates $\omega_{0,ij}$ with $i\in S_{0,j}$ are uniformly bounded away from zero by a fixed constant.
    Then
    \[
    \P_{H_0} (\tilde T_n=\hat T_n) \to 1.
    \]
    Consequently, Theorems~\ref{thm:ga.direct} and~\ref{thm:block.bootstrap.direct} remain valid with $\tilde T_n$ in place of $\hat T_n$, and we have $\P_{H_0} ( \tilde T_n>\hat q_{1-\alpha} ) \to \alpha$.
    \item Furthermore, under any fixed alternative $H_1:E\ne E_0$ for which Conditions \ref{cond.A}-\ref{cond:local.inverse} and Condition \ref{cond:long-run.nondegeneracy} hold with the alternative population covariance and precision matrices, and for which $\mathfrak R_{n,l}(\epsilon)\to0$, we have the asymptotic guarantee on test power,
    \[
    \P_{H_1}(\tilde T_n>\hat q_{1-\alpha})\to1.
    \]
\end{enumerate}
\end{theorem}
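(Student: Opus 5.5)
Part 1 rests on Lemma~\ref{lem:local.inverse.screening}. Under $H_0$ with nonzero $\omega_{0,ij}$ bounded away from zero, that lemma gives $\hat{\bm \Delta}_j=0$ for all $j$ with probability tending to one. On this event $\tilde{\bm \omega}_{0,j}=\hat{\bm \omega}_{0,j}$, so $\tilde T_n=\sqrt n\max_{i,j}|e_i^\top\hat\Sigma_n\hat{\bm \omega}_{0,j}-\one{i=j}|$. I then check that the entries with $i\in S_{0,j}$ vanish identically. Write $\hat A_j=B_{0,j}^\top\hat\Sigma_nB_{0,j}$, which is invertible on the event $2C_As_0|\hat\Sigma_n-\Sigma|_\infty\le1$ (Weyl-type perturbation of Condition~\ref{cond:local.inverse}). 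Then
\[
B_{0,j}^\top\hat\Sigma_n\hat{\bm \omega}_{0,j}=\hat A_j\hat A_j^{-1}B_{0,j}^\top e_j=B_{0,j}^\top e_j .
\]
Hence $e_i^\top\hat\Sigma_n\hat{\bm \omega}_{0,j}=\one{i=j}$ for $i\in S_{0,j}$, and the maximum reduces to $\mathcal J_0$, giving $\tilde T_n=\hat T_n$. The probability of that event is at least $1-Cp^2e^{-cn/s_0^2}$, which tends to zero since it is dominated by $\mathfrak R_{n,l}(\epsilon)$. Therefore $\P_{H_0}(\tilde T_n=\hat T_n)\to1$.

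To transfer Theorems~\ref{thm:ga.direct} and~\ref{thm:block.bootstrap.direct}, I would use
\[
|\P(\tilde T_n\le u)-\P(\hat T_n\le u)|\le\P(\tilde T_n\ne\hat T_n)
\]
uniformly in $u$. For the size statement, Theorem~\ref{thm:block.bootstrap.direct} and Theorem~\ref{thm:ga.direct} give that $\hat q_{1-\alpha}$ is close to the $(1-\alpha)$-quantile of $|G|_\infty$. That distribution is continuous, with anti-concentration supplied by Condition~\ref{cond:score.nondegenerate}. Combining gives $\P(\tilde T_n>\hat q_{1-\alpha})\to\alpha$.

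For Part 2 I split the fixed alternative into two cases. \emph{Case (a):} some $(i,j)\in\mathcal J_0$ has $\omega_{ij}\ne0$, i.e.\ $E\not\subseteq E_0$. \emph{Case (b):} $E\subsetneq E_0$.

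In both cases I first bound $\hat q_{1-\alpha}=O_\P(\sqrt{\log p})$. Under the alternative, Proposition~\ref{prop:score.perturb} still controls the plug-in scores, since it is deterministic given Conditions~\ref{cond:Sigma.Omega.1norm} and \ref{cond:local.inverse}. It bounds $|\hat{\mathcal X}_t|_\infty$ by products of $|X_t|_\infty^2$ with bounded $\ell_1$ norms. A Gaussian maximal inequality for block sums of the quadratic forms then gives $l^{-1/2}|\hat B_{r,l}|_\infty=O_\P(\sqrt{\log(pn)})$ uniformly in $r$, using that long-memory block sums still have variance $O(l)$ in the regime $\beta>3/4$ for these centered quadratics. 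Making this bound hold uniformly under the alternative, where the centering of $\hat{\mathcal X}_t$ is no longer the null score, is the step I expect to be the main obstacle. The plan is to use only crude moment and tail bounds and not a distributional approximation.

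\emph{Case (a).} The estimator $\hat{\bm \omega}_{0,j}\to\bm w_j:=B_{0,j}(B_{0,j}^\top\Sigma B_{0,j})^{-1}B_{0,j}^\top e_j$ at rate $O_\P(s_0\sqrt{\log p/n})$. Then $e_i^\top\Sigma\bm w_j\ne0$ for some $(i,j)\in\mathcal J_0$, since otherwise $\Sigma\bm w_j=e_j$ and $\bm w_j=\bm\omega_j$ would be supported on $S_{0,j}$. So $\hat T_n\gtrsim\sqrt n$. Moreover $\tilde T_n\ge\hat T_n-\sqrt n\,|\hat\Sigma_n\hat{\bm\Delta}_j|_\infty$, and if the pump interferes one works instead with the pumped inside-support rows, which are themselves of size at least $\sqrt n\,C_n$ times a constant.

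\emph{Case (b).} Lemma~\ref{lem:local.inverse.screening} gives some $\hat\Delta_{kj}=C_n$ with probability tending to one. Consider the coordinates $i\in S_{0,j}$:
\[
B_{0,j}^\top(\hat\Sigma_n\tilde{\bm\omega}_{0,j}-e_j)=\hat A_jB_{0,j}^\top\hat{\bm\Delta}_j .
\]
Its sup norm is at least $C_n\lambda$, where $\lambda$ is the smallest absolute diagonal-dominated entry. More simply, the $k$-th coordinate is at least $C_n(\hat\Sigma_{kk}-\text{off terms})$. A cleaner route is to note $|\hat A_j v|_\infty\ge|v|_\infty/|\hat A_j^{-1}|_1\ge C_n/(2C_A)$. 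Thus $\tilde T_n\ge\sqrt n\,C_\Delta\sqrt{\log p}/(2C_A)$, which dominates $\hat q_{1-\alpha}=O_\P(\sqrt{\log p})$. Rejection therefore has probability tending to one.
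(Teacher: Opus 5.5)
Your proposal is correct and follows essentially the same route as the paper. For part 1 you use Lemma~\ref{lem:local.inverse.screening} together with the sample normal equations. For part 2 you use the same split into $E\subsetneq E_0$ and $E\not\subseteq E_0$, including the $d_0>0$ argument and the pumped/unpumped dichotomy, the pump lower bound $\sqrt n\,C_n/(2C_A)$ on the local-stability event, and $\hat q_{1-\alpha}=O_{\P}(\sqrt{\log(pn)})$ via Proposition~\ref{prop:score.perturb}.

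The only variation is that you bound $|\hat A_j x_j|_\infty\ge |x_j|_\infty/|\hat A_j^{-1}|_1$ instead of going through $\lambda_{\min}(\hat A_j)$ and a Rayleigh quotient. This is slightly cleaner and gives the same constant. Also, the centering issue you flag as the main obstacle does not arise. The oracle score $\bm u_{ij}^\top(X_tX_t^\top-\Sigma)\bm\omega_{0,j}$ is mean zero under any hypothesis, and Proposition~\ref{prop:score.perturb} never uses $H_0$.
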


\begin{remark}
The signal condition is only needed to prevent true nonzero precision coordinates from being mistakenly pumped under the null.
If, alternatively, the nonzero entries are allowed to shrink with $n$, then an explicit beta-min condition will be necessary.
Here ``fixed alternative'' means that the population graph and covariance structure do not approach the null with $n$.
\end{remark}

\section{Additional Experiment Details}\label{app:exp.details}

\subsection{Generating Copies of Time Series using Fast Fourier Transformation}\label{app:fft}

The simulation studies require many repeated realizations of the Gaussian linear process $X_i=\sum_{t=0}^\infty A_t\epsilon_{i-t}$.
A direct implementation is computationally expensive because each observation involves an infinite summation.
Following the FFT-based simulation scheme summarized in \citet{zhai2026simultaneous}, we first replace the process by the truncated version
\[
X_i^\ast=\sum_{t=0}^{N-1} A_t\epsilon_{i-t}.
\]
The truncation length $N$ needs to be large when the process has long memory, since the coefficient matrices $A_t$ decay slowly.
In all simulations, we take $N=n^2$ and write $K_{N,n}=\lfloor N/n\rfloor$.
The key computational idea is to generate one long circular convolution and then divide it into $K_{N,n}$ consecutive length-$n$ segments, which gives many approximate copies of $\bm{X}_n^\ast=[X_1^\ast,X_2^\ast,\ldots,X_n^\ast]^\top$.

We first describe the univariate case.
Suppose $p=d=1$ and $X_i^\ast=\sum_{t=0}^{N-1}a_t\epsilon_{i-t}$.
Let
\[
\bm{a}_N=[a_1,a_2,\ldots,a_{N-1},a_0]^\top,
\qquad
\bm{e}_N=[\epsilon_n,\epsilon_{n-1},\ldots,\epsilon_{n-N+1}]^\top.
\]
The circulant embedding approximation is
\[
\bm{W}_N=[W_1,W_2,\ldots,W_N]^\top
=\operatorname{Circ}(\bm{a}_N)\bm{e}_N,
\]
where $\operatorname{Circ}(\bm{a}_N)$ denotes the circulant matrix generated from $\bm{a}_N$.
Since circulant matrices are diagonalized by the discrete Fourier transform, this multiplication can be computed by
\[
\bm{b}_N=\operatorname{FFT}(\bm{a}_N),
\qquad
\bm{f}_N=\operatorname{FFT}(\bm{e}_N),
\qquad
\bm{W}_N=\operatorname{FFT}^{-1}(\bm{p}_N),
\]
where $\bm{p}_N=\bm{b}_N\circ \bm{f}_N$ and $\circ$ denotes entrywise multiplication.
The vector $\bm{W}_N$ is then partitioned into consecutive blocks
\[
\bm{W}^{(k)}=[W_{(k-1)n+1},W_{(k-1)n+2},\ldots,W_{kn}]^\top,
\qquad
1\leq k\leq K_{N,n}.
\]
Each $\bm{W}^{(k)}$ is treated as one simulated realization of $\bm{X}_n^\ast$.
These blocks are not exactly independent because they are extracted from the same long circular convolution, but the resulting dependence is weak enough for the Monte Carlo approximation used in the simulation study.

The multidimensional case is handled coordinate-by-coordinate while using the same innovation sequence, which preserves the cross-sectional dependence induced by the coefficient matrices.
For $X_i^\ast\in\R^p$, $A_t\in\R^{p\times d}$, and $\epsilon_t\in\R^d$, first generate
\[
\bm{e}_N=[\epsilon_n^\top,\epsilon_{n-1}^\top,\ldots,\epsilon_{n-N+1}^\top]^\top\in\R^{Nd},
\qquad
\bm{f}_N=\operatorname{FFT}(\bm{e}_N).
\]
For the $j$-th coordinate, define
\[
\bm{a}_{N,j}=\left[(A_1)_{j\cdot},(A_2)_{j\cdot},\ldots,(A_{N-1})_{j\cdot},(A_0)_{j\cdot}\right]^\top\in\R^{Nd},
\qquad
\bm{A}^j=\operatorname{Circ}(\bm{a}_{N,j}).
\]
We compute
\[
\bm{b}_{N,j}=\operatorname{FFT}(\bm{a}_{N,j}),
\qquad
\bm{p}_{N,j}=\bm{b}_{N,j}\circ\bm{f}_N,
\qquad
\bm{Y}^j=\operatorname{FFT}^{-1}(\bm{p}_{N,j})\in\R^{Nd}.
\]
The simulated value of the $j$-th coordinate at time $i$ is then read from the first coordinate of the corresponding $d$-dimensional block, namely $W_{ij}=(\bm{Y}^j)_{(i-1)d+1}$.
Repeating this operation for $j=1,\ldots,p$ fills the full simulated array.
Finally, for $1\leq k\leq K_{N,n}$, we form
\[
\bm{W}^{(k)}=
\left(W_{ij}\right)_{(k-1)n+1\leq i\leq kn,\ 1\leq j\leq p}\in\R^{n\times p}.
\]
We treat $\bm{W}^{(1)},\bm{W}^{(2)},\ldots,\bm{W}^{(K_{N,n})}$ as $K_{N,n}$ approximate copies of $\bm{X}_n^\ast$ and use the first 200 available blocks in the experiments.
This FFT construction is the reason that the high-dimensional simulation regimes in Section~\ref{sec:simulation} remain computationally feasible.

\subsection{Additional Null Test Results}\label{app:null.exp.details}

\begin{figure}[H]
    \centering
    \includegraphics[width=0.8\linewidth]{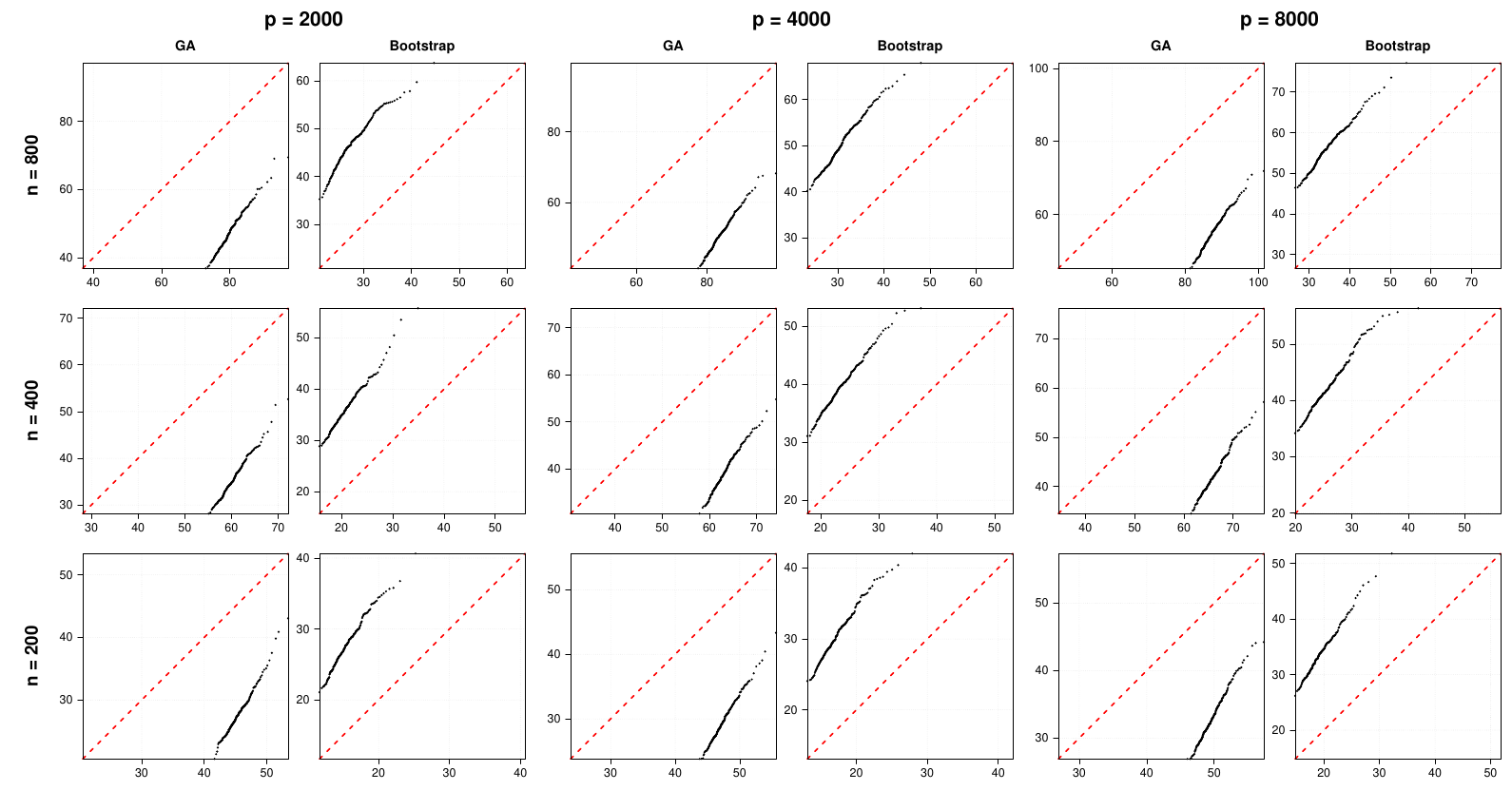}
    \caption{QQ-plots for null tests under ultra-long-memory, $\beta=0.55$, which exceeds the range in Condition \ref{cond.A}.
    The $x$-axis is $|G|_\infty$ for GA and $l^{-1/2}|\hat B_{r,l}|_\infty$ for bootstrap.
    The $y$-axis is $\hat T_n$ for both.
    The red dashed line is $y=x$.
    Closer to this line indicates better proximity.}
    \label{fig:qq.beta0.55}
\end{figure}

\begin{table}[H]
\centering
\resizebox{\textwidth}{!}{%
\begin{tabular}{llccccccccc}
\toprule
& & \multicolumn{3}{c}{$\beta = 2$}
& \multicolumn{3}{c}{$\beta = 0.9$}
& \multicolumn{3}{c}{$\beta = 0.55$} \\
\cmidrule(lr){3-5}
\cmidrule(lr){6-8}
\cmidrule(lr){9-11}
& $n$
& $p = 2000$ & $p = 4000$ & $p = 8000$
& $p = 2000$ & $p = 4000$ & $p = 8000$
& $p = 2000$ & $p = 4000$ & $p = 8000$ \\
\midrule
\multirow{3}{*}{GA}
& 800 & 0.20 & 0.24 & 0.26 & 0.36 & 0.49 & 0.58 & 1.00 & 1.00 & 1.00 \\
& 400 & 0.36 & 0.36 & 0.43 & 0.15 & 0.18 & 0.25 & 1.00 & 1.00 & 1.00 \\
& 200 & 0.52 & 0.56 & 0.67 & 0.43 & 0.32 & 0.20 & 1.00 & 1.00 & 1.00 \\
\midrule
\multirow{3}{*}{Bootstrap}
& 800 & 0.06 & 0.14 & 0.22 & 0.10 & 0.10 & 0.14 & 0.96 & 0.97 & 0.98 \\
& 400 & 0.10 & 0.08 & 0.16 & 0.57 & 0.52 & 0.47 & 0.98 & 0.97 & 0.98 \\
& 200 & 0.35 & 0.26 & 0.30 & 0.84 & 0.83 & 0.85 & 0.98 & 0.98 & 0.98 \\
\bottomrule
\end{tabular}%
}
\caption{
Kolmogorov distance for the statistic $\hat T_n$ under $H_0$.
The GA section compares $\hat T_n$ against its finite-sample Gaussian approximation $|G|_\infty$, while the Bootstrap section compares $\hat T_n$ against its overlapping block bootstrap approximation $l^{-1/2}|\hat B_{r,l}|_\infty$.
Range of the Kolmogorov distance is $[0,1]$.
A smaller value indicates a closer approximation.
}
\label{tab:ks-distance-high-dimensional}
\end{table}

\begin{table}[H]
\centering
\resizebox{\textwidth}{!}{%
\begin{tabular}{llccccccccc}
\toprule
& & \multicolumn{3}{c}{$\beta = 2$}
& \multicolumn{3}{c}{$\beta = 0.9$}
& \multicolumn{3}{c}{$\beta = 0.55$} \\
\cmidrule(lr){3-5}
\cmidrule(lr){6-8}
\cmidrule(lr){9-11}
& $n$
& $p = 2000$ & $p = 4000$ & $p = 8000$
& $p = 2000$ & $p = 4000$ & $p = 8000$
& $p = 2000$ & $p = 4000$ & $p = 8000$ \\
\midrule
\multirow{3}{*}{GA}
& 800 & 0.11 & 0.14 & 0.13 & 0.76 & 0.92 & 1.10 & 32.68 & 32.96 & 32.45 \\
& 400 & 0.20 & 0.19 & 0.23 & 0.30 & 0.26 & 0.38 & 24.96 & 23.97 & 23.79 \\
& 200 & 0.36 & 0.41 & 0.44 & 0.62 & 0.34 & 0.29 & 17.76 & 17.84 & 16.78 \\
\midrule
\multirow{3}{*}{Bootstrap}
& 800 & 0.04 & 0.07 & 0.13 & 0.15 & 0.24 & 0.30 & 18.90 & 19.28 & 21.42 \\
& 400 & 0.05 & 0.04 & 0.09 & 1.16 & 1.19 & 1.04 & 15.08 & 15.86 & 16.54 \\
& 200 & 0.22 & 0.14 & 0.16 & 2.09 & 2.10 & 2.26 & 12.02 & 13.06 & 14.25 \\
\bottomrule
\end{tabular}%
}
\caption{
Wasserstein-$1$ distance for the statistic $\hat T_n$ under $H_0$.
The GA section compares $\hat T_n$ against its finite-sample Gaussian approximation $|G|_\infty$, while the Bootstrap section compares $\hat T_n$ against its overlapping block bootstrap approximation $l^{-1/2}|\hat B_{r,l}|_\infty$.
Range of the W1-distance is $[0,\infty)$.
A smaller value indicates a closer approximation.
}
\label{tab:w1-distance-high-dimensional}
\end{table}

\subsection{Sherlock fMRI Experiment Result}\label{app:sherlock}

\begin{table}[H]
\centering
\caption{Movie self-selected graph and recall containment test results at $\alpha=0.05$.
Here $\hat\nu$ is the movie-phase threshold used to construct the selected graph, $d_i$ is the degree of node $i$ in the selected movie graph, $q_{0.25}(s)$ and $q_{0.75}(s)$ are the empirical 25\% and 75\% quartiles of the node degrees, and $\bar s$ is the average column-wise support size, with $s_i=d_i+1$.
The column $p_{\mathrm{mov}}$ is the movie self-diagnostic p-value used for graph selection.
The column $p_{\mathrm{rec}}$ is the raw recall-phase containment-test p-value.
The column $p_{\mathrm{rec}}^{\mathrm{Holm}}$ is the recall-phase p-value after Holm correction across the 17 subjects.}
\label{tab:movie_recall_holm_subjects}
\begin{tabular}{rrrrrrrr}
\toprule
Subject & $\hat\nu$ & $q_{0.25}(s)$ & $\bar s$ & $q_{0.75}(s)$ & $p_{\mathrm{mov}}$ & $p_{\mathrm{rec}}$ & $p_{\mathrm{rec}}^{\mathrm{Holm}}$ \\
\midrule
 1 & 0.1956 & 22.5 & 35.9 & 42.5 & 0.0692 & 0.00217 & 0.0145 \\
 2 & 0.1567 & 26.0 & 37.8 & 44.5 & 0.0648 & 0.00121 & 0.0145 \\
 3 & 0.1089 & 53.0 & 63.5 & 71.0 & 0.0516 & 0.00157 & 0.0145 \\
 4 & 0.1219 & 36.0 & 45.5 & 52.0 & 0.0835 & 0.00262 & 0.0145 \\
 5 & 0.1083 & 47.0 & 56.3 & 62.0 & 0.0549 & 0.00196 & 0.0145 \\
 6 & 0.1511 & 23.0 & 32.8 & 39.0 & 0.0643 & 0.00126 & 0.0145 \\
 7 & 0.1340 & 47.5 & 59.5 & 68.0 & 0.0511 & 0.00166 & 0.0145 \\
 8 & 0.1242 & 39.0 & 48.1 & 53.0 & 0.0582 & 0.00110 & 0.0143 \\
 9 & 0.1590 & 19.0 & 29.0 & 34.0 & 0.0725 & 0.00162 & 0.0145 \\
10 & 0.1129 & 45.0 & 55.0 & 61.0 & 0.1153 & 0.00132 & 0.0145 \\
11 & 0.1448 & 33.0 & 44.9 & 53.8 & 0.0637 & 0.00217 & 0.0145 \\
12 & 0.1302 & 39.0 & 50.9 & 59.0 & 0.0571 & 0.000899 & 0.0129 \\
13 & 0.1307 & 40.0 & 52.2 & 58.0 & 0.2202 & 0.000596 & 0.0101 \\
14 & 0.1257 & 33.0 & 41.9 & 46.0 & 0.0670 & 0.000861 & 0.0129 \\
15 & 0.1456 & 24.0 & 33.7 & 39.0 & 0.0544 & 0.00135 & 0.0145 \\
16 & 0.1729 & 31.0 & 43.8 & 52.0 & 0.0511 & 0.00128 & 0.0145 \\
17 & 0.1322 & 33.0 & 44.6 & 50.0 & 0.0577 & 0.000760 & 0.0122 \\
\bottomrule
\end{tabular}
\end{table}

\subsection{Long Memory of Sherlock fMRI Time Series}\label{app:long.memory}

The time series in the Sherlock fMRI dataset exhibit significant long memory.
To demonstrate this, we plot the histogram of Hurst exponent of all $17 \times 268$ movie-phase time series in Figure \ref{fig:hurst}.
The Hurst exponent $H$, ranging from zero to one, measures the scaling of temporal dependence in a time series, with $H=1/2$ corresponding to the short-memory benchmark associated with uncorrelated or rapidly decorrelating fluctuations.
Values $H>1/2$ indicate persistence: positive deviations tend to be followed by positive deviations, and negative deviations tend to be followed by negative deviations over long time scales.
For fractional long-memory processes, the autocorrelation typically satisfies
\[
\rho(k) \asymp k^{2H-2},
\]
so when $H>1/2$, we have $2H-2 \in (-1,0)$.
Consequently, the autocorrelation decays slowly and is not summable, which means that dependence persists across distant lags, i.e., long-memory.
We can see from Figure \ref{fig:hurst} that $99.8\%$ of all time series have $H>1/2$, i.e., exhibiting long-memory.
The median Hurst exponent is at $0.719$.
For each subject, we have also provided a boxplot.
All subjects have a confident interval of Hurst exponent that completely exceeds 0.5, a strong sign of these time series being long-memory.

The ACF plots in Figure \ref{fig:acf} provide a complementary diagnostic.
In a short-memory process, the autocorrelation decays rapidly toward zero and usually becomes negligible after a small number of lags.
By contrast, a slowly decaying or effectively non-decaying ACF over many lags indicates that distant observations remain correlated.
This persistent autocorrelation is the empirical signature of long-range dependence.
Thus, the combination of $H>1/2$ and significant autocorrelations across long lags provides evidence that the Sherlock movie time series exhibit long-memory behavior.
We selected the time series at the 50th, 75th, 90th, and 95th percentile of Hurst exponents.
It appears that all these four time series have significant ACF lags near 100.
For the latter two, particularly, the ACF is decaying very slowly.
These time series are typically long-range dependent.

\begin{figure}
    \centering
    \includegraphics[width=0.8\linewidth]{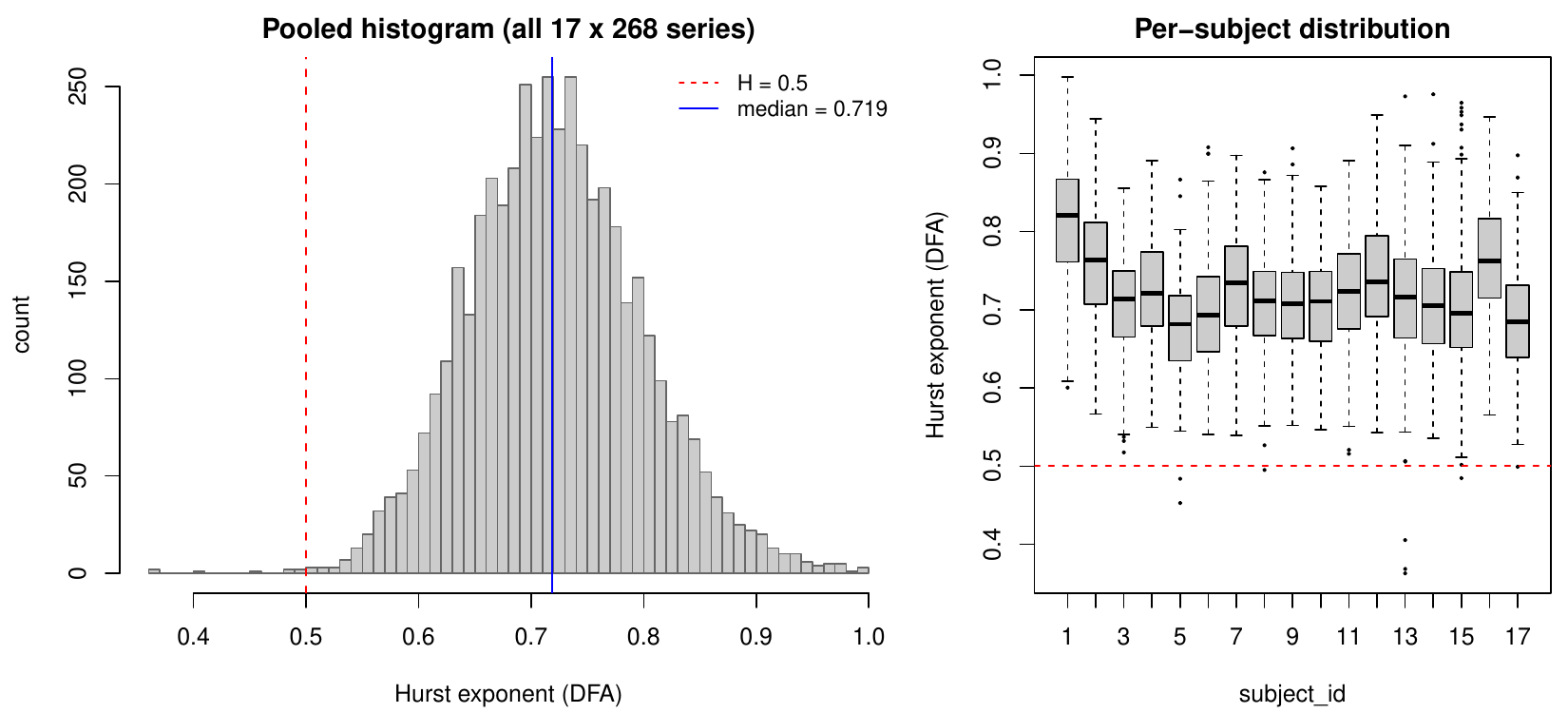}
    \caption{Distribution of Hurst exponents across all subject-ROI movie time series. Vertical dashed line: $H = 0.5$.
    Values $H > 0.5$ are considered consistent with long-range dependence.}
    \label{fig:hurst}
\end{figure}

\begin{figure}
    \centering
    \includegraphics[width=0.8\linewidth]{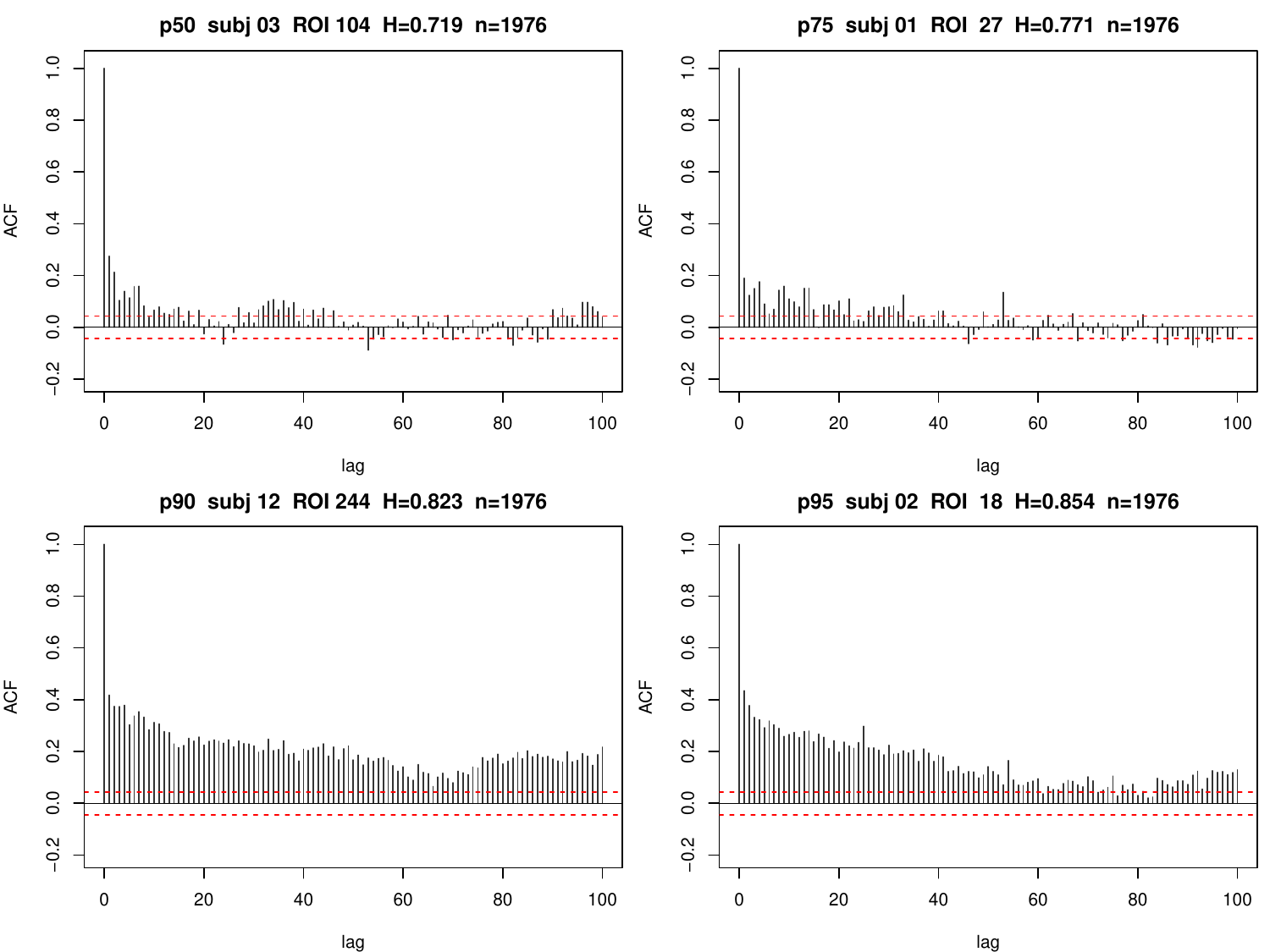}
    \caption{Sample autocorrelation functions (ACF) for movie time series at Hurst exponent distribution percentiles
    (50th, 75th, 90th, 95th).
    Each panel shows ACF up to lag 100; dashed bands at $\pm 1.96/\sqrt{n}$.
    Slowly-decaying significant ACF coefficients are evidence consistent with long-range dependence.}
    \label{fig:acf}
\end{figure}

\section{Proof for Section~\ref{sec:theory}}
\label{app:pf.theory}

\subsection{Proof of Proposition~\ref{prop:deterministic.reduction}}
\label{app:proof.det.reduction}

For brevity, write
\[
    \Delta = \hat\Sigma_n - \Sigma,\quad
    A_j = B_{0,j}^\top \Sigma B_{0,j},\quad
    \hat A_j = B_{0,j}^\top \hat\Sigma_n B_{0,j},\quad
    b_j = B_{0,j}^\top e_j .
\]
Throughout the proof, the constant $C>0$ may vary from line to line and depends only on $C_A$, $C_\Sigma$, and $C_\Omega$.
On the event $2C_A s_0|\Delta|_\infty \le 1$, Condition \ref{cond:local.inverse} gives
\[
    |A_j^{-1}(\hat A_j-A_j)|_1
    \le
    |A_j^{-1}|_1|\hat A_j-A_j|_1
    \le
    C_A s_{0,j}|\Delta|_\infty
    \le
    \frac12 .
\]
Hence $\hat A_j$ is invertible for every $1\le j\le p$, and the exceptional definition of $\hat{\bm \omega}_{0,j}$ being zero when $\hat A_j$ being degenerate is then inactive on this event.
Moreover,
\[
    |\hat A_j^{-1}|_1
    \le
    \frac{|A_j^{-1}|_1}{1-|A_j^{-1}(\hat A_j-A_j)|_1}
    \le
    2C_A .
\]
The inverse identity gives
\[
    \hat A_j^{-1}-A_j^{-1}
    =
    -A_j^{-1}(\hat A_j-A_j)A_j^{-1}
    +
    A_j^{-1}(\hat A_j-A_j)A_j^{-1}(\hat A_j-A_j)\hat A_j^{-1}.
\]
Since $\bm \omega_{0,j}=B_{0,j}A_j^{-1}b_j$ and $\hat{\bm \omega}_{0,j}=B_{0,j}\hat A_j^{-1}b_j$, we obtain
\[
    \hat{\bm \omega}_{0,j}-\bm \omega_{0,j}
    =
    -B_{0,j}A_j^{-1}B_{0,j}^\top\Delta \bm \omega_{0,j}
    +
    r_{j,n}^{(\omega)},
\]
where
\[
    r_{j,n}^{(\omega)}
    =
    B_{0,j}A_j^{-1}(\hat A_j-A_j)A_j^{-1}(\hat A_j-A_j)\hat A_j^{-1}b_j .
\]
Because $|b_j|_1=1$, $|\hat A_j^{-1}|_1\le 2C_A$, and $|\hat A_j-A_j|_1\le s_{0,j}|\Delta|_\infty$, we have
\[
    \max_{1\le j\le p}|r_{j,n}^{(\omega)}|_1
    \le
    2C_A^3s_0^2|\Delta|_\infty^2 .
\]
Under $H_0$, $\bm \omega_{0,j}$ is the $j$th column of $\Omega$.
Thus Condition \ref{cond:Sigma.Omega.1norm} implies $\max_{1\le j\le p}|\bm \omega_{0,j}|_1 \le |\Omega|_1 \le C_\Omega$.
Consequently, under the event $2C_A s_0|\Delta|_\infty \le 1$,
\[
\begin{aligned}
    |\hat{\bm \omega}_{0,j}-\bm \omega_{0,j}|_1
    &\le
    |A_j^{-1}|_1|B_{0,j}^\top\Delta \bm \omega_{0,j}|_1
    +
    |r_{j,n}^{(\omega)}|_1  \\
    &\le
    C_A s_{0,j}|\Delta|_\infty |\bm \omega_{0,j}|_1
    +
    2C_A^3s_0^2|\Delta|_\infty^2 \\
    &\le
    C_1 s_0|\Delta|_\infty .
\end{aligned}
\]
Now fix $(i,j)\in\mathcal J_0$.
Since $i\notin S_{0,j}$ and we work under $H_0$, we have $e_i^\top \Sigma \bm \omega_{0,j}=0$.
Therefore
\[
\begin{aligned}
    \hat r_{ij}
    &=
    e_i^\top\hat\Sigma_n\hat{\bm \omega}_{0,j} \\
    &=
    e_i^\top(\Sigma+\Delta)(\bm \omega_{0,j}+\hat{\bm \omega}_{0,j}-\bm \omega_{0,j}) \\
    &=
    e_i^\top\Delta \bm \omega_{0,j}
    +
    e_i^\top\Sigma(\hat{\bm \omega}_{0,j}-\bm \omega_{0,j})
    +
    e_i^\top\Delta(\hat{\bm \omega}_{0,j}-\bm \omega_{0,j}) .
\end{aligned}
\]
Substituting the expansion of $\hat{\bm \omega}_{0,j}-\bm \omega_{0,j}$ yields
\[
    \hat r_{ij}
    =
    \bm u_{ij}^\top\Delta \bm \omega_{0,j}
    +
    \rho_{ij,n},
\]
where
\[
    \rho_{ij,n}
    =
    e_i^\top\Sigma r_{j,n}^{(\omega)}
    +
    e_i^\top\Delta(\hat{\bm \omega}_{0,j}-\bm \omega_{0,j}) .
\]
By Condition \ref{cond:Sigma.Omega.1norm} and the previous bounds,
\[
    \max_{(i,j)\in\mathcal J_0}|\rho_{ij,n}|
    \le
    |\Sigma|_1\max_{1\le j\le p}|r_{j,n}^{(\omega)}|_1
    +
    |\Delta|_\infty\max_{1\le j\le p}|\hat{\bm \omega}_{0,j}-\bm \omega_{0,j}|_1 
    \le
    C_1 s_0^2|\Delta|_\infty^2 .
\]
Since $\bm u_{ij}^\top\Delta \bm \omega_{0,j} = \frac1n\sum_{t=1}^n \mathcal X_{t,ij}$,
we obtain
\[
    |\hat T_n-T_n^{\mathrm L}|
    \le
    \sqrt n \max_{(i,j)\in\mathcal J_0}|\rho_{ij,n}|
    \le
    C_1\sqrt n s_0^2|\Delta|_\infty^2 .
\]

It remains to bound the linear term.
Condition \ref{cond:Sigma.Omega.1norm} gives $|B_{0,j}^\top\Sigma e_i|_1 \le |\Sigma e_i|_1 \le |\Sigma|_1 \le C_\Sigma$.
Together with Condition \ref{cond:local.inverse}, this implies
\[
    |\bm u_{ij}|_1
    \le
    1+|A_j^{-1}|_1|B_{0,j}^\top\Sigma e_i|_1
    \le
    1+C_A C_\Sigma .
\]
Since $|\bm \omega_{0,j}|_1\le C_\Omega$, we have
\[
    T_n^{\mathrm L}
    =
    \sqrt n\max_{(i,j)\in\mathcal J_0}|\bm u_{ij}^\top\Delta \bm \omega_{0,j}|
    \le
    C_1 \sqrt n|\Delta|_\infty .
\]
Combining this bound with the bound for $|\hat T_n-T_n^{\mathrm L}|$ gives
\[
    \hat T_n
    \le
    C_1\sqrt n|\Delta|_\infty
    +
    C_1 \sqrt n s_0^2|\Delta|_\infty^2 .
\]
This proves Proposition~\ref{prop:deterministic.reduction}.

\subsection{Proof of Theorem~\ref{thm:ga.direct}}
\label{app:proof.ga.direct}

By Proposition \ref{prop:deterministic.reduction}, for every $\eta>0$,
\[
\P(|\hat T_n-T_n^{\mathrm{L}}|>\eta)
\le
\P(2C_A s_0|\hat\Sigma_n-\Sigma|_\infty>1)
+
\P(C\sqrt n s_0^2|\hat\Sigma_n-\Sigma|_\infty^2>\eta).
\]
We shall now approximate $T_n^{\mathrm{L}}$ by a Gaussian maximum.
Recall that for each $(i,j)\in\mathcal J_0$,
\[
    \mathcal X_{t,ij}
    =
    (\bm u_{ij}^\top X_t)(\bm \omega_{0,j}^\top X_t)
    -
    \E[(\bm u_{ij}^\top X_t)(\bm \omega_{0,j}^\top X_t)] .
\]
Define the derived Gaussian linear process
\[
    Y_t
    =
    \left(
        \{\bm u_{ij}^\top X_t\}_{(i,j)\in\mathcal J_0},
        \{\bm \omega_{0,j}^\top X_t\}_{1\le j\le p}
    \right)^\top .
\]
Note that each coordinate of $\mathcal X_t$ is a selected covariance-entry coordinate of $Y_tY_t^\top-\E[Y_tY_t^\top]$.
Since $X_t=\sum_{l=0}^\infty A_l\epsilon_{t-l}$, we can write $Y_t=\sum_{l=0}^\infty H A_l\epsilon_{t-l}$ for a deterministic matrix $H$ whose rows are the vectors $\bm u_{ij}^\top$ and $\bm \omega_{0,j}^\top$.
Under $H_0$, $\bm \omega_{0,j}$ is the $j$th column of $\Omega$, and therefore Condition \ref{cond:Sigma.Omega.1norm} gives $\max_j |\bm \omega_{0,j}|_1\le C_\Omega$.
Moreover, by Conditions \ref{cond:Sigma.Omega.1norm} and \ref{cond:local.inverse},
\[
    |\bm u_{ij}|_1
    \le
    1+
    |(B_{0,j}^\top\Sigma B_{0,j})^{-1}|_1
    |B_{0,j}^\top\Sigma e_i|_1
    \le
    1+C_A C_\Sigma .
\]
Therefore $Y_t=H X_t$ is again a Gaussian linear process whose coefficient matrices satisfy
\[
\max_r |(H A_\ell)_{r\cdot}|
\le
\max_r |H_{r\cdot}|_1
\max_s |(A_\ell)_{s\cdot}|
\le
C(1\vee \ell)^{-\beta}.
\]
Meanwhile, Condition \ref{cond:score.nondegenerate} provides the nondegeneracy condition for these coordinates.
Therefore, Theorem 1 of \cite{zhai2026simultaneous} applies to our selected score vector $\mathcal X_t$, in that there exists a constant $C'$ related to $C_0,c_1,C_\Sigma,C_\Omega,C_A$ such that
\[
    \sup_{u\in\R}
    \left|
        \P(T_n^{\mathrm{L}}\le u)-\P(|G_n|_\infty\le u)
    \right|
    \le
    C' \Psi(p,n).
\]
Here, $G_n$ is a centered Gaussian vector with the covariance matrix 
\[
\Xi_n
=
\var\left(
\frac{1}{\sqrt n}\sum_{t=1}^n \mathcal X_t
\right)
=
\sum_{|k|<n}
\left(1-\frac{|k|}{n}\right)
\cov(\mathcal X_0,\mathcal X_k).
\]
It remains to replace the finite-sample covariance of the Gaussian approximation by its long-run limit.
For $a=(i,j),b=(i',j')\in\mathcal J_0$, write $\gamma_{ab}(k)=\cov(\mathcal X_{0,a},\mathcal X_{k,b})$.
By Isserlis' formula and the uniform $\ell_1$-bounds on $u_{ij}$ and $\omega_{0,j}$, there exists a constant $C>0$, depending only on $C_0,C_\Sigma,C_\Omega$, and $C_A$, such that
\[
|\gamma_{ab}(k)| \le C \left(\sum_{\ell=0}^{\infty}
(1\vee \ell)^{-\beta}(1\vee(\ell+|k|))^{-\beta}\right)^2.
\]
Therefore, we can obtain
\[
\begin{aligned}
|\Xi_n-\Xi|_\infty
&\le
2\max_{a,b}
\sum_{|k|\ge n}|\gamma_{ab}(k)|
+
\frac{2}{n}\max_{a,b}
\sum_{1\le |k|<n}|k||\gamma_{ab}(k)|  \\
&\le
C n^{-(4\beta-3)\wedge 1} \log^3(n)
:=\Delta_{\Xi,n}.
\end{aligned}
\]
We now apply the Gaussian comparison inequality of \citet{chernozhukov2015comparison}, which leads to
\[
\sup_{u\in\mathbb R}
\left|
\P(|G_n|_\infty\le u)-\P(|G|_\infty\le u)
\right|
\le
C
\Delta_{\Xi,n}^{1/3}
\left\{1\vee\log\left(\frac{2|\mathcal J_0|}{\Delta_{\Xi,n}}\right)\right\}^{2/3}.
\]
Note that the right-hand side is of smaller polynomial order than $\Psi(p,n)$, and can be absorbed into the constant multiplying $\Psi(p,n)$.
Finally, Condition \ref{cond:score.nondegenerate} gives $\min_{a\in\mathcal J_0}\Xi_{aa}\ge c_1$.
Since $|\Xi_n-\Xi|_\infty\le \Delta_{\Xi,n}=o(1)$, we have, for all sufficiently large $n$,
\[
\min_{a\in\mathcal J_0}(\Xi_n)_{aa}
\ge
c_1-\Delta_{\Xi,n}
\ge
\frac{c_1}{2}.
\]
Thus the finite-sample Gaussian target $G_n$ is nondegenerate, and the anti-concentration and Gaussian comparison constants depend only on the stated constants.
Therefore, we arrive at the fact that there exists a constant $C'$ related to $C_0,c_1,C_\Sigma,C_\Omega,C_A$ such that
\begin{equation}\label{eq:app.linearized.gaussian}
    \sup_{u\in\R}
    \left|
        \P(T_n^{\mathrm{L}}\le u)-\P(|G|_\infty\le u)
    \right|
    \le
    C' \Psi(p,n).
\end{equation}

We now transfer the Gaussian approximation from $T_n^{\mathrm{L}}$ to $\hat T_n$.
For every $u\in\R$ and every $\eta>0$,
\[
    \P(\hat T_n\le u)
    \le
    \P(T_n^{\mathrm{L}}\le u+\eta)+\P(|\hat T_n-T_n^{\mathrm{L}}|>\eta),
\]
and similarly,
\[
    \P(\hat T_n\le u)
    \ge
    \P(T_n^{\mathrm{L}}\le u-\eta)-\P(|\hat T_n-T_n^{\mathrm{L}}|>\eta).
\]
Combining these inequalities give
\begin{multline}
\label{eq:ga.direct.transfer}
\sup_{u\in\R}
\left|
    \P(\hat T_n\le u)-\P(|G|_\infty\le u)
\right|
\le
C' \Psi(p,n)
+
\P(2C_A s_0|\hat\Sigma_n-\Sigma|_\infty>1)
\\
+
\P(C\sqrt n s_0^2|\hat\Sigma_n-\Sigma|_\infty^2>\eta)
+
\sup_{u\in\R}\P(u<|G|_\infty\le u+\eta).
\end{multline}

By Condition \ref{cond:score.nondegenerate}, every coordinate of $G$ has variance at least $c_1$.
The upper variance bound follows from Condition \ref{cond.A} applied to the derived process $Y_t$, with coefficient constant $C_Y$.
Therefore Nazarov's anti-concentration inequality implies that there exists a constant $C'$ related to $C_0,c_1,C_\Sigma,C_\Omega,C_A$ such that
\[
    \sup_{u\in\R}\P(u<|G|_\infty\le u+\eta)
    \le
    C' \eta\sqrt{\log p}.
\]

It remains to control the two probability terms in \eqref{eq:ga.direct.transfer}.
Let $Z_\Sigma$ be the centered Gaussian vector in the covariance Gaussian approximation theorem of \cite{zhai2026simultaneous}, so that under Conditions \ref{cond.A} and \ref{cond.G},
\[
    \sup_{u\in\R}
    \left|
        \P(\sqrt n|\hat\Sigma_n-\Sigma|_\infty\ge u)
        -
        \P(|Z_\Sigma|_\infty\ge u)
    \right|
    \le
    C_{\rm cov}\Psi(p,n),
\]
where $C_{\rm cov}$ is related to $C_0$ and $c_0$.
By the upper variance bound implied by Condition \ref{cond.A}, there exists $v_\Sigma=v_\Sigma(C_0)$ such that every coordinate of $Z_\Sigma$ has variance at most $v_\Sigma$.
Hence, for all $x>0$,
\[
    \P(|Z_\Sigma|_\infty>x)
    \le
    2p^2\exp\left(-\frac{x^2}{2v_\Sigma}\right).
\]
Consequently,
\[
\begin{aligned}
\P(2C_A s_0|\hat\Sigma_n-\Sigma|_\infty>1)
&=
\P\left(\sqrt n|\hat\Sigma_n-\Sigma|_\infty>\frac{\sqrt n}{2C_A s_0}\right)  \\
&\le
2p^2\exp\left(-\frac{n}{8v_\Sigma C_A^2s_0^2}\right)
+
C_{\rm cov}\Psi(p,n),
\end{aligned}
\]
and
\[
\begin{aligned}
\P(C\sqrt n s_0^2|\hat\Sigma_n-\Sigma|_\infty^2>\eta)
&=
\P\left(
    \sqrt n|\hat\Sigma_n-\Sigma|_\infty
    >
    \left(\frac{\eta\sqrt n}{Cs_0^2}\right)^{1/2}
\right)  \\
&\le
2p^2\exp\left(
    -\frac{\eta\sqrt n}{2v_\Sigma Cs_0^2}
\right)
+
C_{\rm cov}\Psi(p,n).
\end{aligned}
\]
Substituting the last three bounds into \eqref{eq:ga.direct.transfer}, we obtain
\begin{multline}
\label{eq:ga.direct.before.eta}
\sup_{u\in\R}
\left|
    \P(\hat T_n\le u)-\P(|G|_\infty\le u)
\right|
\le
(C' +2C_{\rm cov})\Psi(p,n)
+
2p^2\exp\left(-\frac{n}{8v_\Sigma C_A^2s_0^2}\right)
\\
+
2p^2\exp\left(
    -\frac{\eta\sqrt n}{2v_\Sigma Cs_0^2}
\right)
+
C' \eta\sqrt{\log p}.
\end{multline}

We now choose $\eta$ to balance the last two terms.
Write $a=(2v_\Sigma C)^{-1}\sqrt n/s_0^2$, $b=C' \sqrt{\log p}$, and $M=2p^2$.
Take $\eta_\ast = a^{-1}\log(e+Ma/b)$.
Then we have
\[
    M\exp(-a\eta_\ast)+b\eta_\ast
    \le
    \frac{b}{a}
    \left[
        1+\log\left(e+\frac{Ma}{b}\right)
    \right].
\]
Since $b/a\le C_\ast s_0^2\sqrt{\log p}/\sqrt n$ and $Ma/b\le C_\ast p^2\sqrt n/(s_0^2\sqrt{\log p})$, for a constant $C_\ast$ depending only on $C_0$, $c_1$, $C_\Sigma$, $C_\Omega$, and $C_A$, we have
\[
    M\exp(-a\eta_\ast)+b\eta_\ast
    \le
    C_\ast
    \frac{s_0^2\sqrt{\log p}}{\sqrt n}
    \left[
        1+\log\left(
            e+\frac{p^2\sqrt n}{s_0^2\sqrt{\log p}}
        \right)
    \right].
\]
Using $\log p\le \log(pn)$ and
\[
    \log\left(
        e+\frac{p^2\sqrt n}{s_0^2\sqrt{\log p}}
    \right)
    \le
    C\log(pn),
\]
the last display is bounded by
\[
    C_\ast
    \frac{s_0^2\log^{3/2}(pn)}{\sqrt n}.
\]
Substituting $\eta=\eta_\ast$ into \eqref{eq:ga.direct.before.eta} and enlarging constants, we obtain
\[
\sup_{u\in\R}
\left|
    \P(\hat T_n\le u)-\P(|G|_\infty\le u)
\right|
\le
C\Psi(p,n)
+
Cp^2\exp\left(-\frac{cn}{s_0^2}\right)
+
C\frac{s_0^2\log^{3/2}(pn)}{\sqrt n},
\]
where $C$ depends only on $C_0$, $c_0$, $c_1$, $C_\Sigma$, $C_\Omega$, and $C_A$.
This finishes the proof of Theorem~\ref{thm:ga.direct}.

\subsection{Proof of Proposition~\ref{prop:score.perturb}}
\label{app:proof.score.perturb}

Recall the proof in Appendix \ref{app:proof.det.reduction} for the notation
\[
    \Delta = \hat\Sigma_n-\Sigma,
    \qquad
    A_j = B_{0,j}^\top\Sigma B_{0,j},
    \qquad
    \hat A_j = B_{0,j}^\top\hat\Sigma_n B_{0,j},
    \qquad
    b_j = B_{0,j}^\top e_j .
\]
In this proof, the constant $C'_1 >0$ may vary from line to line, but they depend only on $C_A$ and $C_\Sigma$.
Recall the proof in in Appendix~\ref{app:proof.det.reduction} that, on the event $2C_A s_0|\Delta|_\infty\le 1$, the matrices $\hat A_j$ are invertible and $\max_{1\le j\le p}|\hat A_j^{-1}|_1\le 2C_A$.
Moreover, by the inverse identity, we get
\[
    \hat A_j^{-1}-A_j^{-1}
    =
    -A_j^{-1}(\hat A_j-A_j)\hat A_j^{-1}.
\]
And since $|\hat A_j-A_j|_1\le s_{0,j}|\Delta|_\infty$, we have
\[
    \max_{1\le j\le p}|\hat A_j^{-1}-A_j^{-1}|_1
    \le
    C'_1  s_0|\Delta|_\infty .
\]
Therefore, since $\bm \omega_{0,j}=B_{0,j}A_j^{-1}b_j$, $\hat{\bm \omega}_{0,j}=B_{0,j}\hat A_j^{-1}b_j$, and $|b_j|_1=1$, we have
\[
    \max_{1\le j\le p}|\hat{\bm \omega}_{0,j}-\bm \omega_{0,j}|_1
    \le
    C'_1  s_0|\Delta|_\infty,
    \qquad
    \max_{1\le j\le p}\{|\bm \omega_{0,j}|_1+|\hat{\bm \omega}_{0,j}|_1\}
    \le
    C'_1  .
\]
We now control the perturbation of $\bm u_{ij}$.
For $(i,j)\in\mathcal J_0$,
\[
    \hat{\bm u}_{ij}-\bm u_{ij}
    =
    -B_{0,j}(\hat A_j^{-1}-A_j^{-1})B_{0,j}^\top\Sigma e_i
    -
    B_{0,j}\hat A_j^{-1}B_{0,j}^\top\Delta e_i .
\]
Condition~\ref{cond:Sigma.Omega.1norm} gives
\[
    |B_{0,j}^\top\Sigma e_i|_1
    \le
    |\Sigma e_i|_1
    \le
    |\Sigma|_1
    \le
    C_\Sigma,
\]
meanwhile $|B_{0,j}^\top\Delta e_i|_1 \le s_{0,j}|\Delta|_\infty$.
Combining these bounds with the preceding inverse bounds yields
\[
    \max_{(i,j)\in\mathcal J_0}|\hat{\bm u}_{ij}-\bm u_{ij}|_1
    \le
    C'_1  s_0|\Delta|_\infty .
\]
This proves the first assertion.
The same argument also gives
\[
    \max_{(i,j)\in\mathcal J_0}\{|\bm u_{ij}|_1+|\hat{\bm u}_{ij}|_1\}
    \le
    C'_1 ,
\]
because
\[
    |\bm u_{ij}|_1
    \le
    1+|A_j^{-1}|_1|B_{0,j}^\top\Sigma e_i|_1
    \le
    1+C_A C_\Sigma .
\]

Next compare the plug-in and oracle scores.
Note that $\mathcal X_{t,ij}-\bar{\mathcal X}_{n,ij} = \bm u_{ij}^\top(X_tX_t^\top-\hat\Sigma_n)\bm \omega_{0,j}$.
Therefore,
\begin{multline*}
    \hat{\mathcal X}_{t,ij}-(\mathcal X_{t,ij}-\bar{\mathcal X}_{n,ij})
    =
    \hat{\bm u}_{ij}^\top(X_tX_t^\top-\hat\Sigma_n)\hat{\bm \omega}_{0,j}
    -
    \bm u_{ij}^\top(X_tX_t^\top-\hat\Sigma_n)\bm \omega_{0,j}
    \\
    =
    (\hat{\bm u}_{ij}-\bm u_{ij})^\top(X_tX_t^\top-\hat\Sigma_n)\hat{\bm \omega}_{0,j}
    +
    \bm u_{ij}^\top(X_tX_t^\top-\hat\Sigma_n)(\hat{\bm \omega}_{0,j}-\bm \omega_{0,j}).
\end{multline*}
On the same aforementioned event, we have
\[
    |X_tX_t^\top-\hat\Sigma_n|_\infty
    \le
    |X_t|_\infty^2+|\Sigma|_\infty+|\Delta|_\infty
    \le
    C'_1 (M_n^2+1),
\]
where we used $|\Sigma|_\infty\le |\Sigma|_1\le C_\Sigma$ and $|\Delta|_\infty\le (2C_A s_0)^{-1}$.
Combining the preceding bounds gives
\[
    \max_{1\le t\le n}\max_{(i,j)\in\mathcal J_0}
    |\hat{\mathcal X}_{t,ij}-(\mathcal X_{t,ij}-\bar{\mathcal X}_{n,ij})|
    \le
    C'_1  s_0|\Delta|_\infty(M_n^2+1).
\]
This finishes the proof of Proposition \ref{prop:score.perturb}.
We would like to point out that, although this consequence is not part of the proposition statement, the same pointwise bound implies that for every $l\le r\le n$,
\[
    l^{-1/2}
    \left|
        \sum_{t=r-l+1}^r
        [\hat{\mathcal X}_t-(\mathcal X_t-\bar{\mathcal X}_n)]
    \right|_\infty
    \le
    C'_1\sqrt l s_0|\hat\Sigma_n-\Sigma|_\infty(M_n^2+1).
\]

\subsection{Proof of Theorem~\ref{thm:block.bootstrap.direct}}
\label{app:proof.block.bootstrap}

We first apply the block-bootstrap theorem of \citet{zhai2026simultaneous} to the oracle score process.
For this purpose, define the derived Gaussian linear process $Y_t = ( \{u_{ij}^{\top}X_t\}_{(i,j)\in \mathcal J_0}, \{\omega_{0,j}^{\top}X_t\}_{1\le j\le p} )^{\top}$.
The rows of the linear map defining $Y_t$ have uniformly bounded $\ell_1$-norm.
Indeed, under $H_0$, we have
\[
\max_{(i,j)\in\mathcal J_0}|u_{ij}|_1
\le
1+C_A|\Sigma|_1
\le
1+C_A C_\Sigma,
\qquad
\max_{1\le j\le p}|\omega_{0,j}|_1
\le
|\Omega|_1
\le
C_\Omega.
\]
Therefore $Y_t=H X_t$ is again a Gaussian linear process whose coefficient matrices satisfy
\[
\max_r |(H A_\ell)_{r\cdot}|
\le
\max_r |H_{r\cdot}|_1
\max_s |(A_\ell)_{s\cdot}|
\le
C(1\vee \ell)^{-\beta}.
\]
Moreover, for each $a=(i,j)\in\mathcal J_0$, the oracle score $\mathcal X_{t,a} = u_{ij}^{\top}(X_tX_t^{\top}-\Sigma)\omega_{0,j}$
is a selected covariance coordinate of $Y_tY_t^{\top}-\E(Y_tY_t^{\top})$.
Condition \ref{cond:score.nondegenerate} provides the nondegeneracy condition for these selected coordinates.
The finite-sample covariance comparison used in the proof of Theorem \ref{thm:ga.direct} also gives, uniformly over $a\in\mathcal J_0$,
\[
\var\left(m^{-1/2}\sum_{t=1}^m \mathcal X_{t,a}\right)
\ge
c_1/2
\]
for all sufficiently large $m$, in particular for $m=n$ and $m=l$.
Thus the selected-coordinate version of the block-bootstrap theorem of Zhai et al. applies to $\mathcal X_t$.
Since $|\mathcal J_0|\le p^2$, replacing the covariance dimension by $|\mathcal J_0|$ only changes logarithmic factors by constants.
Define the oracle overlapping block sums $B_{r,l}^{\mathrm L} = \sum_{t=r-l+1}^r(\mathcal X_t-\bar{\mathcal X}_n)$ for $r=l,\ldots,n$, and define the corresponding oracle empirical CDF by
\[
\hat F_{n,l}^{\mathrm L}(u)
=
\frac{1}{n-l+1}
\sum_{r=l}^n
\one{l^{-1/2}|B_{r,l}^{\mathrm L}|_\infty \le u}.
\]
By Theorem 2 of \citet{zhai2026simultaneous}, there exists an event $\mathcal A_n$, at a probability no less than $1-3\Psi_B^\epsilon(p,n,\epsilon)$, such that
\[
\sup_{u\in\R}
|
\hat F_{n,l}^{\mathrm L}(u)-\P(T_n^{\mathrm L}\le u)
|
\le
C\Psi_B^{1-\epsilon}(p,n,\epsilon),
\]
where
\[
\Psi_B(p,n,\epsilon)
=
\frac{\log^{(5\tilde\beta+12)/[(6-2\epsilon)\tilde\beta+8]}(pn)}{n^{\tilde\beta/[(6-2\epsilon)\tilde\beta+8]}}.
\]
Next, we shall compare the plug-in block CDF with the oracle one.
Let $\Lambda_{n,l}=\sup_{l\le r\le n}l^{-1/2}|\hat B_{r,l}-B_{r,l}^{\mathrm L}|_\infty$.
On the event $\Lambda_{n,l}\le \eta_2$, we have $\hat F_{n,l}^{\mathrm L}(u-\eta_2) \le \hat F_{n,l}(u) \le \hat F_{n,l}^{\mathrm L}(u+\eta_2)$ for every $u\in\R$.
Hence on the event $\mathcal A_n\cap\{\Lambda_{n,l}\le\eta_2\}$,
\[
\sup_{u\in\R}
|
\hat F_{n,l}(u)-\P(T_n^{\mathrm L}\le u)
|
\le
C\Psi_B^{1-\epsilon}(p,n,\epsilon)
+
\sup_{u\in\R}
\P(u-\eta_2<T_n^{\mathrm L}\le u+\eta_2).
\]
The increment probability of $T_n^{\mathrm L}$ is controlled by the linearized Gaussian approximation established in the proof of Theorem~\ref{thm:ga.direct}.
Together with Nazarov's inequality, we get
\[
\sup_{u\in\R}
\P(u-\eta_2 <T_n^{\mathrm L}\le u+\eta_2 )
\le
C\Psi(p,n)+C\eta_2 \sqrt{\log p}.
\]
On the event $\mathcal A_n\cap\{\Lambda_{n,l}\le\eta_2\}$,
\[
\sup_{u\in\R}
|
\hat F_{n,l}(u)-\P(T_n^{\mathrm L}\le u)
|
\le
C\Psi_B^{1-\epsilon}(p,n,\epsilon)
+
C\Psi(p,n)
+
C\eta_2\sqrt{\log p}.
\]

We now transfer from $T_n^{\mathrm L}$ to $\hat T_n$.
The same smoothing argument used in the proof of Theorem~\ref{thm:ga.direct} gives, for every $\eta_1>0$,
\[
\sup_{u\in\R}
|
\P(T_n^{\mathrm L}\le u)-\P(\hat T_n\le u)
|
\le
\P(|\hat T_n-T_n^{\mathrm L}|>\eta_1)
+
C\Psi(p,n)
+
C\eta_1\sqrt{\log p}.
\]
Combining the preceding displays, on the event $\mathcal A_n\cap\{\Lambda_{n,l}\le\eta_2\}$, which is of probability no less than $1 - C\Psi_B^\epsilon(p,n,\epsilon) - \P(\Lambda_{n,l}>\eta_2)$, we obtain
\begin{multline}
\label{eq:app.bootstrap.reduction}
\sup_{u\in\R}
|
\hat F_{n,l}(u)-\P(\hat T_n\le u)
|
\\
\le
C\Psi_B^{1-\epsilon}(p,n,\epsilon)
+
C\Psi(p,n)
+
\P(|\hat T_n-T_n^{\mathrm L}|>\eta_1)
+
C(\eta_1+\eta_2)\sqrt{\log p}.
\end{multline}

Take $\eta_1=K_1s_0^2\log(pn)/\sqrt n$ and $\eta_2=K_2s_0\sqrt{l/n}\log^{3/2}(pn)$, where $K_1$ and $K_2$ are large fixed constants.
By Proposition~\ref{prop:deterministic.reduction},
\[
\P(|\hat T_n-T_n^{\mathrm L}|>\eta_1)
\le
\P(2C_A s_0|\hat\Sigma_n-\Sigma|_\infty>1)
+
\P(C\sqrt n s_0^2|\hat\Sigma_n-\Sigma|_\infty^2>\eta_1).
\]
The covariance Gaussian approximation argument used in the proof of Theorem~\ref{thm:ga.direct} gives
\[
\P(2C_A s_0|\hat\Sigma_n-\Sigma|_\infty>1)
\le
C\Psi(p,n)+Cp^2\exp\left(-\frac{cn}{s_0^2}\right),
\]
and
\[
\P(C\sqrt n s_0^2|\hat\Sigma_n-\Sigma|_\infty^2>\eta_1)
\le
C\Psi(p,n)+Cp^2\exp(-cK_1\log(pn)).
\]
With the above choice of $K_1$, the last exponential term is bounded by $C(pn)^{-1}$.
Thus
\begin{equation}
\label{eq:app.bootstrap.target.bound}
\P(|\hat T_n-T_n^{\mathrm L}|>\eta_1)
\le
C\Psi(p,n)
+
Cp^2\exp\left(-\frac{cn}{s_0^2}\right)
+
C(pn)^{-1}.
\end{equation}

We next control the plug-in block-score error.
By Proposition~\ref{prop:score.perturb}, summing the pointwise bound over at most $l$ terms gives
\[
\Lambda_{n,l}
\le
C'_1\sqrt l s_0|\hat\Sigma_n-\Sigma|_\infty(M_n^2+1)
\]
on the event $2C_A s_0|\hat\Sigma_n-\Sigma|_\infty\le 1$.
Therefore
\[
\P(\Lambda_{n,l}>\eta_2)
\le
\P(2C_A s_0|\hat\Sigma_n-\Sigma|_\infty>1)
+
\P(C'_1\sqrt l s_0|\hat\Sigma_n-\Sigma|_\infty(M_n^2+1)>\eta_2).
\]
Define
\[
\mathcal H_n
=
\left\{\sqrt n|\hat\Sigma_n-\Sigma|_\infty\le C_1\sqrt{\log(pn)}\right\}
\cap
\left\{M_n^2+1\le C_2\log(pn)\right\},
\]
where $C_1$ and $C_2$ are large fixed constants.
By the covariance Gaussian approximation theorem of \cite{zhai2026simultaneous} and the Gaussian tail bound for the corresponding approximating vector,
\[
\P\left(\sqrt n|\hat\Sigma_n-\Sigma|_\infty>C_1\sqrt{\log(pn)}\right)
\le
C\Psi(p,n)+C(pn)^{-1}.
\]
Also, Condition~\ref{cond.A} implies that each coordinate of $X_t$ is a centered Gaussian variable with variance uniformly bounded by a constant depending only on $C_0$.
Hence a union bound gives $\P(M_n>x)\le 2pn\exp(-cx^2)$, and therefore $\P(M_n^2+1>C_2\log(pn))\le C(pn)^{-1}$.
On $\mathcal H_n$,
\[
C'_1\sqrt l s_0|\hat\Sigma_n-\Sigma|_\infty(M_n^2+1)
\le
Cs_0\sqrt{\frac{l}{n}}\log^{3/2}(pn)
\le
\eta_2.
\]
It follows that
\begin{equation}
\label{eq:app.bootstrap.plugin.bound}
\P(\Lambda_{n,l}>\eta_2)
\le
C\Psi(p,n)
+
Cp^2\exp\left(-\frac{cn}{s_0^2}\right)
+
C(pn)^{-1}.
\end{equation}

Finally, $\eta_1\sqrt{\log p}\le Cs_0^2\log^{3/2}(pn)/\sqrt n$ and $\eta_2\sqrt{\log p}\le Cs_0\sqrt{l/n}\log^2(pn)$.

Combining these bounds with \eqref{eq:app.bootstrap.reduction}, \eqref{eq:app.bootstrap.target.bound}, and \eqref{eq:app.bootstrap.plugin.bound}, we obtain
\begin{multline*}
\P\Biggl(
\sup_{u\in\R}
|\hat F_{n,l}(u)-\P(\hat T_n\le u)|
>
C\Psi_B^{1-\epsilon}(p,n,\epsilon)
+
C\Psi(p,n)
+
Cp^2\exp\left(-\frac{cn}{s_0^2}\right)
\\
+
C(pn)^{-1}
+
C\frac{s_0^2\log^{3/2}(pn)}{\sqrt n}
+
Cs_0\sqrt{\frac{l}{n}}\log^2(pn)
\Biggr)
\\
\le
3\Psi_B^\epsilon(p,n,\epsilon)
+
C\left\{
\Psi(p,n)
+
p^2\exp\left(-\frac{cn}{s_0^2}\right)
+
(pn)^{-1}
\right\}.
\end{multline*}
This is already a valid probability bound, but we shall do the following simplification.
For brevity, write
$R_f=\Psi(p,n)+p^2\exp(-cn/s_0^2)+(pn)^{-1}$.
Also note that the $(pn)^{-1}$ term can be absorbed by the $s_0^2\log^{3/2}(pn)/n^{1/2}$ term, and the $\Psi(p,n)$ term can be absorbed by the $\Psi_B(p,n,\epsilon)$ term for any $\epsilon\in(0,1)$.
The threshold on the left-hand side of the preceding display is thus bounded by $C\mathfrak R_{n,l}^{1-\epsilon}(\epsilon)$ whenever $\mathfrak R_{n,l}(\epsilon) \le 1$.
Meanwhile, since $\mathfrak R_{n,l}(\epsilon)\ge \Psi_B(p,n,\epsilon) +R_f$, the concavity of $t\mapsto t^\epsilon$ gives
\[
\mathfrak R_{n,l}^\epsilon(\epsilon)-\Psi_B^\epsilon(p,n,\epsilon)
\ge
\epsilon \mathfrak R_{n,l}^{\epsilon-1}(\epsilon)[\mathfrak R_{n,l}(\epsilon)-\Psi_B(p,n,\epsilon)]
\ge
\epsilon \mathfrak R_{n,l}^{\epsilon-1}(\epsilon)R_f .
\]
There exists a $r_\epsilon\in(0,1]$ small enough (concretely, we may choose $r_\epsilon = 1\wedge (3\epsilon/C)^{1/(1-\epsilon)}$) such that $CR_f\le 3[\mathfrak R_{n,l}^\epsilon(\epsilon)-\Psi_B^\epsilon(p,n,\epsilon)]$ whenever $\mathfrak R_{n,l}(\epsilon)\le r_\epsilon$.
Then we have
\[
\P\left(
\sup_{u\in\R}
|\hat F_{n,l}(u)-\P(\hat T_n\le u)|
>
C\mathfrak R_{n,l}(\epsilon)^{1-\epsilon}
\right)
\le
3\mathfrak R_{n,l}(\epsilon)^\epsilon .
\]
This finishes the proof of Theorem~\ref{thm:block.bootstrap.direct}.

\section{Proof of Supplemental Theory}
\label{app:proofs.extensions}

\subsection{Proof of Corollary~\ref{cor:two.sample.direct}}
\label{app:proof.two.sample.direct}

\begin{proof}
We reduce the random-null problem to the fixed-null problem covered by Theorems~\ref{thm:ga.direct} and~\ref{thm:block.bootstrap.direct}.
Let
\[
\mathcal A_{n_1}
=
\{ \hat E^{(1)} = E^{(1)} \}.
\]
By assumption, $\P(\mathcal A_{n_1}^c) = \pi_{n_1} \to 0$.
Under $H_0^{1\rightarrow 2}$, on $\mathcal A_{n_1}$ we have $\hat E^{(1)} = E^{(1)} = E^{(2)}$.
Thus $\hat T^{1\rightarrow 2}$ is exactly the direct statistic computed from the second sample with the fixed null graph $E^{(1)}$.
Denote this fixed-null statistic by $T_{n_2}^{1 \rightarrow 2}$.
For every $u \in \R$,
\[
|
\P_{H_0^{1\rightarrow 2}}( \hat T^{1\rightarrow 2} \le u )
-
\P( T_{n_2}^{1 \rightarrow 2} \le u )
|
\le
\P(\mathcal A_{n_1}^c)
=
\pi_{n_1}.
\]
Therefore,
\begin{multline*}
\sup_{u \in \R}
|
\P_{H_0^{1\rightarrow 2}}( \hat T^{1\rightarrow 2} \le u )
-
\P( |G^{1\rightarrow 2}|_\infty \le u )
|
\\
\qquad
\le
\sup_{u \in \R}
|
\P( T_{n_2}^{1 \rightarrow 2} \le u )
-
\P( |G^{1\rightarrow 2}|_\infty \le u )
|
+
\pi_{n_1}.
\end{multline*}
Theorem~\ref{thm:ga.direct}, applied to the second sample with the fixed null graph $E^{(1)}$, gives
\begin{multline*}
\sup_{u \in \R}
|
\P( T_{n_2}^{1 \rightarrow 2} \le u )
-
\P( |G^{1\rightarrow 2}|_\infty \le u )
|
\\
\le
C \Psi(p,n_2)
+
C p^2 \exp\{ - c n_2 / (s_0^{(1)})^2 \}
+
C \frac{(s_0^{(1)})^2 \log^{3/2}(p n_2)}{\sqrt{n_2}}.
\end{multline*}
This proves the Gaussian approximation bound.

We next prove the bootstrap bound.
Let $F_{n_2,l}^{(1)}$ be the block bootstrap CDF computed from the second sample with fixed null graph $E^{(1)}$.
Define
\[
D_{n,l}^{\circ}
=
\sup_{u \in \R}
|
F_{n_2,l}^{(1)}(u)
-
\P( T_{n_2}^{1 \rightarrow 2} \le u )
|.
\]
On $\mathcal A_{n_1}$, we have $\hat F_l^{1\rightarrow 2} = F_{n_2,l}^{(1)}$.
Hence, on $\mathcal A_{n_1}$,
\[
\begin{aligned}
\mathcal D_{n,l}^{1\rightarrow 2}
&\le
D_{n,l}^{\circ}
+
\sup_{u \in \R}
|
\P_{H_0^{1\rightarrow 2}}( \hat T^{1\rightarrow 2} \le u )
-
\P( T_{n_2}^{1 \rightarrow 2} \le u )
|
\\
&\le
D_{n,l}^{\circ}
+
\pi_{n_1}.
\end{aligned}
\]
Let
\[
\begin{aligned}
R_{n_2,l}^{(1)}(\epsilon)
={}&
\Psi_B(p,n_2,\epsilon)
+
\Psi(p,n_2)
+
p^2 \exp\{ - c n_2 / (s_0^{(1)})^2 \}
+
(p n_2)^{-1}
\\
&+
\frac{(s_0^{(1)})^2 \log^{3/2}(p n_2)}{\sqrt{n_2}}
+
s_0^{(1)}
\sqrt{\frac{l}{n_2}}
\log^2(p n_2).
\end{aligned}
\]
Theorem~\ref{thm:block.bootstrap.direct}, applied to the second sample with fixed null graph $E^{(1)}$, implies
\[
\P
(
D_{n,l}^{\circ}
>
C R_{n_2,l}^{(1)}(\epsilon)^{1-\epsilon}
)
\le
C R_{n_2,l}^{(1)}(\epsilon)^\epsilon.
\]
Therefore,
\[
\begin{aligned}
&\P
(
\mathcal D_{n,l}^{1\rightarrow 2}
>
C \pi_{n_1}
+
C R_{n_2,l}^{(1)}(\epsilon)^{1-\epsilon}
)
\\
&\qquad
\le
\P(\mathcal A_{n_1}^c)
+
\P
(
D_{n,l}^{\circ}
>
C R_{n_2,l}^{(1)}(\epsilon)^{1-\epsilon}
)
\\
&\qquad
\le
\pi_{n_1}
+
C R_{n_2,l}^{(1)}(\epsilon)^\epsilon.
\end{aligned}
\]
By the definition of $\mathfrak R_{n,l}^{1\rightarrow 2}(\epsilon)$ and by taking $C_\pi$ sufficiently large, the last display implies
\[
\P
(
\mathcal D_{n,l}^{1\rightarrow 2}
>
C \mathfrak R_{n,l}^{1\rightarrow 2}(\epsilon)^{1-\epsilon}
)
\le
C \mathfrak R_{n,l}^{1\rightarrow 2}(\epsilon)^\epsilon
\]
whenever $\mathfrak R_{n,l}^{1\rightarrow 2}(\epsilon) \le 1$.
This proves the bootstrap bound.

It remains to prove the exact asymptotic level.
Let
\[
F_n^{1\rightarrow 2}(u)
=
\P_{H_0^{1\rightarrow 2}}( \hat T^{1\rightarrow 2} \le u ),
\qquad
q_n^{1\rightarrow 2}(\gamma)
=
\inf
\{ u \in \R : F_n^{1\rightarrow 2}(u) \ge \gamma \}.
\]
The bootstrap bound implies
\[
\sup_{u \in \R}
|
\hat F_l^{1\rightarrow 2}(u)
-
F_n^{1\rightarrow 2}(u)
|
\to 0
\]
in probability.
Fix $\eta > 0$.
With probability tending to one,
\[
q_n^{1\rightarrow 2}(1-\alpha-\eta)
\le
\hat q_{1-\alpha}^{1\rightarrow 2}
\le
q_n^{1\rightarrow 2}(1-\alpha+\eta).
\]
Hence
\[
\alpha - \eta + o(1)
\le
\P_{H_0^{1\rightarrow 2}}
(
\hat T^{1\rightarrow 2} > \hat q_{1-\alpha}^{1\rightarrow 2}
)
\le
\alpha + \eta + o(1).
\]
The Gaussian approximation already proved shows that $F_n^{1\rightarrow 2}$ converges uniformly to the distribution function of $|G^{1\rightarrow 2}|_\infty$.
Since this limiting distribution is continuous at its $(1-\alpha)$-quantile, we may let $\eta \downarrow 0$.
This gives
\[
\P_{H_0^{1\rightarrow 2}}
(
\hat T^{1\rightarrow 2} > \hat q_{1-\alpha}^{1\rightarrow 2}
)
\to
\alpha.
\]
This finishes the proof.
\end{proof}

\subsection{Proof of Lemma~\ref{lem:tau.uniform.consistency}}
\label{app:proof.tau.uniform.consistency}

We prove the result by comparing the plug-in block estimator with an oracle block estimator for the local-inverse score process.
Let
\[
d_\omega
=
\sum_{j=1}^p |S_{0,j}|
\le
ps_0
\]
be the number of local-inverse coordinates.
Define the sample mean $\bar{\mathcal X}^\omega_{ij} = n^{-1} \sum_{t=1}^n \mathcal X^\omega_{t,ij}$
and the oracle centered block estimator as an intermediate quantity,
\[
\bar L^\omega_{ij}
=
\frac{1}{n-l+1}
\sum_{r=l}^n
\left(
l^{-1/2}
\sum_{t=r-l+1}^r
(
\mathcal X^\omega_{t,ij}-\bar{\mathcal X}^\omega_{ij}
)
\right)^2.
\]
The first step of the proof is to show that $\max_{1\le j\le p}\max_{i\in S_{0,j}} |\bar L^\omega_{ij}-L^\omega_{ij}|=o_\P(1)$.
Too see this, for $i\in S_{0,j}$ and $i'\in S_{0,j'}$, write $\gamma^\omega_{ij,i'j'}(h) = \cov(\mathcal X^\omega_{0,ij},\mathcal X^\omega_{h,i'j'})$.
By Isserlis' formula,
\[
\gamma^\omega_{ij,i'j'}(h)
=
(v_{0,ij}^{\top}\Gamma_h v_{0,i'j'})
(\omega_{0,j}^{\top}\Gamma_h\omega_{0,j'})
+
(v_{0,ij}^{\top}\Gamma_h\omega_{0,j'})
(\omega_{0,j}^{\top}\Gamma_h v_{0,i'j'}).
\]
Under Conditions \ref{cond:Sigma.Omega.1norm} and \ref{cond:local.inverse}, the vectors $v_{0,ij}$ and $\omega_{0,j}$ have uniformly bounded $\ell_1$-norms.
Combining this fact with Condition \ref{cond.A} gives
\[
\sup_{1\le j,j'\le p}
\sup_{i\in S_{0,j},i'\in S_{0,j'}}
|\gamma^\omega_{ij,i'j'}(h)|
\le
C(1 \vee |h|)^{-2[\beta \wedge (2\beta-1)]} \log^2|h|.
\]
Since $\beta>3/4$, the summation of the right-hand side over $h\in\mathbb{N}$ is finite, and thus the long-run variance $L^\omega_{ij}$ is well defined.
Let $Z^0_{r,ij} = l^{-1/2} \sum_{t=r-l+1}^r \mathcal X^\omega_{t,ij}$.
By stationarity,
\[
\E[(Z^0_{r,ij})^2]
=
\sum_{|h|<l}
\left(1-\frac{|h|}{l}\right)
\gamma^\omega_{ij,ij}(h).
\]
By simple calculation, the deterministic bias of the block variance satisfies
\[
\max_{1\le j\le p}\max_{i\in S_{0,j}}
\left|
\E[(Z^0_{r,ij})^2]-L^\omega_{ij}
\right|
\le
C l^{-[(4\beta-3) \wedge 1]}\log^3(l).
\]
It remains to control the stochastic fluctuation of the overlapping block average.
The variables $(Z^0_{r,ij})^2-\E[(Z^0_{r,ij})^2] := W_{r,ij}$ are centered Gaussian chaos of order at most four.
By hypercontractivity for Gaussian chaoses, for every $q\ge2$,
\[
\left\|
    \frac{1}{n-l+1}
    \sum_{r=l}^n W_{r,ij}
\right\|_q
\le
Cq^2
\left\|
    \frac{1}{n-l+1}
    \sum_{r=l}^n W_{r,ij}
\right\|_2 .
\]
It remains to bound the right-hand side.
By Isserlis' formula and the covariance decay bound derived above, the covariance sequence of the block-chaos variables has total mass of order $l$, uniformly over $i,j$.
More precisely,
\[
\sup_{1\le j\le p}\sup_{i\in S_{0,j}}
\sum_{s=l}^n
|\cov(W_{r,ij},W_{s,ij})|
\le
Cl,
\qquad l\le r\le n.
\]
Indeed, there are $O(l)$ indices $s$ for which the two blocks overlap or are within distance $l$, and for these indices the covariance is bounded by a constant.
For the remaining indices, every cross-block covariance factor contains the separation between the two blocks, and Isserlis' formula reduces the bound to the summable autocovariance envelope of the local-inverse score process.
Consequently,
\[
\var\left(
\frac{1}{n-l+1}
\sum_{r=l}^n W_{r,ij}
\right)
\le
\frac{1}{(n-l+1)^2}
\sum_{r,s=l}^n
|\cov(W_{r,ij},W_{s,ij})| \le
C\frac ln,
\]
and therefore
\[
\max_{1\le j\le p}\max_{i\in S_{0,j}}
\left\|
    \frac{1}{n-l+1}
    \sum_{r=l}^n W_{r,ij}
\right\|_q
\le
Cq^2\sqrt{\frac{l}{n}}.
\]
Taking $q=2\log(ps_0n)$, using $s_0\le p$, and applying Markov's inequality gives
\[
\max_{1\le j\le p}\max_{i\in S_{0,j}}
\left|
    \frac{1}{n-l+1}
    \sum_{r=l}^n
    \left\{
        (Z^0_{r,ij})^2-\E[(Z^0_{r,ij})^2]
    \right\}
\right|
=
O_\P\left(
    \sqrt{\frac{l}{n}}\log^2(pn)
\right).
\]

We now account for the sample centering by $\bar{\mathcal X}^\omega_{n,ij}=n^{-1}\sum_{t=1}^n\mathcal X^\omega_{t,ij}$.
The Gaussian approximation argument used for the local-inverse score vector gives
\[
\mathfrak A_n := \max_{1\le j\le p}\max_{i\in S_{0,j}}
|\bar{\mathcal X}^\omega_{n,ij}|
=
O_\P
\left(
\sqrt{\frac{\log(pn)}{n}}
\right).
\]
For each $r$, we have
\[
l^{-1/2}
\sum_{t=r-l+1}^r
(\mathcal X^\omega_{t,ij}-\bar{\mathcal X}^\omega_{n,ij})
=
Z^0_{r,ij}
-
\sqrt l\,\bar{\mathcal X}^\omega_{n,ij}.
\]
Therefore,
\[
\begin{aligned}
&\frac{1}{n-l+1}
\sum_{r=l}^n
\left[
\left(
l^{-1/2}
\sum_{t=r-l+1}^r
(\mathcal X^\omega_{t,ij}-\bar{\mathcal X}^\omega_{n,ij})
\right)^2
-
(Z^0_{r,ij})^2
\right] \\
&\quad =
-2\sqrt l\,\bar{\mathcal X}^\omega_{n,ij}
\left(
\frac{1}{n-l+1}
\sum_{r=l}^n Z^0_{r,ij}
\right)
+
l(\bar{\mathcal X}^\omega_{n,ij})^2 .
\end{aligned}
\]
By Cauchy--Schwarz,
\[
\left|
\frac{1}{n-l+1}
\sum_{r=l}^n Z^0_{r,ij}
\right|
\le
\left(
\frac{1}{n-l+1}
\sum_{r=l}^n (Z^0_{r,ij})^2
\right)^{1/2}.
\]
The bias and stochastic fluctuation bounds above, together with the uniform upper bound in Condition~\ref{cond:long-run.nondegeneracy}, imply
\[
\mathfrak B_n
:=
\max_{1\le j\le p}\max_{i\in S_{0,j}}
\left(
\frac{1}{n-l+1}
\sum_{r=l}^n
(Z^0_{r,ij})^2
\right)^{1/2}
=
O_\P(1).
\]
Consequently,
\[
\begin{aligned}
&\max_{1\le j\le p}\max_{i\in S_{0,j}}
\left|
\frac{1}{n-l+1}
\sum_{r=l}^n
\left[
\left(
l^{-1/2}
\sum_{t=r-l+1}^r
(\mathcal X^\omega_{t,ij}-\bar{\mathcal X}^\omega_{n,ij})
\right)^2
-
(Z^0_{r,ij})^2
\right]
\right| \\
&\qquad
\le
2\sqrt l \mathfrak A_n \mathfrak B_n+l \mathfrak A_n^2  \\
&\qquad
=
O_\P\left(
\sqrt{\frac{l\log(pn)}{n}}
+
\frac{l\log(pn)}{n}
\right).
\end{aligned}
\]
Combining the bias, stochastic fluctuation, and sample-centering terms gives
\[
\max_{1\le j\le p}\max_{i\in S_{0,j}}
|\bar L^\omega_{ij}-L^\omega_{ij}|
=
O_\P\left(
l^{-[(4\beta-3)\wedge 1]}\log^3(l)
+
\sqrt{\frac{l}{n}}\log^2(pn)
\right)
=o_\P(1).
\]
This finishes the first step of the proof.
We now move on to the second step, i.e., comparing the plug-in block estimator $\hat L^\omega_{ij}$ with the oracle block estimator $\bar L^\omega_{ij}$.
Define
\[
Z_{r,ij}
=
l^{-1/2}
\sum_{t=r-l+1}^r
(\mathcal X^\omega_{t,ij}-\bar{\mathcal X}^\omega_{n,ij}),
\qquad
\hat Z_{r,ij}
=
l^{-1/2}
\sum_{t=r-l+1}^r
\hat{\mathcal X}^\omega_{t,ij}.
\]
Then
\[
\bar L^\omega_{ij}
=
\frac{1}{n-l+1}
\sum_{r=l}^n Z_{r,ij}^2,
\qquad
\hat L^\omega_{ij}
=
\frac{1}{n-l+1}
\sum_{r=l}^n \hat Z_{r,ij}^2.
\]

On the event $2C_As_0|\hat\Sigma_n-\Sigma|_\infty\le 1$, the same inverse perturbation argument as in Proposition \ref{prop:deterministic.reduction} gives
\[
\max_{1\le j\le p}|\hat\omega_{0,j}-\omega_{0,j}|_1
\le
Cs_0|\hat\Sigma_n-\Sigma|_\infty,
\]
and
\[
\max_{1\le j\le p}\max_{i\in S_{0,j}}
|\hat v_{0,ij}-v_{0,ij}|_1
\le
Cs_0|\hat\Sigma_n-\Sigma|_\infty.
\]
Moreover,
\[
\mathcal X^\omega_{t,ij}-\bar{\mathcal X}^\omega_{n,ij}
=
v_{0,ij}^{\top}(X_tX_t^{\top}-\hat\Sigma_n)\omega_{0,j}.
\]
Therefore,
\[
\hat{\mathcal X}^\omega_{t,ij}
-
(\mathcal X^\omega_{t,ij}-\bar{\mathcal X}^\omega_{n,ij})
=
(\hat v_{0,ij}-v_{0,ij})^{\top}
(X_tX_t^{\top}-\hat\Sigma_n)\hat\omega_{0,j}
+
v_{0,ij}^{\top}
(X_tX_t^{\top}-\hat\Sigma_n)
(\hat\omega_{0,j}-\omega_{0,j}).
\]
Using the uniform $\ell_1$-bounds on $v_{0,ij}$, $\omega_{0,j}$, $\hat v_{0,ij}$, and $\hat\omega_{0,j}$, together with
\[
|X_tX_t^{\top}-\hat\Sigma_n|_\infty
\le
C(M_n^2+1),
\]
we obtain
\[
\max_{1\le t\le n}
\max_{1\le j\le p}
\max_{i\in S_{0,j}}
\left|
\hat{\mathcal X}^\omega_{t,ij}
-
(\mathcal X^\omega_{t,ij}-\bar{\mathcal X}^\omega_{n,ij})
\right|
\le
Cs_0|\hat\Sigma_n-\Sigma|_\infty(M_n^2+1).
\]
Consequently,
\[
\max_{l\le r\le n}
\max_{1\le j\le p}
\max_{i\in S_{0,j}}
|\hat Z_{r,ij}-Z_{r,ij}|
\le
C\sqrt l s_0|\hat\Sigma_n-\Sigma|_\infty(M_n^2+1).
\]
By the Gaussian approximation bound in Theorem 1 of \citet{zhai2026simultaneous} and the Gaussian maximal tail bound,
\[
|\hat\Sigma_n-\Sigma|_\infty
=
O_\P
\left(
\sqrt{\frac{\log(pn)}{n}}
\right),
\qquad
M_n^2+1
=
O_\P(\log(pn)).
\]
Hence
\[
\max_{l\le r\le n}
\max_{1\le j\le p}
\max_{i\in S_{0,j}}
|\hat Z_{r,ij}-Z_{r,ij}|
=
O_\P
\left(
s_0\sqrt{\frac{l}{n}}\log^2(pn)
\right)
=o_\P(1),
\]
where the last step follows from $\mathfrak R_{n,l}(\epsilon)\to0$.

Since the oracle bound above and Condition \ref{cond:long-run.nondegeneracy} imply
\[
\max_{1\le j\le p}\max_{i\in S_{0,j}}
\frac{1}{n-l+1}
\sum_{r=l}^n
Z_{r,ij}^2
=
O_\P(1),
\]
we have
\[
\begin{aligned}
&\max_{1\le j\le p}\max_{i\in S_{0,j}}
|\hat L^\omega_{ij}-\bar L^\omega_{ij}| \\
&\quad\le
2
\max_{1\le j\le p}\max_{i\in S_{0,j}}
\left(
\frac{1}{n-l+1}
\sum_{r=l}^n
Z_{r,ij}^2
\right)^{1/2}
\max_{1\le j\le p}\max_{i\in S_{0,j}}
\left(
\frac{1}{n-l+1}
\sum_{r=l}^n
(\hat Z_{r,ij}-Z_{r,ij})^2
\right)^{1/2} \\
&\qquad+
\max_{1\le j\le p}\max_{i\in S_{0,j}}
\frac{1}{n-l+1}
\sum_{r=l}^n
(\hat Z_{r,ij}-Z_{r,ij})^2  \\
&\quad=o_\P(1).
\end{aligned}
\]
This finishes the second step of the proof.
Combining this result with the first step's result gives
\[
\max_{1\le j\le p}\max_{i\in S_{0,j}}
|\hat L^\omega_{ij}-L^\omega_{ij}|
=
o_\P(1).
\]
Finally, Condition \ref{cond:long-run.nondegeneracy} gives $0<c_\omega\le L^\omega_{ij}\le C_\omega<\infty$ uniformly over $1\le j\le p$ and $i\in S_{0,j}$.
Since the square-root map is Lipschitz on compact subsets of $(0,\infty)$,
\[
\max_{1\le j\le p}\max_{i\in S_{0,j}}
\left|
\frac{\hat\tau_{ij}}{\tau_{ij}}-1
\right|
=
\max_{1\le j\le p}\max_{i\in S_{0,j}}
\left|
\sqrt{\frac{\hat L^\omega_{ij}}{L^\omega_{ij}}}-1
\right|
=
o_\P(1).
\]
This finishes the proof of Lemma~\ref{lem:tau.uniform.consistency}.

\subsection{Proof of Lemma~\ref{lem:local.inverse.screening}}
\label{app:proof.local.inverse.screening}

We prove the two screening claims separately.
The key point is that a nonzero graph-compatible coordinate is much larger than its long-run standard error, while a zero coordinate remains on the stochastic standard-error scale.


By Condition \ref{cond:long-run.nondegeneracy}, $\tau_{ij}\asymp n^{-1/2}$ uniformly over $1\le j\le p$ and $i\in S_{0,j}$.
The same Gaussian approximation argument used in Theorem~\ref{thm:ga.direct}, applied to the local-inverse score vector $(\mathcal X^\omega_{t,ij})_{i\in S_{0,j},1\le j\le p}$, gives
\[
\max_{1\le j\le p}
\max_{i\in S_{0,j}}
\left|
\frac{\hat\omega_{0,ij}-\omega_{0,ij}}{\tau_{ij}}
\right|
=
O_{\P}(\sqrt{\log p}).
\]
Together with Lemma~\ref{lem:tau.uniform.consistency}, the same bound holds after replacing $\tau_{ij}$ by $\hat\tau_{ij}$.

We first work under $H_0:E=E_0$.
By the fixed-signal condition, there exists a constant $c_\ast>0$ such that $\min_{1\le j\le p} \min_{i\in S_{0,j}} |\omega_{0,ij}| \ge c_\ast$.
Since $\tau_{ij}\asymp n^{-1/2}$, this implies
\[
\min_{1\le j\le p}
\min_{i\in S_{0,j}}
\frac{|\omega_{0,ij}|}{\tau_{ij}}
\asymp
\sqrt n.
\]
Under $\mathfrak R_{n,l}(\epsilon)\to0$, we have $\log p=o(n)$.
Hence $\sqrt{\log p}+\sqrt{\log n} = o(\sqrt n)$.
It follows that
\[
\min_{1\le j\le p}
\min_{i\in S_{0,j}}
\frac{|\hat\omega_{0,ij}|}{\hat\tau_{ij}}
>
\sqrt{\log n}
\]
with probability tending to one.
That si equivalent to $\hat{\bm \Delta}_j=0$ for every $1\le j\le p$ with probability tending to one.

Now consider a fixed included alternative $E\subsetneq E_0$.
Because the true graph is contained in the proposed graph, the restricted population inverse $\bm \omega_{0,j}$ coincides with the corresponding true precision column.
Because the inclusion is strict, there exist $j_0$ and $i_0\in S_{0,j_0}$ such that $\omega_{0,i_0j_0}=0$.
For this fixed coordinate, the local-inverse expansion and the variance condition (Condition \ref{cond:long-run.nondegeneracy}) give $\hat\omega_{0,i_0j_0} / \tau_{i_0j_0} = O_{\P}(1)$, and furthermore by Lemma~\ref{lem:tau.uniform.consistency},
\[
\frac{\hat\omega_{0,i_0j_0}}{\hat\tau_{i_0j_0}}
=
O_{\P}(1).
\]
Since $\sqrt{\log n}\to\infty$,
\[
\P
\left(
\frac{|\hat\omega_{0,i_0j_0}|}{\hat\tau_{i_0j_0}}
\le
\sqrt{\log n}
\right)
\to1.
\]
Thus $\hat\Delta_{i_0j_0}=C_\Delta\sqrt{\log p}$ with probability tending to one.
Equivalently, $\hat{\bm \Delta}_{j_0}\ne0$ with probability tending to one.
This finishes the proof of Lemma~\ref{lem:local.inverse.screening}.

\subsection{Proof of Theorem~\ref{thm:ce.direct.formal}}
\label{app:proof.ce.direct.formal}

We first focus on the null behavior.
The first assertion of the theorem is directly implied by Lemma~\ref{lem:local.inverse.screening}, where we have
\[
\P_{H_0}
\left(
\hat{\bm \Delta}_j=0
\text{ for all }
1\le j\le p
\right)
\to1.
\]
On this event, $\tilde{\bm \omega}_{0,j}=\hat{\bm \omega}_{0,j}$ for every $j$.
For every $i\in S_{0,j}$, the sample normal equations give $e_i^{\top}\hat\Sigma_n\hat{\bm \omega}_{0,j} - \one{i=j} = 0$.
Therefore, the maximum in $\tilde T_n$ over all $1\le i,j\le p$ is equivalent to the maximum over $(i,j)\in\mathcal J_0$.
Thus $\tilde T_n=\hat T_n$ with probability tending to one under $H_0$.
Consequently, the Gaussian approximation and block-bootstrap validity for $\tilde T_n$ now follow directly from Theorems~\ref{thm:ga.direct} and~\ref{thm:block.bootstrap.direct}.
Indeed,
\[
\sup_{u\in\R}
\left|
\P_{H_0}(\tilde T_n\le u)
-
\P(|G|_\infty\le u)
\right|
\le
\sup_{u\in\R}
\left|
\P_{H_0}(\hat T_n\le u)
-
\P(|G|_\infty\le u)
\right|
+
\P_{H_0}(\tilde T_n\ne\hat T_n),
\]
and the right-hand side converges to zero.
Similarly,
\[
\sup_{u\in\R}
\left|
\hat F_{n,l}(u)
-
\P_{H_0}(\tilde T_n\le u)
\right|
\le
\sup_{u\in\R}
\left|
\hat F_{n,l}(u)
-
\P_{H_0}(\hat T_n\le u)
\right|
+
\P_{H_0}(\tilde T_n\ne\hat T_n),
\]
which converges to zero in probability.
The asymptotic size statement follows from the same quantile-bracketing argument used for the direct statistic.

It remains to prove the power guarantee.
Recall the simplified notation that we defined earlier, $A_j=B_{0,j}^{\top}\Sigma B_{0,j}$ and $\hat A_j=B_{0,j}^{\top}\hat\Sigma_n B_{0,j}$.
Consider the event $\mathcal G_n = \{ 2C_A s_0|\hat\Sigma_n-\Sigma|_\infty\le1 \}$.
By the covariance concentration bound used in Theorems~\ref{thm:ga.direct} and~\ref{thm:block.bootstrap.direct}, we have $\P(\mathcal G_n)\to1$.
On this event,
\[
|A_j^{-1}(\hat A_j-A_j)|_1
\le
|A_j^{-1}|_1|\hat A_j-A_j|_1
\le
C_A s_0|\hat\Sigma_n-\Sigma|_\infty
\le
\frac{1}{2}.
\]
Hence $\hat A_j$ is invertible, and the Neumann-series bound gives
\[
|\hat A_j^{-1}|_1
\le
\frac{|A_j^{-1}|_1}{1-|A_j^{-1}(\hat A_j-A_j)|_1}
\le
2C_A.
\]
Since $\hat A_j$ is symmetric positive definite on $\mathcal G_n$,
\[
\lambda_{\min}(\hat A_j)
=
|\hat A_j^{-1}|_{\mathrm{op}}^{-1}
\ge
|\hat A_j^{-1}|_1^{-1}
\ge
\frac{1}{2C_A}.
\]
Suppose that $\mathcal G_n$ holds and that $\hat{\bm\Delta}_j\ne0$ for some $j$.
Let $x_j=B_{0,j}^{\top}\hat{\bm\Delta}_j$.
Then $x_j\ne0$, and every nonzero coordinate of $x_j$ is equal to $C_n$.
By the identity $B_{0,j}^{\top} ( \hat\Sigma_n\hat{\bm\omega}_{0,j}-e_j ) = 0$, we have
\[
B_{0,j}^{\top}
(
\hat\Sigma_n\tilde{\bm\omega}_{0,j}-e_j
)
=
B_{0,j}^{\top}\hat\Sigma_n\hat{\bm\Delta}_j
=
\hat A_jx_j.
\]
It follows that
\[
\begin{aligned}
\max_{1\le i\le p}
\left|
e_i^{\top}\hat\Sigma_n\tilde{\bm\omega}_{0,j}
-
\one{i=j}
\right|
&\ge
|\hat A_jx_j|_\infty
\\
&\ge
\frac{x_j^{\top}\hat A_jx_j}{|x_j|_1}
\\
&\ge
\lambda_{\min}(\hat A_j)
\frac{|x_j|_2^2}{|x_j|_1}
\\
&\ge
\frac{C_n}{2C_A}.
\end{aligned}
\]
Consequently, on the event $\mathcal G_n$,
\begin{equation}\label{eq:ce.perturb.lower.formal}
\hat{\bm\Delta}_j\ne0
\text{ for some }j
\quad
\Longrightarrow
\quad
\tilde T_n
\ge
\frac{\sqrt n C_n}{2C_A}.
\end{equation}

Now we are ready to apply Lemma~\ref{lem:local.inverse.screening} to prove the power assertion.
First consider a fixed included alternative $E\subsetneq E_0$.
By Lemma~\ref{lem:local.inverse.screening}, with probability tending to one, there exists $j$ such that $\hat{\bm \Delta}_j\ne0$.
Combining this event with \eqref{eq:ce.perturb.lower.formal} gives
\[
\tilde T_n
\ge
\frac{\sqrt n C_n}{2C_A}
=
\frac{\sqrt n C_\Delta\sqrt{\log p}}{2C_A}
\]
with probability tending to one.
Next, we claim that $\hat q_{1-\alpha}=O_{\P}(\sqrt{\log (pn)})$.
To see this, let $\hat Z_r = l^{-1/2} \left| \sum_{t=r-l+1}^r \hat{\mathcal X}_t \right|_\infty$ for $l\le r\le n$.
Then $\hat q_{1-\alpha}$ is the empirical $(1-\alpha)$-quantile of $\{\hat Z_r:l\le r\le n\}$.
Let $Z_r^{\mathrm L} = l^{-1/2} \left| \sum_{t=r-l+1}^r (\mathcal X_t-\bar{\mathcal X}_n) \right|_\infty$.
By Proposition~\ref{prop:score.perturb} and the covariance maximal bound,
\[
\max_{l\le r\le n}
|\hat Z_r-Z_r^{\mathrm L}|
\le
C\sqrt l s_0|\hat\Sigma_n-\Sigma|_\infty(M_n^2+1)
=
o_\P(1).
\]
The oracle block statistic $Z_r^{\mathrm L}$ is a maximum of selected covariance-score block sums of the same derived Gaussian linear process used in Theorem~\ref{thm:block.bootstrap.direct}.
The Gaussian maximal tail bound for these block sums, uniformly over $r$, gives
\[
\P(Z_r^{\mathrm L}>K\sqrt{\log(pn)})\to0
\]
for a sufficiently large constant $K$.
By Markov's inequality,
\[
\P\left(
\frac{1}{n-l+1}
\sum_{r=l}^n
\one{Z_r^{\mathrm L}>K\sqrt{\log(pn)}}
>\alpha
\right)
\le
\frac{1}{\alpha(n-l+1)}
\sum_{r=l}^n
\P\left(
Z_r^{\mathrm L}>K\sqrt{\log(pn)}
\right).
\]
The same selected-coordinate block-tail bound applies uniformly in $r$, so the right-hand side converges to zero.
Therefore we arrive at
\[
\hat q_{1-\alpha}
=
O_\P(\sqrt{\log(pn)}).
\]
Therefore
\[
\frac{
\sqrt n C_\Delta\sqrt{\log p}/(2C_A)
}{
\hat q_{1-\alpha}
}
\to\infty
\]
in probability for every fixed $C_\Delta>0$.
Therefore
\[
\P_{H_1}
(
\tilde T_n>\hat q_{1-\alpha}
)
\to1
\]
for every fixed included alternative $E\subsetneq E_0$.

There is, however, another case for $E\ne E_0$ not covered by $E\subsetneq E_0$.
That is, a fixed alternative not included in the null graph, namely $E\not\subseteq E_0$.
In this case, we cannot directly apply the second assertion of Lemma \ref{lem:local.inverse.screening}.
Instead, we may control the power in the following way.
Define
\[
d_0
=
\max_{(i,j)\in\mathcal J_0}
|e_i^{\top}\Sigma\bm \omega_{0,j}|.
\]
We first show that $d_0>0$.
Suppose otherwise that $d_0=0$.
Then $e_i^{\top}\Sigma\bm \omega_{0,j}=0$ for every $i\notin S_{0,j}$ and every $1\le j\le p$.
Meanwhile, by the definition of the population restricted inverse,
\[
B_{0,j}^{\top}\Sigma\bm \omega_{0,j}
=
B_{0,j}^{\top}e_j.
\]
Combining the inside-support equation with the off-support equation gives
\[
\Sigma\bm \omega_{0,j}=e_j.
\]
Since $\Sigma$ is invertible, $\bm \omega_{0,j}$ must be the $j$-th column of the true precision matrix $\Omega$.
However, $\bm \omega_{0,j}$ is supported on $S_{0,j}$ by construction.
Thus the $j$-th column of $\Omega$ is supported on $S_{0,j}$ for every $j$, which implies $E\subseteq E_0$.
This contradicts $E\not\subseteq E_0$.
Therefore $d_0>0$.

Next, recall the notation that $\Delta=\hat\Sigma_n-\Sigma$ and $b_j=B_{0,j}^{\top}e_j$ for each $j$.
Since $\bm \omega_{0,j}=B_{0,j}A_j^{-1}b_j$ and $\hat{\bm \omega}_{0,j}=B_{0,j}\hat A_j^{-1}b_j$, we obtain $|\bm \omega_{0,j}|_1\le C_A$ and $|\hat{\bm \omega}_{0,j}-\bm \omega_{0,j}|_1 \le 2C_A^2s_0|\Delta|_\infty$.
Therefore, uniformly over $(i,j)\in\mathcal J_0$,
\[
\begin{aligned}
\left|
e_i^{\top}\hat\Sigma_n\hat{\bm \omega}_{0,j}
-
e_i^{\top}\Sigma\bm \omega_{0,j}
\right|
&\le
\left|
e_i^{\top}\Delta\bm \omega_{0,j}
\right|
+
\left|
e_i^{\top}\hat\Sigma_n
(
\hat{\bm \omega}_{0,j}-\bm \omega_{0,j}
)
\right|
\\
&\le
C_A|\Delta|_\infty
+
\left(
|e_i^{\top}\Sigma B_{0,j}|_1
+
|e_i^{\top}\Delta B_{0,j}|_1
\right)
2C_A^2s_0|\Delta|_\infty
\\
&\le
C_A|\Delta|_\infty
+
\left(
C_\Sigma+s_0|\Delta|_\infty
\right)
2C_A^2s_0|\Delta|_\infty
\\
&\le
C s_0|\Delta|_\infty,
\end{aligned}
\]
where $C>0$ depends only on $C_A$ and $C_\Sigma$.
Since $\P_{H_1}(\mathcal G_n)\to1$ and $s_0|\hat\Sigma_n-\Sigma|_\infty=o_{\P_{H_1}}(1)$, it follows that
\[
\max_{(i,j)\in\mathcal J_0}
\left|
e_i^{\top}\hat\Sigma_n\hat{\bm \omega}_{0,j}
-
e_i^{\top}\Sigma\bm \omega_{0,j}
\right|
=
o_{\P_{H_1}}(1).
\]
Consequently, with the event
$\mathcal D_n = \{ \max_{(i,j)\in\mathcal J_0} | e_i^{\top}\hat\Sigma_n\hat{\bm \omega}_{0,j} - e_i^{\top}\Sigma\bm \omega_{0,j} | \le d_0/2 \}$,
we have $\P_{H_1}(\mathcal D_n)\to1$.
Also recall the aforementioned event $\mathcal G_n$, and that $\P_{H_1}(\mathcal G_n^c)\to0$.
Therefore, it suffices work on the lower bound of $\tilde T_n$ on the high-probability event $\mathcal D_n \cap \mathcal G_n$, which has probability tending to one.
We further split this event by $\mathcal P_n = \{ \hat{\bm\Delta}_j\ne0 \text{ for some }1\le j\le p \}$.

On the event $\mathcal P_n$, which covers $\mathcal P_n \cap \mathcal D_n \cap \mathcal G_n$, we have already established that
\[
\tilde T_n \ge \frac{\sqrt n C_\Delta \sqrt{\log p}}{2C_A}.
\]
Meanwhile, on the event $\mathcal P_n^c\cap\mathcal D_n$, which covers $\mathcal P_n^c \cap \mathcal D_n \cap \mathcal G_n$, we have $\tilde{\bm \omega}_{0,j}=\hat{\bm \omega}_{0,j}$ for every $j$.
Since $\mathcal J_0$ contains only off-support coordinates, we have $\one{i=j}=0$ for every $(i,j)\in\mathcal J_0$.
Thus, on $\mathcal P_n^c \cap \mathcal D_n \cap \mathcal G_n$,
\[
\begin{aligned}
\tilde T_n
&\ge
\sqrt n
\max_{(i,j)\in\mathcal J_0}
|e_i^{\top}\hat\Sigma_n\hat{\bm \omega}_{0,j}|
\\
&\ge
\sqrt n
\left[
\max_{(i,j)\in\mathcal J_0}
|e_i^{\top}\Sigma\bm \omega_{0,j}|
-
\max_{(i,j)\in\mathcal J_0}
\left|
e_i^{\top}\hat\Sigma_n\hat{\bm \omega}_{0,j}
-
e_i^{\top}\Sigma\bm \omega_{0,j}
\right|
\right]
\\
&\ge
\frac{\sqrt n d_0}{2}.
\end{aligned}
\]
Combining the two sub-events, we obtain that on $\mathcal G_n\cap\mathcal D_n$,
\[
\tilde T_n
\ge
\min
\left\{
\frac{\sqrt n C_\Delta\sqrt{\log p}}{2C_A},
\frac{\sqrt n d_0}{2}.
\right\}
\]
Therefore,
\[
\begin{aligned}
\P_{H_1}(\tilde T_n\le \hat q_{1-\alpha})
&\le
\P_{H_1}(\mathcal G_n^c)
+
\P_{H_1}(\mathcal D_n^c)
\\
&\quad+
\P_{H_1}
\left[
\hat q_{1-\alpha}
\ge
\min
\left\{
\frac{\sqrt n C_n}{2C_A},
\frac{\sqrt n d_0}{2}
\right\}
\right].
\end{aligned}
\]
The first two terms converge to zero.
For the last term, the bootstrap critical value satisfies $\hat q_{1-\alpha}=O_{\P_{H_1}}(\sqrt{\log (pn)})$ under the preceding conditions.
Hence the third term also converges to zero
Therefore, we have
\[
\P_{H_1}
(
\tilde T_n>\hat q_{1-\alpha}
)
\to1
\]
for every fixed alternative satisfying $E\not\subseteq E_0$.
Together with the included alternative case $E\subsetneq E_0$ handled above, this proves the power guarantee for every fixed alternative $E\ne E_0$.
This finishes the proof of Theorem~\ref{thm:ce.direct.formal}.

\end{document}